\newtheorem{theorem}{Theorem}
\newtheorem{lemma}{Lemma}
\newtheorem{defn}{Definition}
\newcommand{\IR}{\mathbb{R}}
\newtheorem{observation}{Observation}
\newcommand{\remove}[1]{}
\newcommand{\probname}[1]{{\color{blue!80!black}{\textbf{{#1}}}}}
\newcommand{\blue}[1]{{\textcolor{blue}{#1}}}
\newmdenv[
  backgroundcolor=gray!15,
  topline=false,
  bottomline=false,
  skipabove=\topsep,
  skipbelow=\topsep,
  leftmargin=-5pt,
  rightmargin=-5pt,
  innertopmargin=3pt,
  innerbottommargin=3pt
]{siderules}
\title{Range Assignment of Base-Stations Maximizing Coverage Area without Interference\footnote{A preliminary version of the paper is to appear  in the proceedings of  29th Canadian Conference on Computational Geometry, 2017.}}
\author{Ankush Acharyya\thanks{ACM Unit, Indian Statistical Institute, Kolkata, India.
	Email: {\tt \{ankush\_r, nandysc\}@isical.ac.in}.} \and
		Minati De\thanks{Department of CSA, Indian Institute of Science, Bangalore, India.
		Email: {\tt minati@iisc.ac.in}.}  \thanks{Supported by DST-INSPIRE Faculty
		Grant (IFA-14-ENG-75).}\and Subhas C. Nandy\footnotemark[2]\and Bodhayan Roy
		\thanks{Department of CS, Masaryk University, Brno, Czech Republic. Email: {\tt  b.roy@fi.muni.cz}}}
\begin{document}
\thispagestyle{empty}
\maketitle

\begin{abstract}
We study  the problem of assigning non-overlapping geometric objects 
centered at a given set of points  such that the sum of area 
covered by them is maximized. If the points  are placed on a straight-line  and the objects are disks, 
then the problem is solvable in polynomial time. However, we show that the problem is 
 NP-hard  even for simplest  objects like disks or squares in ${\IR}^2$. Eppstein [CCCG, pages 260--265, 2016] 
proposed a polynomial time algorithm for maximizing the sum of radii (or 
perimeter) of non-overlapping balls or disks when the points are 
arbitrarily placed on a plane. We show that Eppstein's algorithm for 
maximizing sum of perimeter of the disks in ${\IR}^2$ gives a 
$2$-approximation solution for the sum of area maximization problem.
We propose a PTAS for our problem. These approximation results are 
extendible to higher dimensions. All these approximation results hold for 
the area maximization problem by regular convex  polygons with even number 
of edges centered at the given points.  
\end{abstract}

{\bf Keywords: } Quadratic programming, discrete packing, range assignment 
in wireless communication, NP-hardness, approximation algorithm, PTAS.

\vspace{-0.1in}
\section{Introduction}\label{intro}
Geometric packing problem is an important area of research in computational 
geometry, and it has wide applications in cartography, sensor network, wireless 
communication, to name a few. In the disk packing problem, the objective is 
to place maximum number of congruent disks (of a given radius) in a given 
region. Toth 1940 \cite{chang,Toth04} first gave a complete proof that hexagonal 
lattice packing produces the densest of all possible disk packings of both 
regular and irregular regions. Several variations of this problem are possible 
depending on various applications \cite{BentzCR13,Toth04}. 
\remove{
Another important 
variation in this research is as follows  \cite{ChanG14}: given a geometric 
range space $({\cal X}, {\cal S})$, 
\begin{description}
\item[pack-points problem:] find a maximum cardinality subset $Y \subseteq 
{\cal X}$ of points, no two of which are contained in a single region of 
${\cal S}$, and 
\item[pack-regions problem:] find a maximum-cardinality subfamily $C 
\subseteq {\cal S}$ of regions, no two of which intersect at a point in 
${\cal X}$.
\end{description}
Here the members of $\cal S$ may be geometric objects of some specific 
type (disk or rectangle or square of same or different sizes), and 
${\cal X}$ may be a continuous region or a set of points in a given 
region. The {\it pack-points problem} and {\it pack-regions problem} 
can be thought of as the “dual” problems associated to the set cover 
and hitting set problems for geometric range spaces. Here by dual, we 
mean the packing–covering duality or linear programming duality. The 
pack-regions problem sometimes referred to as the discrete 
independent set problem. Chan \cite{ChanG14} proved APX-hardness results 
for several instances of pack-points problem and pack-regions problem
when the members in $\cal S$ are axis-aligned rectangles, slabs in the 
plane and unit balls in ${\IR}^3$, and $\cal X$ consists of all points 
in the given region.}
In this paper, we will consider the following variation of the packing 
problem:


\begin{siderules}
\noindent\probname{Maximum area discrete packing (MADP):} Given a set of points $P$ = $\{p_1$, $p_2,
\ldots, p_n\}$ in ${\IR}^2$, compute the radii of a set of non-overlapping 
disks ${\cal C}=\{C_1, C_2,\ldots, C_n\}$, where $C_i$ is centered at 
$p_i \in P$, such that $\sum_{i=1}^n \text{area}(C_i)$ is maximum. 
\end{siderules}

The problem can be formulated as a quadratic programming problem as follows. Let 
$r_i$ be the radius of the disk $C_i$. Our objective is: 
\begin{siderules}
\begin{tabbing}
\= Subject to \= \kill
\> Maximize \> $\sum_{i=1}^n r_i^2$ \\
\> Subject to\> $r_i+r_j \leq \text{dist}(p_i,p_j)$, $\forall$ $p_i,p_j \in P$, $i \neq j$. 
\end{tabbing}
\end{siderules}
Here, $\text{dist}(p_i,p_j)$ denotes the Euclidean distance of $p_i$ and $p_j$.
The motivation of the problem stems from the range assignment problem in wireless networks.
Here the inputs are the base-stations. Each base-station is assigned with a range, and it covers
a circular area centered at that base-station with radius equal to its assigned range. The
objective is to maximize the area coverage by these base-stations without any interference.
In other words, the area covered by two different base-stations should not overlap. 
Surprisingly, to the best of our knowledge, there is no literature for the MADP problem.
A related problem, namely \blue{{\it maximum perimeter discrete packing} ({\bf MPDP})} 
problem, is studied recently by Eppstein \cite{eppstein}, where the objective is to 
compute the radii of the disks in $\cal C$ maximizing $\sum_{i=1}^n r_i$ 
subject to the same set of linear constraints. This is a linear programming 
problem for which polynomial time algorithm exists \cite{Papadimitriou}. In 
particular, here each constraint consists of only two variables, and such 
a linear programming problem can be solved in $O(mn^3\log m)$ time 
\cite{Megiddo83}, where $n$ and $m$ are number of variables and number of 
constraints respectively. In \cite{eppstein}, a graph-theoretic formulation 
of the MPDP problem is suggested. Let $G=(V,E)$ be a complete graph whose vertices $V$ correspond 
to the points in $P$; the weight of edge $(i,j) \in E$ ($i \neq j$) is
$\text{dist}(p_i,p_j)$, which corresponds to the constraint $r_i + r_j \leq 
\text{dist}(p_i,p_j)$. They computed the minimum weight cycle cover of $G$ in 
time $O(mn+n^2\log n)$ time. Since $m=O(n^2)$ in our case, the time complexity 
of this algorithm is $O(n^3)$. They further considered the fact that a constraint 
$r_i + r_j \leq \text{dist}(p_i,p_j)$ is useful if $\delta(p_i)+\delta(p_j) \geq 
\text{dist}(p_i,p_j)$, where $\delta(p)$ is the distance of the point $p$ and its
nearest neighbor in $P$; otherwise that constraint is redundant. They also  
showed that the number of useful constraints is $O(n)$, and thus the overall time 
complexity becomes $O(n^2\log n)$. They used further graph structure to reduce 
the time complexity. In ${\IR}^d$, the time complexity of this problem is 
shown to be $O(n^{2-\frac{1}{d}})$.

It is well-known that if $Q$ is a positive definite matrix, then the quadratic programming
problem which minimizes $\tilde{X}'Q\tilde{X}$ subject to a set of linear constraints 
$A\tilde{X} \leq \tilde{b}$, $\tilde{X} \geq 0$  is solvable in polynomial time \cite{Khachiyan}. However, if we present our
maximization problem as a minimization problem, the diagonal entries of the 
matrix $Q$ are all $-1$ and the off-diagonal entries are all zero. Thus, all 
the eigen values of the matrix $Q$ are $-1$. It is already proved that the 
quadratic programming problem is NP-hard when at least one of the eigen 
values of the matrix $Q$ is negative \cite{PardalosV91}. This indicates 
that the 
MADP problem also seems to be computationally hard. For the minimization version of an NP-hard quadratic 
programming with $n$ variables and $m$ constraints, an 
$(1-\frac{1-\epsilon}{(m(1+\epsilon))^2})$ factor approximation algorithm is proposed 
in \cite{FuLY98}, which works for all $\epsilon \in (0,1-\frac{1}{\sqrt{2}})$.
The time complexity of this algorithm  is $O(n^3(m\log \frac{1}{\delta}+\log\log\frac{1}{\epsilon}))$,
where $\delta$ is the radius of the largest ball inside 
the feasible region defined by the given  set of constraints.
For our MADP problem in $\IR^2$, a {\it 4-factor approximation algorithm} is easy to obtain. 
\begin{description}
\item[] For each 
point $p_i\in P$, let ${\cal N}(p_i) \in P$ be its nearest neighbor, and $\ell_i=\text{dist}(p_i,{\cal N}(p_i))$. We assign 
$r_i = \frac{1}{2}\ell_i$ for each $i \in\{1,2,\ldots, n\}$.
Thus, all the constraints are satisfied. The approximation factor follows from the
fact that in the optimum solution the radius $\rho_i$ of a disk centered at $p_i$ can take  value at most  $\ell_i$.
\end{description}

\subsection*{Our contribution}
In Section \ref{section2}, we first show that if the points in $P$ are placed on a straight line, then the MADP problem 
can be optimally solved in $O(n^2)$ time. In Section \ref{section3}, 
we show that MADP problem in ${\IR}^2$ is NP-hard. As a feasible solution of the MPDP problem is also a feasible solution
of the MADP problem, it is very natural to   ask whether an optimal solution of the MPDP problem is a good solution
for the MADP problem, or not. In Section \ref{section4}, we  answer this question in the affirmative. We  show that the
optimum solution for the MPDP problem proposed in \cite{eppstein} is a 2-approximation result for the MADP problem.
We also propose a PTAS for the MADP problem. In Section \ref{section6}, we show that the approximation results in
Sections \ref{section4} are extendible to higher dimensions. Finally, in Section \ref{section7} we show that all
these approximation results for the MADP problem in ${\IR}^2$ hold for any regular convex polygon with even number of edges.

\section{Preliminaries}
A solution for the MADP problem consists of disks with center at each point in $P$.
Their radii are all greater than or equal to zero\footnote{A disk with radius 0 implies that no disk is placed at that point.}.
A solution of the MADP problem is said to be \blue{{\it maximal}} if each disk touches 
some other disk (may be of radius 0) in the solution. From now onwards, by a \blue{{\it solution}} of a MADP problem, we will 
mean it to be a {\it maximal solution}.

\begin{figure}[h]
\centering\includegraphics[scale=0.45]{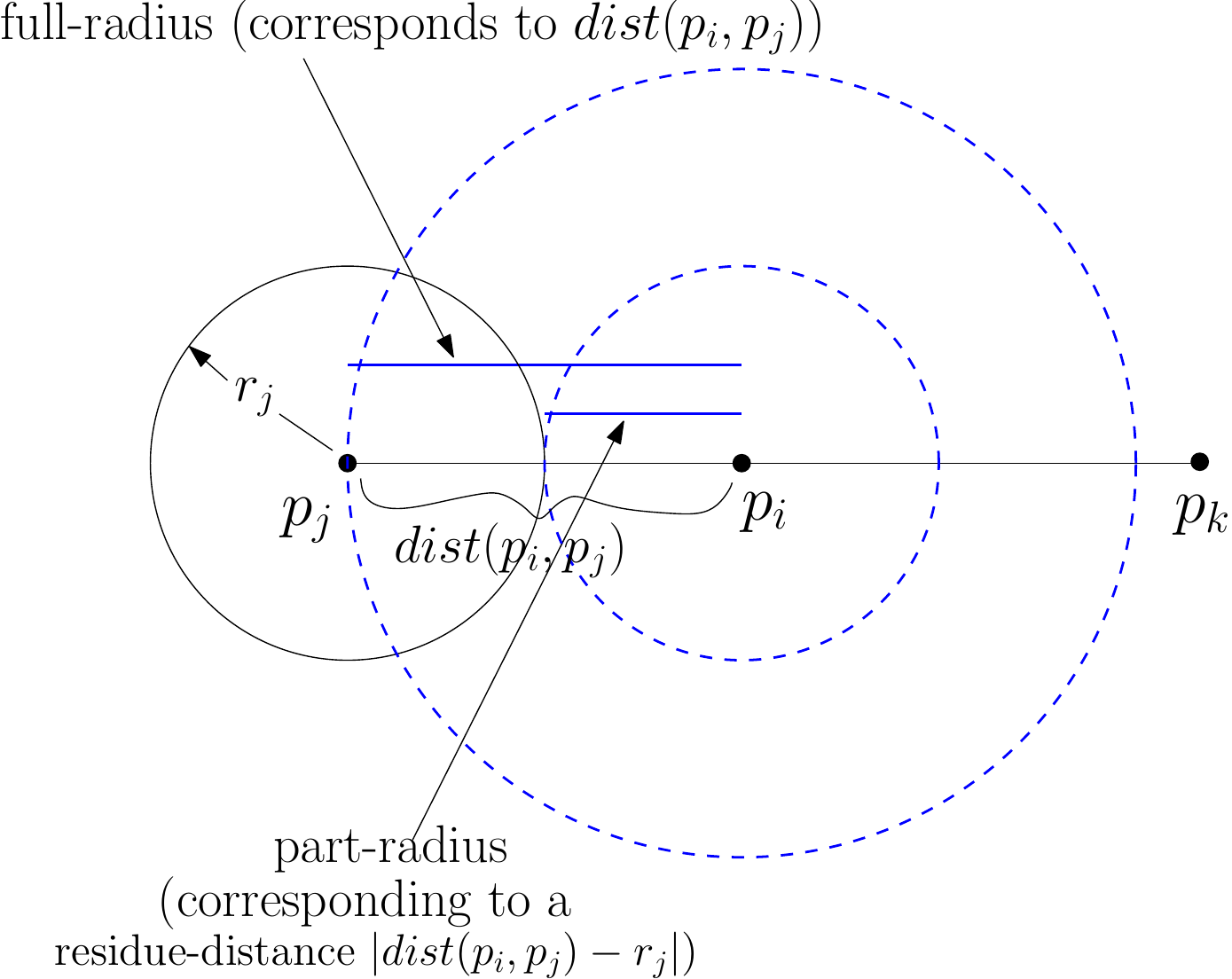}
\caption{full-radius, part-radius and residue-distance of $C_i$ with respect to $p_j$}
\label{twointervals}
\end{figure}

The nearest neighbor of a point $p_i \in P$ is denoted by ${\cal N}(p_i) \in P$.
Here, a point $p_i \in P$ is said to be a \blue{{\it defining point}} of the said
solution if it appears on the boundary of some disk in the solution; otherwise
it is said to be a \blue{{\it non-defining point}}. A {\it non-defining point}
$p_i\in P$ will be covered with a disk $C_i$ centered at point $p_i$, and
its radius $r_i$ is either {\em equal to} or {\em less 
than} $\text{dist}(p_i,q_i)$, where $q_i={\cal N}(p_i)$.
In the former case, $C_i$ is said to have \blue{{\it full-radius}}, and in the later case, 
$C_i$ is said to have \blue{{\it part-radius}} since the boundary of $C_i$ does not have 
any point in $P$. Let us consider a neighbor $p_j$ of the point $p_i$ which has a disk ${\cal C}_j$ of radius $r_j$.
We will use the term \blue{{\em residue-distance}} to indicate a feasible radius for the disk $C_i$ of length
$|\text{dist}(p_i,p_j) - r_j|$ for $i \neq j$, if $|\text{dist}(p_i,p_j) - r_j|\leq |\text{dist}(p_i,{\cal N}(p_i))|$
(see Figure \ref{twointervals}). Thus, the residue-distance of a disk $C_i$ (centered at 
$p_i$) is zero if ${\cal N}(p_i)$ is a {\em defining point}. For each full-radius (resp. part-radius)
of a disk $C_i$ corresponding to $p_i$, we define a \blue{{\it full-radius interval (resp. part-radius interval)}}
of length $2r_i$, where $r_i$ is the radius of $C_i$.

\section{MADP problem on a line} \label{section2} 
In this section, we will consider a constrained version of the MADP problem,
where the point set $P=\{p_1,p_2,\ldots,p_n\}$ lie on a given line $L$, 
which is assumed to be the $x \text{-} axis$. We also assume $\{p_1,p_2,\ldots,p_n\}$ is sorted
in left to right order. We use $d_i$ to denote the distance of the pair of points 
$(p_i,p_{i+1})$, $i=1,2, \ldots, n-1$. Our objective 
is to place non-overlapping disks centered at each point $p_i\in P$ such that the sum of the 
area formed by those disks is maximized. We will use $r_i$ to denote the radius of the disk centered at 
the point $p_i$, where $r_i \geq 0$ for $i=1,2,\ldots,n$.

\begin{lemma}\label{end_points}
In the optimum solution of the MADP problem on a line, at least one of the leftmost or rightmost 
point in $P$ must be either a {\em defining point} or its corresponding disk has {\em full radius}. 
\end{lemma}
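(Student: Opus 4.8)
The plan is to prove the slightly stronger fact that in any optimal solution the leftmost disk already satisfies $r_1\in\{0,d_1\}$; this immediately gives the statement (and, by the mirror-image argument, shows the same for $p_n$, so in fact \emph{both} endpoints have the property). First I would record the one structural feature of the line I need: a radius vector is feasible precisely when all \emph{consecutive} constraints $r_i+r_{i+1}\le d_i$ hold. Indeed, for $i<k$ one has $r_i+r_k\le\sum_{\ell=i}^{k-1}(r_\ell+r_{\ell+1})\le\sum_{\ell=i}^{k-1}d_\ell=\text{dist}(p_i,p_k)$, using $r_\ell\ge 0$, so the consecutive inequalities imply every pairwise inequality. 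Since $p_1$ is leftmost, $\mathcal{N}(p_1)=p_2$ and $d_1=\text{dist}(p_1,\mathcal{N}(p_1))$; thus $r_1=d_1$ means $C_1$ has full-radius, while $r_1=0$ puts $p_1$ on $\partial C_1$, making it a defining point. Also $r_1\le d_1$ always holds (from $r_1+r_2\le d_1$, $r_2\ge 0$). Hence it suffices to rule out $0<r_1<d_1$ at an optimum.

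The main step is a single exchange move. Assume, for contradiction, that an optimal solution has $0<r_1<d_1$. I would form a new radius vector by setting $r_1':=d_1$ and $r_2':=0$, leaving every other radius unchanged. For feasibility, only the consecutive constraints at positions $1$ and $2$ are affected: $r_1'+r_2'=d_1\le d_1$, and $r_2'+r_3=r_3\le d_2-r_2\le d_2$ (using the original constraint $r_2+r_3\le d_2$), while all remaining consecutive constraints are untouched. By the observation above, the new vector is therefore feasible. For the objective, only the first two terms change, and since $r_2\le d_1-r_1$ gives $r_2^2\le(d_1-r_1)^2$, we get $r_1'^2+r_2'^2-(r_1^2+r_2^2)=d_1^2-r_1^2-r_2^2\ge d_1^2-r_1^2-(d_1-r_1)^2=2r_1(d_1-r_1)>0$. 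This strictly increases $\sum_i r_i^2$, contradicting optimality. So $r_1\in\{0,d_1\}$, i.e. $p_1$ is a defining point or $C_1$ has full-radius; the symmetric move $(r_{n-1},r_n)\mapsto(0,d_{n-1})$ gives the same conclusion for $p_n$, and in particular at least one endpoint has the property, which is the stated claim.

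I do not expect a genuine obstacle here; the only point demanding care — and the natural place a proof could go wrong — is checking that overwriting $(r_1,r_2)$ by $(d_1,0)$ cannot violate some constraint farther to the right. This is exactly what the consecutive-implies-all observation settles, and it also handles uniformly the degenerate situation where $r_2=0$ already (so no maximality hypothesis is needed in the argument). It is worth flagging that this route proves more than the lemma asks: it shows each endpoint is individually defining-or-full at every optimum, so the ``at least one'' phrasing is a deliberately weak consequence of the cleaner endpoint-by-endpoint statement.
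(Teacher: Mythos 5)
Your proof is correct, but it takes a genuinely different route from the paper's. The paper argues by contradiction along the chain of touching disks: assuming $0<r_1<d_1$, maximality forces $r_2=d_1-r_1$, then $r_3=d_2-r_2$, and so on, so that the total area becomes a strictly convex quadratic (``parabolic'') function of $r_1$ whose maximum over the feasible interval must lie at an endpoint; this requires a case analysis on whether the dependence chain dies out before reaching $p_n$ (Case 1) or propagates all the way to $p_n$ (Case 2, with further sub-cases comparing $r_1$ with $r_2$ and $r_n$ with $r_{n-1}$). Your single exchange $(r_1,r_2)\mapsto(d_1,0)$, made rigorous by the consecutive-implies-all feasibility observation, replaces all of this: it needs no maximality hypothesis, no case analysis, and yields the stronger conclusion that \emph{every} optimum has $r_1\in\{0,d_1\}$ and, symmetrically, $r_n\in\{0,d_{n-1}\}$. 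What the paper's heavier argument buys is structural information beyond the endpoints: the chain-propagation picture is what supports the remark immediately following the lemma (that in an optimum every disk has full radius, zero radius, or residue-distance radius), which is the fact the interval-generation algorithm of Section \ref{section2} actually uses; your local exchange at the two ends says nothing about interior points. One small caveat: when $r_1=0$, justifying ``defining point'' by saying $p_1$ lies on the boundary of its own disk conflicts with the paper's footnote that a radius-$0$ disk means no disk is placed there; the clean fix is to note that optimality (equivalently, the paper's maximality convention) then forces $r_2=d_1$, so $p_1$ lies on the boundary of $C_2$.
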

\begin{proof}
For the contradiction, let the leftmost 
point $p_1$ in $P$ has radius $r_1$ satisfying $0 < r_1 < \text{dist}(p_1,{\cal N}(p_1))$ 
(see Figure \ref{zero_full}). If $r_2 = d_2<d_1-r_1$, then we can increase $r_1$, indicating 
the non-optimality of the solution. If $r_2=d_1-r_1$, then $r_3=\min(d_3, (d_2-(d_1-r_1)))$.
Assuming $r_3=d_2-(d_1-r_1)$ and proceeding similarly, we may reach one of the following two 
situations:
\begin{itemize}
 \item[1.]  $r_k=d_{k-1}-(d_{k-2}-( \ldots (d_1-r_1)))\ldots )$, and the values of
$r_{k+1},\ldots, r_n$ are independent of $r_1$.
\item[2.] $r_{n-1}=d_{n-2}-(d_{n-3}-( \ldots (d_1-r_1)))\ldots )$ and $r_n=d_{n-1}-r_{n-1}$.
\end{itemize}
Below, we show that in Case 1,  $S_k = r_1^2+r_2^2+\ldots+r_k^2$ can be increased while keeping the values of $r_{k+1}, \ldots, r_n$ unchanged.

\begin{tabbing}
xxx \= \kill
\>$S_k$ \= = \= $\pi\cdot(r_1^2+(d_1-r_1)^2+(d_2-(d_1-r_1))^2+\ldots + (d_k-(d_{k-1}-(\ldots(d_1-r_1))))^2)$\\
\>\> = \> $\pi\cdot (k\cdot r_1^2  -2r_1\cdot c_2 +  c_1 )$, \\
\>where \= $c_1$ = \= $d_1^2+ (d_2 -d_1)^2+\ldots +(d_k-(d_{k-1}-(\ldots + (-1)^k\cdot d_1)))^2$,\\
\> and \= $c_2=(d_1-(d_2 -d_1)+\ldots+ (-1)^{k-1}(d_k-(d_{k-1}-(\ldots + (-1)^k\cdot d_1))))).$
\end{tabbing}

Thus, $S_k$ is a parabolic function whose minimum is attained at $r_1=\frac{c_2}{k}$, and it attains maximum 
at the boundary values of the feasible region of $r_1$, i.e either at $r_1=0~\text{or}~d_1$.

In Case 2, if $r_n > r_{n-1}$, we can increase the sum $S_n$ by setting $r_n=d_{n-1}$, $r_{n-1}=0$ and 
keeping $r_1, r_2, \ldots, r_{n-2}$ unchanged. Now, $r_1^2+ r_2^2+ \ldots + r_{n-2}^2$ can further be increased
as in Case 1. Similarly, if $r_1 > r_2$ then also $S_n$ can be increased by setting $r_1=d_1$ and $r_2=0$,
and then maximizing $r_3^2+r_4^2+\ldots+r_n^2$ as in Case 1. If $r_n \leq r_{n-1}$ and $r_1 \leq r_2$,
then also $S_n$ is a parabolic function of $r_1$, and it is maximized at either $r_1=0$ or $r_1=
\min(d_1,\alpha)$ where $\alpha$ = value of $r_1$ for which $r_{n-1}=d_{n-1}$ \footnote{Here right-end 
of the feasible region of $r_1$ is obtained by placing a disk of radius $d_{n-1}$ at $p_n$,
and placing disks at points $p_{n-1}, \ldots, p_2$ touching those of $p_n, \ldots, p_3$, and
then placing the disk of radius $\alpha$ at $p_1$ that touches the disk at $p_2$. Here surely
$\alpha \leq d_1$.}. 
\end{proof}

\begin{figure}[h]
\centerline{\includegraphics[scale=0.76]{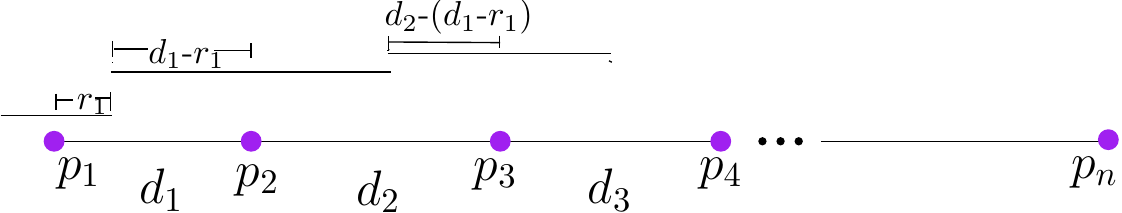}}
\caption{An instance, considering $k=3$}
\label{zero_full}
\end{figure}

Lemma \ref{end_points} says that in an optimum solution all the disks have either full-radius or zero radius or 
has radius equal to the residue distance with respect to the radius of its neighboring points.

Full-radius disks (intervals) are easy to get. For each point $p_i$, find its nearest
neighbor ${\cal N}(p_i) = p_{i-1} \text{ or}~p_{i+1}$, and define an interval
of length equal to  $2\cdot\text{dist}(p_i,{\cal N}(p_i))$, centered at $p_i$.
We now describe the generation of all possible part-radius intervals for each 
point $p_i \in P$ considering them in left to right order.

\begin{itemize}
\item For both the points $p_1$ and $p_2$, there is no part-radius interval.
\item If ${\cal N}(p_2) = p_1$, then for point $p_3$, there is a  part-radius interval of length $2(d_2-d_1)$,
centered at $p_3$; otherwise there is no 
part-radius interval for the point $p_3$.
\item In general, for an arbitrary point $p_k$ if there are $m$ number of part-radius intervals $I_1, I_2,
\ldots, I_m$ of lengths $2\delta_1, 2\delta_2, \ldots, 2\delta_m$ respectively, then 
each of these intervals $I_j$ gives birth to a part-radius interval for the point $p_{k+1}$ with center at
$p_{k+1}$ and of length $2(d_k-\delta_j)$. \\
In addition, if ${\cal N}(p_{k}) = p_{k-1}$, then for point $p_{k+1}$, there is another part-radius interval
centered at $p_{k+1}$ and of length $2(d_k-d_{k-1})$.
\end{itemize} 

Finally, we have ${\cal I}= \cup_{i=1}^n {\cal I}_i$. A similar process is performed to generate part-radius intervals ${\cal J}$ 
by considering the points in $P$ in
right to left order.

\begin{lemma}\label{no_is_i}
For a set $P$ of $n$ points lying on a line $L$, the maximum number of intervals generated 
 by the above procedure is $\Theta(n^2)$.
\end{lemma}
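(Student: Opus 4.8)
The plan is to track, point by point, how many part-radius intervals the procedure creates, reduce the total count to a simple arithmetic sum, and then exhibit a worst-case configuration that saturates it.

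First I would set up a recurrence. Let $f_k$ denote the number of part-radius intervals that the left-to-right pass attaches to the point $p_k$, and let $\epsilon_k = 1$ if ${\cal N}(p_k) = p_{k-1}$ and $\epsilon_k = 0$ otherwise. The base cases in the construction give $f_1 = f_2 = 0$. The general rule says that each of the $f_k$ intervals sitting at $p_k$ gives birth to exactly one interval at $p_{k+1}$, and one extra interval is created precisely when ${\cal N}(p_k) = p_{k-1}$; hence $f_{k+1} = f_k + \epsilon_k$. Unwinding this gives $f_k = \sum_{j=2}^{k-1}\epsilon_j \le k-2$, since each $\epsilon_j \in \{0,1\}$.

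For the upper bound I would simply sum the recurrence. The number of part-radius intervals produced by the left-to-right pass is $\sum_{k=1}^{n} f_k \le \sum_{k=3}^{n}(k-2) = \binom{n-1}{2}$. The right-to-left pass generating ${\cal J}$ is symmetric and contributes at most the same amount, and there are exactly $n$ full-radius intervals (one per point). Adding these gives a total of $O(n^2)$, which establishes the upper bound for any placement of $P$ on $L$. For the matching lower bound I would choose the points so that the gaps are strictly increasing, $d_1 < d_2 < \cdots < d_{n-1}$. Then for every interior point $p_k$ the left gap $d_{k-1}$ is smaller than the right gap $d_k$, so ${\cal N}(p_k) = p_{k-1}$ and $\epsilon_k = 1$ for all $2 \le k \le n-1$. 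The recurrence now forces $f_k = k-2$, so the left-to-right pass alone generates $\sum_{k=3}^{n}(k-2) = \binom{n-1}{2} = \Omega(n^2)$ intervals, giving the desired $\Theta(n^2)$.

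The step I expect to be the real obstacle, and the one I would write out most carefully, is verifying that in this configuration every interval the recurrence counts is genuinely produced, i.e.\ is a valid part-radius interval of positive length rather than something the procedure would discard. A child interval born from a parent of half-length $\delta_j$ at $p_k$ has half-length $d_k - \delta_j$, and a part-radius radius at $p_k$ is a residue distance bounded by the full radius $d_{k-1}$ of $p_k$; thus $\delta_j \le d_{k-1} < d_k$, which simultaneously guarantees $d_k - \delta_j > 0$ (positive length) and $d_k - \delta_j < d_k$ (so it is indeed a part-radius, not a full-radius, interval). If one additionally wants the $\Theta(n^2)$ intervals to be \emph{distinct}, I would pick the gaps $d_i$ to be rationally independent, so that the sign-alternating sums of consecutive gaps that appear as interval half-lengths take pairwise distinct values.
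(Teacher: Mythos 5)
Your proof is correct and takes essentially the same route as the paper: an $O(n^2)$ upper bound because each point adds at most one new part-radius interval which then propagates one child per subsequent point (your recurrence $f_{k+1}=f_k+\epsilon_k$ makes this precise), and a matching $\Omega(n^2)$ lower bound from a configuration with strictly increasing gaps --- the paper's instance $x_1=0$, $x_2=1$, $x_i=x_{i-1}+(x_{i-1}-x_{i-2})+0.5$ is exactly such a configuration. Your extra verification that each spawned interval has positive length (via $\delta_j \leq d_{k-1} < d_k$) is a detail the paper leaves implicit, but the argument is the same.
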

\begin{proof}
Let us first consider the forward pass as explained above. Here, for each point $p_i$ (in order) a 
full-radius interval is generated, and the full-radius interval for point $p_i$ may generate a 
part-radius interval for each point $p_j, j=i+1, \ldots, n$. Thus, for all the points in $P$, we 
may get $O(n^2)$ intervals. To justify the number of intervals is $\Omega(n^2)$, see the  
demonstration in Figure \ref{intervals}. Here the points $p_i=(x_i,0)$, $i=1, 2, \ldots, n$ are 
placed on the $x$-axis, where $x_1=0, x_2=1$ and $x_{i}=x_{i-1}+(x_{i-1}-x_{i-2})+0.5$, $i=3, 4, \ldots, 
n$. Here for each generated interval at $p_i$, a part-radius interval for the points $p_j, j=i+1, 
\ldots, n$ will be generated. The same argument follows for the reverse pass also.
\end{proof}

 \begin{figure}[h]
\centering\includegraphics[scale=0.5]{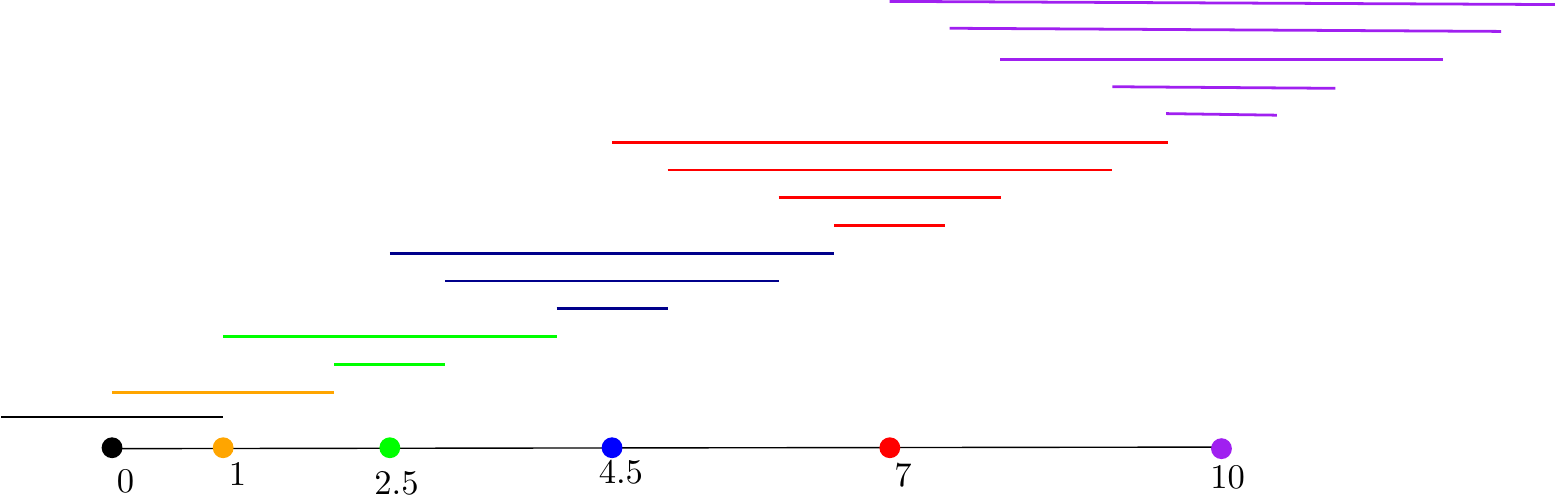}
\caption{An $\Omega(n^2)$ instance of full and part radius intervals }
\label{intervals}
\end{figure}

For each of these intervals we assign weight equal to the square of 
their half-length. We sort the right end points of these intervals. For this sorted set of weighted
intervals, we find the maximum weight independent set. This leads us to the following theorem.
\begin{theorem}
Given a set $P$ of $n$ points on a line $L$, one can place non-overlapping disks maximizing
  sum of their area in $O(n^2)$ time.
\end{theorem}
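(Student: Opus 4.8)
The plan is to reduce the optimal placement to a maximum-weight independent set (MWIS) problem on the family of intervals ${\cal I}\cup{\cal J}$ constructed above, and then to argue separately that this reduction is (i) sound, so that every independent set yields a feasible placement of non-overlapping disks, and (ii) complete, so that some independent set realizes the optimum. The starting observation is that a disk centered at $p_i$ with radius $r_i$ is, on the line $L$, exactly the interval $[x_i-r_i,x_i+r_i]$, so two disks $C_i,C_j$ are non-overlapping iff the corresponding intervals have disjoint interiors; and for points on a line the pairwise constraints $r_i+r_j\le\text{dist}(p_i,p_j)$ hold for \emph{all} pairs precisely when consecutive intervals are disjoint. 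Hence a set of pairwise-disjoint intervals is in bijection with a feasible radius assignment, where any point not carrying a selected interval receives radius $0$ (no disk). Note also that two distinct intervals centered at the same $p_i$ both contain $x_i$, so an independent set automatically picks at most one interval per point; this is what makes the model legitimate.

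For soundness, we assign to each interval a weight equal to the square of its half-length, i.e. $r_i^2$, so the total weight of an independent set equals $\frac1\pi\sum_i \text{area}(C_i)$ for the corresponding placement. Thus any independent set gives a feasible placement whose area is $\pi$ times its weight, and a maximum-weight independent set corresponds to a feasible placement of maximum area among all placements representable by ${\cal I}\cup{\cal J}$.

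The crux is completeness: I must show that an optimal (maximal) solution is representable, i.e. that every nonzero radius occurring in it is the half-length of a generated interval. Here I would invoke Lemma~\ref{end_points} and the remark following it, which guarantee that in the optimum each disk has full-radius, zero radius, or radius equal to the residue-distance with respect to a neighbor. Full-radius values are generated explicitly. For the residue-distance values I would argue by induction along the line: decompose the optimal configuration into maximal chains of mutually touching disks; by Lemma~\ref{end_points} each chain is anchored at an endpoint that is a defining point or carries a full-radius disk, and each successive disk touches its predecessor, so its radius is exactly one of the $d_k-\delta_j$ (or $d_k-d_{k-1}$) values produced by the forward ``gives birth'' rule. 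Left-to-right chains are captured by ${\cal I}$ and right-to-left chains by the symmetric family ${\cal J}$, while zero-radius disks simply terminate a chain. Consequently the set of nonzero radii in the optimum is a subfamily of ${\cal I}\cup{\cal J}$, and being non-overlapping it is an independent set, so the MWIS value is at least, hence equal to, the optimum. I expect verifying that this chain decomposition genuinely captures every residue value, in particular handling disks pinned simultaneously from both sides and making the induction bottom out correctly at full-radius anchors, to be the main obstacle.

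Finally, for the running time: by Lemma~\ref{no_is_i} there are $N=\Theta(n^2)$ intervals, and each is computed in $O(1)$ from its parent, so generation and weight assignment cost $O(n^2)$. The MWIS on a weighted interval family is solved by the classical finish-time dynamic program, and with the intervals available sorted by right endpoint the predecessor lookups can be amortized by a parallel scan in start-endpoint order, giving an $O(n^2)$ post-processing step. The only place where care is needed is producing the endpoints already in sorted order during the two passes, or bucket-sorting them, so as to avoid the extra logarithmic factor of a comparison sort. Combining these pieces yields the claimed $O(n^2)$ bound.
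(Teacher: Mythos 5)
Your proposal is correct and takes essentially the same route as the paper's own proof: the identical forward/backward interval families ${\cal I}\cup{\cal J}$ weighted by squared half-lengths, the same reduction to maximum-weight independent set of intervals solved by the standard dynamic program, and the same $O(n^2)$ accounting for generation, ordering, and the DP. The only difference is cosmetic---you argue completeness by an anchored-chain induction from Lemma~\ref{end_points}, while the paper argues by contradiction that any optimal interval outside ${\cal I}\cup{\cal J}$ touches no generated interval and hence could be elongated---but both arguments rest on the same structural lemma and are at a comparable level of rigor.
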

\begin{proof}
 We can generate the intervals in $O(n^2)$ time as follows. Given a set of intervals ${\cal I}_i$ 
 (of full- and part-radius) generated for a point $p_i$ which are sorted by their right end-points, 
 we can generate the set of part-radius intervals ${\cal I}_{i+1}$ for the point $p_{i+1}$ in $O(i)$ 
 time. Thus, total time for interval generation is $O(n^2)$ in the worst case. Since intervals for 
 each point $p_i$ are generated in sorted manner, ordering them with respect to  their end-points 
 also takes $O(n^2)$ time.  Finally, computing the maximum weight independent set of the sorted set 
 of intervals $\cup_{i+1}^n {\cal I}_i$ using dynamic programming  needs $O(n^2)$ time
 \cite{kleinberg2006algorithm}.
 
 The correctness of the algorithm follows from the fact that, if  there is an
 interval $\theta$ corresponding to point $p_i$ in the optimum solution that does not belong to ${\cal I} \cup {\cal J}$, then
 it is not generated by any interval in ${\cal I}_{i-1}$ and ${\cal J}_{i+1}$.  As a result it does not touch any interval of 
 ${\cal I}_{i+1}$ and also ${\cal J}_{i-1}$. Thus, interval $\theta$ can be elongated to increase the total covering area. 
 \end{proof}

\section{MADP problem in ${\IR}^2$ is NP-hard} \label{section3}
Here, we show  that the MADP problem in $\IR^2$ is NP-hard by a polynomial time reduction of planar rectilinear 
monotone 3-SAT (PRM-3SAT) problem to this problem.

\begin{defn} \label{def1}
A \blue{\it planar rectilinear monotone 3-SAT} (PRM-3SAT) is a 3-SAT formula $\theta$
 such that in every clause of $\theta$, either all the literals are positive, or all the literals are negative.
 Furthermore, $\theta$ has an embedding $\xi$ in $\IR^2$ with the following properties:
 
 \begin{itemize}
  \item[(i)] The variables and clauses of $\theta$ are represented in $\xi$ by axis parallel squares and rectangles respectively.
  \item[(ii)] All the squares representing the variables have the same size and they lie on the x-axis.
  \item[(iii)] All the rectangles representing the clauses have the same height. But their lengths may vary.
  \item[(iv)] The rectangles for positive clauses are above the x-axis while the rectangles for the negative clauses are below the x-axis.
  \item[(v)] The paths joining variables to their respective clauses are just vertical lines, called {\em clause-variable connecting path} ({\em CVC-path}, in short).
  \item[(vi)] The corners of the squares and rectangles representing the variables and clauses respectively, and 
  the end-points of all the {\em CVC-paths} in the embedding $\xi$ are latice points. 
 \end{itemize}
 \end{defn}
Given a PRM-3SAT formula $\theta$, its embedding $\xi$, as stated above, can be obtained in polynomial time.
In \cite{plane-rec}, it is shown that PRM-3SAT problem is NP-complete. 

 We call a set of non-intersecting disks centered at the given points, a \emph{disk configuration}. 
 A disk configuration that gives the maximum area is called a \emph{maximum disk configuration}.
 A disk is said to be \emph{on} a point if it is centered at that point.
  
 \remove{
\begin{figure}[h]
\centering\includegraphics[scale=0.45]{}
\caption{Maximum disk configurations of (a) points at the vertices of a right-angled isosceles triangle, and 
 (b) points at the vertices of an equilateral triangle.}
\label{figexample}
\end{figure}
  }

   \begin{figure} 
\centering\includegraphics[scale=0.25]{}
\caption{A planar monotone rectilinear 3-SAT formula.}
\label{monsat}
\end{figure}

 \subsection{The reduction}

We start with an embedding of a planar monotone 
 rectilinear 3-SAT formula $\theta$, as in Definition \ref{def1}, with variables $\{x_1, x_2, \ldots, x_n \}$ and 
 clauses $\{C_1, C_2, \ldots, C_m \}$ 
 (see Figure \ref{monsat} for an example). Observe that, a clause with two literals can be made a three literal 
 clause by duplicating its any one of the literals. Thus, we can assume that all clauses in $\theta$ have three literals.
  We replace each clause with a
clause-gadget and each variable with a variable-gadget using point sets. Also we put points along the CVC-paths
connecting each clause with the variables in it.  For convenience we use points of three colors, namely red, green and blue, 
 in our reduction. Our configuration of points will 
 contain the following sub-configurations.

\remove{
The clause-gadgets particularly are quite involved, because we 
cannot use configurations such as in Figure \ref{figexample} either because they are too simple to serve our purpose or
have irrational coordinates.
We replace each {\em CVC-path} (joining a clause and a variables) with a sequence of points on which small disks can be drawn.
The positioning of these points ensures that a satisfying assignment of $\theta$ always leads to a disk configuration
that has more area than a disk configuration corresponding to a non-satisfying assignment of $\theta$.  
Hence observing the area of the maximum disk configuration we can determine whether $\theta$ is satisfiable or not.
}
 \begin{figure}[t] 
\centering\includegraphics[scale=0.35]{}\caption{
(a) A clause-gadget.
(b,c) Two maximum disk configurations touching only $w$.
(d) A maximum disk configuration touching both $v$ and $w$.
(e) A maximum disk configuration touching only $v$.
(f) A maximum disk configuration touching only $u$.
(g) A maximum disk configuration touching only $u$ and $w$.
}
\label{figclpt}
 \end{figure}

\subsubsection*{Clause gadget} 
 A clause gadget corresponding to any clause $C_\alpha$ of $\theta$ has eight green and four blue points.  Let 
the coordinate of one green point is $p_1=(\mu,\nu)$.
The coordinates of the 
other seven green points are
$p_2=(\mu,\nu + a)$, $p_3=(\mu + a,\nu + a)$, $p_4=(\mu + a,\nu)$, 
$p_5=(\mu + \frac{5a}{2},\nu)$, $p_6=(\mu + \frac{5a}{2},\nu - a)$, $p_7=(\mu + \frac{7a}{2},\nu)$
and $p_8=(\mu + \frac{7a}{2},\nu - a)$. 
The coordinates of the blue points are $p_9=(\mu + a,\nu + \frac{3a}{2})$, $p_{10}=(\mu + a,\nu + \frac{3a}{2} + b)$,
$p_{11}=(\mu + \frac{7a}{2} + \frac{a}{10},\nu)$ and $p_{12}=(\mu + \frac{7a}{2} + \frac{a}{10} + b,\nu)$
(see Figure \ref{figclpt}(a)). Other than these points, there are three blue points $v_i$, $v_j$, and $v_k$, which are at a distance of $a$ 
units to the left of $p_1$, to the left of $p_2$, and to the right of $p_7$ respectively. 
These are the points on the {\em CVC-path} from the variable-gadgets $x_i$, $x_j$ and $x_k$ appearing in this clause. We choose $b$ and 
$a$ later depending on the number of variables $n$ and number of clauses $m$ of $\theta$. We have the following observation on clause-gadgets, 
which we will prove in Lemma \ref{lemclpt}.
 
\begin{observation}
  The total area of the disks centered at the points of a clause-gadget for a clause with three literals is maximized 
  only if there is a disk of radius $a$ at some green point of 
  that clause-gadget touching the {\em last}\footnote{By the {\it last blue point} of a variable 
 $x_i$ and a clause $C_\alpha$, we mean the blue point $v_i$ of the CVC-path
 closest to the clause-gadget of $C_\alpha$.} blue point of at least one {\em CVC-path} 
  reaching to that clause-gadget (see Figures \ref{figclpt}(b--f)).
\end{observation}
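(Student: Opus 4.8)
The plan is to read this as a necessary condition for optimality and prove it by combining convexity of the objective with a vertex-by-vertex area comparison. Since the total gadget area is $\pi\sum_i r_i^2$, a convex function of the radius vector, and the feasible set is the polytope cut out by the pairwise constraints $r_s+r_t\le \mathrm{dist}(p_s,p_t)$, the maximum is attained at an extreme point of that polytope, i.e.\ at a \emph{maximal configuration} in the sense of the preliminaries (every disk tight against some neighbour). It therefore suffices to show that every maximal configuration whose area equals the optimum has some green disk of radius $a$ incident to a last blue point, which I would do by showing that any maximal configuration touching \emph{no} last blue point with radius $a$ is strictly dominated.

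First I would record the binding geometry. The left square $p_1p_2p_3p_4$ has side $a$, so two edge-adjacent disks obey $r+r'\le a$ while the diagonal slack is $a\sqrt2-a$; the blue cap forces $r_3+r_9\le\tfrac a2$; and the only unobstructed green-to-last-blue distances that equal $a$ are $\mathrm{dist}(p_1,v_i)$ and $\mathrm{dist}(p_2,v_j)$, so that a radius-$a$ disk at $p_1$ (resp.\ $p_2$) exactly reaches $v_i$ (resp.\ $v_j$), forces its two square-neighbours to radius $0$, and leaves the far corner at most $a(\sqrt2-1)$ (the diagonal bound, which is tighter than the cap bound $\tfrac a2$). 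The corresponding data for the right block $p_5p_6p_7p_8$ and the blue chain $p_{11},p_{12}$, with the spacings $\tfrac a{10}$ and $b$, encodes the connection to $v_k$.

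Second comes the comparison, grouped by which connections are active as in Figure~\ref{figclpt}(b--g). The elementary engine is that for a pair obeying $r+r'\le a$ the convex quantity $r^2+r'^2$ is maximised only at the endpoints $r\in\{0,a\}$ (value $a^2$), never at the balanced interior point (value $\tfrac{a^2}{2}$). Hence a maximal configuration that keeps both $p_1$ and $p_2$ strictly below $a$ — i.e.\ uses only internal touchings — is dominated by one that drives $p_1$ or $p_2$ to $a$ against $v_i$ or $v_j$: concretely the balanced square (all corners $\tfrac a2$) yields $\pi a^2$, whereas $r_1=a,\ r_2=r_4=0,\ r_3=a(\sqrt2-1)$ yields $\pi a^2(4-2\sqrt2)>\pi a^2$. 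Because the square (right edge at $\mu+a$) and the block (left edge at $\mu+\tfrac{5a}2$) are separated by a gap of $\tfrac{3a}2$, no cross-constraint between the two halves binds below radius $a$, so the halves decouple and the gadget optimum is the sum of the two half-optima, each attained only in an active state.

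Finally, for the contrapositive I would exhibit an explicit exchange: from any configuration in which no green disk of radius $a$ touches a last blue point, enlarge the designated square disk to radius $a$ toward its free $v$, zero out its two distance-$a$ neighbours, and reset the remaining disks to their constrained maxima; the endpoint inequality above makes the total strictly increase, so the starting configuration was not optimal. The step I expect to be the real obstacle is the right half rather than the square: one must solve the chain relations $r_7+r_{11}=\tfrac a{10}$, $r_{11}+r_{12}=b$, $r_{12}=\tfrac{9a}{10}-b$, verify the radii are simultaneously nonnegative for the chosen $b$, confirm that the $v_k$-reaching configuration dominates every inactive right-block vertex, and rule out that the cap $p_9,p_{10}$ together with the chain $p_{11},p_{12}$ let some last-blue-point-free maximal configuration tie the optimum. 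This coupled bookkeeping at the blue points, not any single inequality, is the delicate part; the left-square estimate and the convexity principle are routine by comparison.
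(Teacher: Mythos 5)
Your domination engine breaks on a configuration you never consider, and it is precisely the case the paper's proof is built to handle. By the paper's Lemma \ref{lemsqr}, the maximum $\pi(4-2\sqrt{2})a^2$ for a side-$a$ square is attained not only by the diagonal pair of radii $\{a,(\sqrt{2}-1)a\}$ but also by the four-disk configuration with radii $\frac{\sqrt{2}}{2}a$ on one diagonal and $(1-\frac{\sqrt{2}}{2})a$ on the other. That second family contains no disk of radius $a$, touches no last blue point, and \emph{ties} your preferred configuration exactly: $2\cdot\frac{1}{2}a^2+2(1-\frac{\sqrt{2}}{2})^2a^2=(4-2\sqrt{2})a^2$. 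Your comparison only beats the balanced square with all radii $\frac{a}{2}$ (area $\pi a^2$), which is not the real competitor; consequently your exchange step (``enlarge the designated disk to radius $a$, zero out its neighbours, the total strictly increases'') has no strict increase available, and the contrapositive collapses. Pairwise endpoint convexity cannot detect this, because the square's optimal set is governed by the whole cycle of constraints, not by independent pairs; this is exactly why the paper first establishes the complete list of optimal square configurations by exhaustive case analysis (Lemma \ref{lemsqr}) before saying anything about blue points.

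Worse, your decoupling claim would, if true, refute the statement rather than prove it. Put the left square in the four-disk tie configuration ($r_2=r_4=\frac{\sqrt{2}}{2}a$, $r_1=r_3=(1-\frac{\sqrt{2}}{2})a$) and the right square's radius-$a$ disk at $p_5$, the green corner away from $w$: each half then attains its own maximum and no last blue point is touched. This fails to be a counterexample only because the cross-constraint you dismissed does bind: $\text{dist}(p_4,p_5)=\frac{3a}{2}$ while $r_4+r_5$ equals $\frac{\sqrt{2}}{2}a+a\approx 1.71a$ or $2a$ in the two ``touch-nothing'' combinations --- these are exactly the invalid cases (c) and (h) in the paper's Lemma \ref{lemclpt2}. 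So the observation is true precisely because the only two combinations that touch no last blue point are the two infeasible ones; neither half individually has to be ``active'' (in Figures \ref{figclpt}(b,c) the left square touches nothing), only the disjunction holds, and it holds through the coupling. Finally, your right-half chain $r_7+r_{11}=\frac{a}{10}$ misreads the role of $p_{11},p_{12}$: in the intended gadget they sit beside $p_8$ and serve to force $r_8\le\frac{a}{10}$, killing every optimal square configuration that uses $p_8$ and thereby pinning the right square to its $p_5$--$p_7$ diagonal; they never form a touching chain out to $w$. The bookkeeping you flagged as ``the delicate part'' is where your approach actually fails, and the paper resolves it by enumeration of the surviving combinations (Lemmata \ref{lemclpt} and \ref{lemclpt2}), not by an exchange argument.
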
 

\subsubsection*{Variable-gadget} 
A variable-gadget corresponds both to the positive and negative literals associated with the variable $x_i$. For each variable $x_i$, 
since each of the literals $x_i$ and $\overline{x_i}$ may appear in each of the $m$ clauses of $\theta$ at most twice, we may need a total of $2m$  
points for both $x_i$ and $\overline{x}_i$ in the variable-gadget. We create the variable gadget as follows:
\begin{itemize}
\item It is a rectangle $r_i$ of size $(4m+3)a\times 5a$, 
\item Assuming the coordinate of the bottom left  corner of this rectangle as $(0,0)$, $8m+4$ points placed along the boundary of another rectangle $r'_i$ of size 
$(4m+1)a\times 3a$ inside $r_i$ at coordinate points
$\{((i+1)a,4a), i=1,2,\ldots, 4m\}, (4m+2)a,2a), (4m+2)a,3a), \{((i+1)a,a), i=1,2,\ldots, 4m\}, (a,2a), (a,3a)$  
  (see Figure \ref{figvarpat}(a)).  
 \item Points are labeled with $x_i$ and $\overline{x}_i$ alternately around the boundary of the rectangle $r'_i$ in clockwise order starting from the point at the location having 
 coordinate $(2a,4a)$.
\end{itemize}
 We have the following observation on variable-gadgets, which we will prove in Lemma \ref{lemvrpt}.

   \begin{figure} [t]
\centering\includegraphics[scale=0.3]{}
\caption{(a) A variable-gadget with $4m$ points, (b-c) Two configurations of disks achieving maximum area, 
(d) A non-optimum configuration of disks for those points}
\label{figvarpat}
\end{figure}

\begin{observation}
\begin{itemize}
\item[(a)]  The total area of non-overlapping disks on the points of the variable-gadget for $x_i$ is maximized if and only if either all the 
  points representing $x_i$ have disks of radius $a$ on them, or all the points representing $\overline{x_i}$ 
  have disks of radius $a$ on them (see Figures \ref{figvarpat}(b) and \ref{figvarpat}(c)). In this case, the total area covered is $(4m+2)a^2$ 
\item[(b)] If disks are placed at both $x_i$ and $\overline{x}_I$ in non-overlapping manner, then the total area covered is strictly less than $(4m+2)a^2$ 
(see Figure  \ref{figvarpat}(d)).
\end{itemize}
\end{observation}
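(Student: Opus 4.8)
The plan is to recognise the variable-gadget as a single closed cycle of points and to reduce the claim to an elementary inequality on that cycle. Reading the $8m+4$ points in the clockwise order in which they are listed, they trace the boundary of a rectangle: cyclically consecutive points are at distance $a$ for all but the four ``corner'' transitions, where the distance is $a\sqrt2$. The alternating labelling means that, indexing the points $0,1,\dots,8m+3$ around the cycle, the even-indexed points carry one literal (say $x_i$) and the odd-indexed points carry $\overline{x_i}$, with exactly $4m+2$ of each. I will write $r_0,\dots,r_{8m+3}$ for the radii and work with the objective $\sum_k r_k^2$.

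First I would prove the upper bound $\sum_k r_k^2 \le (4m+2)a^2$ by a matching argument. The four long corner edges occur at the cyclic positions $4m-1$, $4m+1$, $8m+1$, $8m+3$, which are all odd; hence the perfect matching $M_0=\{(0,1),(2,3),\dots,(8m+2,8m+3)\}$ of the cycle into $4m+2$ adjacent pairs uses only short edges of length $a$. For each matched pair $(p,q)$ the non-overlap constraint gives $r_p+r_q\le a$, and since $r_p,r_q\ge 0$ we get $r_p^2+r_q^2\le (r_p+r_q)^2\le a^2$. Summing over the $4m+2$ pairs yields the bound, and since it invokes only a subset of the true constraints it is valid for every feasible configuration; in particular $r_k\le a$ for all $k$.

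Next I would characterise equality. Equality in $r_p^2+r_q^2\le a^2$ forces $r_p+r_q=a$ and $r_p r_q=0$, i.e. $\{r_p,r_q\}=\{a,0\}$, so a maximum configuration assigns radius $a$ to exactly one endpoint of each $M_0$-pair and $0$ to the other. Now the complementary (odd-indexed) edges enter: each joins the ``$a$-candidate'' of one pair to that of the next, and because every cycle edge---short or long---has length strictly less than $2a$, no such edge can carry two radius-$a$ disks. Encoding each pair's choice as $E$ or $O$ (whether its even or odd endpoint is the radius-$a$ one), this rules out the transition $O\to E$; the only cyclic sequences over $\{E,O\}$ with no $O\to E$ transition are ``all $E$'' and ``all $O$''. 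These are exactly the two pure assignments: all $x_i$-points at radius $a$, or all $\overline{x_i}$-points at radius $a$. A direct check that same-label points are at least $2a$ apart confirms that both pure assignments are globally feasible and attain $\sum_k r_k^2=(4m+2)a^2$, giving part (a). Part (b) is then immediate: a configuration with disks on both an $x_i$- and an $\overline{x_i}$-point is neither pure assignment, so it fails equality in the matching bound and has area strictly below $(4m+2)a^2$.

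The main obstacle is the parity and propagation bookkeeping rather than any hard inequality. The two facts that make everything work---that all four long corner edges share the same (odd) parity, so that a short-only perfect matching exists, and that the per-pair restriction $\{0,a\}$ together with every edge length being below $2a$ forces the $E/O$ choice to be constant around the whole cycle---are where the gadget's geometry is genuinely used, and they must be checked carefully against the given coordinates. Once these are in place, the analytic content (the estimate $r_p^2+r_q^2\le(r_p+r_q)^2$ and the length comparisons $a,\,a\sqrt2<2a$) is elementary.
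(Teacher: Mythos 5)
Your proof is correct, but it takes a genuinely different route from the paper's. The paper disposes of this observation (its Lemma \ref{lemvrpt}) in a single line --- ``Follows from Lemma \ref{lemline}'' --- i.e., it reduces the gadget to the case of equally spaced collinear points, whose optima were characterized via Lemma \ref{end_points} as alternating full/zero radii. Strictly speaking, that reduction is loose: the gadget's $8m+4$ points lie on a rectangular \emph{cycle}, not a line, and the four corner gaps have length $a\sqrt{2}$ rather than $a$, so the cited lemma does not literally apply. Your argument works directly with the true geometry: you verify that the four long edges sit at odd cyclic positions, so the even-position edges form a perfect matching of length-$a$ edges, giving $\sum_k r_k^2\le(4m+2)a^2$ via $r_p^2+r_q^2\le(r_p+r_q)^2\le a^2$; equality forces $\{r_p,r_q\}=\{a,0\}$ in every pair, and the cyclic ``no $O\to E$ transition'' propagation (using that every edge, short or long, is shorter than $2a$) pins down exactly the two pure assignments, whose global feasibility you check (same-label points are at distance $\ge 2a$, corner pairs at $a\sqrt{5}$). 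What the paper's route buys is brevity and reuse of an already-proved lemma; what yours buys is a complete, self-contained proof of precisely the uniqueness statement that the NP-hardness reduction needs, with the corner geometry handled explicitly rather than assumed away. One cosmetic remark: your maximum $(4m+2)a^2$ (as a sum of squared radii, i.e., area $(4m+2)\pi a^2$) matches the observation's stated value, whereas the paper's Lemma \ref{lemvrpt} asserts $2\pi m a^2$; this is an internal inconsistency of the paper, not a defect of your proof.
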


\begin{figure} 
\centering\includegraphics[scale=0.3]{}
\caption{ (a) A {\em CVC-path}. (b) A maximum disk configuration of the connecting path. 
(c) A unique maximum disk configuration of the connecting path when a red disk touches its bottommost point. 
(d) A unique disk configuration of the connecting path when a green disk touches its leftmost point. 
}
\label{figcnpath}
\end{figure}

 We construct the configuration of points $P$ for the PRM-3SAT formula $\theta$ using the following steps:
 \begin{itemize} 
  \item[(a)] Consider an embedding $\xi$ of $\theta$. The variables are represented by squares of size $(4m+1)a\times 4a$ centered on the $x$-axis, and 
  the clauses are represented by rectangles of size\footnote{length $\times$ height} $4a \times 3a$. The horizontal distance between two 
  consecutive squares on the $x$-axis (representing variables) is $2a$. The vertical distance between two rectangles defining two 
  different clauses in the embedding (if any) is also $2a$.
 \remove{ 
  \item[(b)] \label{step2}
  Stretch the embedding of $\theta$ horizontally by a factor of $2m$. The reason is that (i) a variable, say $u$, or its negation $\overline{u}$ may appear 
  at most twice\footnote{to make each clause having three literals} in a clause, (ii) $u$ and $\overline{u}$ do not occur simultaneously in a clause, and (iii)
  number of clauses in $\theta$ is $m$. Thus, each variable-square becomes a rectangle of length $4am$ after the stretch, and it contains the corresponding 
  variable-gadget as described earlier.}
  
  \item[(b)] Replace each clause-rectangle by a clause-gadget, as follows.
  \begin{itemize}
\item[]  In the original embedding $\xi$ of $\theta$, all paths from a clause to its variables are vertical lines.
  Each clause ${\cal C}=(u+v+w)$ in the embedding has three literals, namely left-literal,  middle-literal and  right-literal respectively. 
  Consider the middle-literal $v$ of a positive clause $\cal C$ embedded above the $x$-axis 
  in the embedding of $\theta$. Among all the positive clauses having literal $v$, 
  let $\cal C$ be the $k$-th one from the left in our embedding $\xi$.  
  Then place the clause-gadget $\cal C$ so that the $x$-coordinate of its
  left-most point ($p_2$ in Figure \ref{figclpt}(a))
  is greater than the $x$-coordinate of the  
  $(4k-1)^{th}$ red point in the top boundary of the variable-gadget for the variable $v$ by a multiple of $a$.
 
  If the path is from a variable to a negative clause, then follow an analogous procedure of placing the corresponding clause-gadget such that 
  the $x$-coordinate of its left-most point ($p_2$ in Figure \ref{figclpt}(a)) is greater than the $x$-coordinate of the $4k^{th}$ red 
  point in the bottom boundary of the variable-gadget of the variable $v$ (i.e., $k$-th $\overline{v}$ from the left).
\end{itemize}  
  
  \item[(c)] In the original embedding of $\theta$, all paths from clauses to variables are vertical lines.
  Consider such a vertical line $\ell$ in the embedding of $\theta$. Suppose the path connects a 
  positive clause $\cal C$ and a variable $v$. Also assume that among all such vertical paths from positive 
  clauses to the variable $v$, this path is the $k^{th}$ one from the left. Translate $\cal C$ horizontally, so 
  that it is vertically above the $(2k-1)^{th}$ red point representing the variable $v$ on the top boundary of the 
  variable gadget (rectangle) for $v$. If the path 
  corresponds to the left, middle or right literal in the clause, then add a vertical line segment of length 
  $3a$, $2a$ or $a$ respectively above it, and after that a horizontal line segment of adequate length such
  that the path is horizontally $a$ distance away from its corresponding green point of clause $\cal C$. 
  
The case when $\cal C$ is a negative clause and the vertical line $\ell$ connecting $\cal C$ and its middle-literal $\overline{v}$ is the $k$-th one 
  among the vertical lines incident on the bottom boundary of the variable gadget corresponding to $v$ in the embedding $\xi$, then 
  we translate $\cal C$ horizontally to align $\ell$ with the $2k$-th 
 red point representing $\overline{v}$ in the bottom boundary of the variable gadget of $v$. Next, we follow the same procedure (increasing the length of $\ell$ vertically 
 downwards and adding a horizontal line segment of required length) to connect $\overline{v}$ with the corresponding green point of clause $\cal C$.

\item[(d)] Now we express $a$ and $b$ in terms of $m$ and $n$. The height (span in vertical 
direction) of the embedding is upper bounded by that of $m$ clause rectangles (assuming that they are 
in different layers in the embedding $\xi$), vertical gap between layers, and a variable rectangle.  
These make a total of $3ma  + 2ma + 5a= 5(m + 1)a = B$ (say). The upper bound on the length ($L$) 
of the embedding is $(4m+5)na$. So, the length of a {\it 
CVC-path} connecting a clause with a literal is upper bounded by $K= L + B$. 
 There are at most $3m$ CVC-paths in our point set.
We want 
to set $a$ and $b$ such that the sum of the areas of all blue disks is a small fraction of the area of 
a single green or red disk. 
We want the area of a single green or blue disk to be $100$ times that of the sum of areas of all blue disks.
So, we set $a$ such that $100(3m)K \pi b^2 \leq \pi a^2$. Or, in other words,  $b \leq \frac {a} {10 \sqrt{K}} = 
\frac {a} {10 \sqrt{3m(5(m + 1)a + (4m + 5)na)}}=\frac {\sqrt a} {10 \sqrt{3m(5m + 5 + 4mn + 5n)}}$. Choosing $b=1$ gives 
$a \geq 300m {(5m + 5 + 4mn + 5n)}$. Since $5m + 5 + 4mn + 5n \geq 20mn$, we set $a = 300m { (20mn)} = 6000 m^2n$.  

\item[(e)] Note that here $\frac{a}{10}$ is an integer. Thus, the point set consisting of all the variable-gadgets and all the 
clause-gadgets can be placed at points with integer coordinates.

\item[(f)] Replace each CVC-path from a clause to a vertex with blue points at unit distance apart along that path, 
except at the turning point (see Figure \ref{figcnpath}).
The vertical and horizontal lengths of the paths are multiples of $a$, which is an even number
due to our choice of $a$. Hence, the number of lattice points on each path is odd. Since we do not put 
blue point on the turning point, the number of blue points on each path is also even. As mentioned earlier,  
the end-point of a CVC-path closer to a clause is referred to as the \emph{last blue point} of the said path
(points $u$, $v$ and $w$ shown in Figure \ref{figclpt}(a)).

\item[(g)] We use a total number of $K'$ (= $m(3 \times 100K+4)$) blue points, where $K'$ depends only on $\theta$ and not
an embedding of $\theta$.
Let the total number of blue points used so far on paths and clause-gadgets be $K_b$. Since each path and
clause-gadget has even number of blue points,
$K_b$ is even. Put $K'-K_b$ blue points on a separate vertical line,
with consecutive points at unit distance apart. These will be referred to as the {\em excess points} from the clause-gadgets. 
Note that $K'-K_b$ is also even.  
 \end{itemize}
 See Figure \ref{monsatcons} for the point set embedding $P$ of the PRM-3SAT formula shown in Figure \ref{monsat}. Here the coordinates of each point are integer.
  \begin{figure} 
\centering\includegraphics[scale=0.15]{}\caption{A planar monotone rectilinear 3-SAT embedding transformed to a point set.}
\label{monsatcons}
\end{figure}

 \subsection{Properties of the point configuration}
 Denote the point set constructed in the previous section by $P$.
 Denote by $\Delta (P)$ the maximum sum of areas of non-intersecting disks centered at the points of $P$.

 \begin{lemma}  \label{lemsum}
 If $Q_1$ and $Q_2$ are disjoint subsets of a point set $Q$, then $\Delta (Q) \leq \Delta (Q_1) + \Delta (Q_2)$.
 \end{lemma}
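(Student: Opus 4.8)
The plan is to exploit the fact that pairwise non-intersection of a family of disks is inherited by every subfamily, so that an optimal configuration for $Q$ decomposes into feasible configurations for $Q_1$ and for $Q_2$. Concretely, I would fix a maximum disk configuration for $Q$, read off the radii it assigns to the points of $Q_1$ and to the points of $Q_2$, and observe that each of the two resulting subfamilies is by itself a legal (non-intersecting) configuration for its subset, while the sum of their areas recovers $\Delta(Q)$. Since $Q_1$ and $Q_2$ are disjoint and (as the inequality requires) together exhaust $Q$, no disk is double-counted and none is dropped, and the two subset optima $\Delta(Q_1),\Delta(Q_2)$ dominate the two partial areas.

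In detail, first let $\{C_p : p \in Q\}$ be a maximum disk configuration for $Q$, where $C_p$ is the disk of radius $r_p$ centered at $p$, so that $\Delta(Q) = \sum_{p \in Q} \pi r_p^2$. Next, split this sum along the partition $Q = Q_1 \cup Q_2$:
\[
\Delta(Q) = \sum_{p \in Q_1} \pi r_p^2 + \sum_{p \in Q_2} \pi r_p^2 .
\]
Then note that the subfamily $\{C_p : p \in Q_1\}$ is a set of disks centered at the points of $Q_1$ that are pairwise non-intersecting, since they were already pairwise non-intersecting within the configuration for $Q$; hence it is a feasible disk configuration for $Q_1$, and by maximality of $\Delta(Q_1)$ we obtain $\sum_{p \in Q_1} \pi r_p^2 \le \Delta(Q_1)$. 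The identical argument gives $\sum_{p \in Q_2} \pi r_p^2 \le \Delta(Q_2)$. Adding the two bounds yields $\Delta(Q) \le \Delta(Q_1) + \Delta(Q_2)$, as claimed.

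There is essentially no hard step here; the only point that needs care is the \emph{feasibility-under-restriction} observation, namely that deleting some disks from a non-intersecting family can never create an intersection, so a restriction of a legal configuration is again legal. I would also make explicit the reading that $Q_1$ and $Q_2$ \emph{partition} $Q$ (disjoint and covering), since this is exactly what lets the two partial sums account for \emph{all} of $\Delta(Q)$. If one only assumed $Q_1 \cup Q_2 \subsetneq Q$, the same restriction argument would bound the area contributed by the points of $Q_1 \cup Q_2$, but disks centered at the remaining points of $Q$ could contribute additional area; so the stated inequality should be understood with $Q = Q_1 \cup Q_2$, which is also how the lemma will be applied in splitting the point set into gadgets.
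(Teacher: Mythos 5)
Your proof is correct and takes essentially the same approach as the paper: the paper casts it as a proof by contradiction, but the substance is identical---restrict an optimal configuration for $Q$ to the points of $Q_1$ and $Q_2$, note each restriction is a feasible (pairwise non-intersecting) configuration for its subset, and bound the two partial areas by $\Delta(Q_1)$ and $\Delta(Q_2)$. Your side remark that the inequality really presupposes $Q = Q_1 \cup Q_2$ is a fair clarification; the paper's proof tacitly assumes the same when it writes $S = S_1 \cup S_2$.
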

 \begin{proof}
 Assume on the contrary that $\Delta (Q) > \Delta (Q_1) + \Delta (Q_2)$ for some choice $S$ of disks, and $S=S_1 \cup S_2$, where 
 $S_1$ are centered at points in $Q_1$ and $S_2$ are centered at points in $Q_2$. Observe that the total area of $S_1$ (resp. $S_2$) 
 is smaller than $\Delta(Q_1)$ (resp. $\Delta(Q_2)$), leading to  a contradiction.
 \end{proof}
 
 \begin{lemma} \label{lemline}
 If $Q$ is a set of $k>1$ collinear points on the plane, placed uniformly unit distances apart, then $\Delta (Q) = \pi \lceil \frac{k}{2} \rceil$,
 and it can be realized only by a configurations of disks of unit and zero radii at alternate points.
 \end{lemma}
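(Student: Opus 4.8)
The plan is to pass from disks to their radii and solve a one-dimensional quadratic problem. Place the points at the integer positions $0,1,\dots,k-1$ and write $r_i\ge 0$ for the radius at $p_i$; non-intersection means $r_i+r_j\le\text{dist}(p_i,p_j)=|i-j|$ for all $i\ne j$. The first step is to see that only the adjacency constraints matter. Each constraint $r_i+r_{i+1}\le 1$ already forces $r_i\le 1$ (every point has a neighbour because $k>1$), and then for $|i-j|\ge 2$ we get $r_i+r_j\le 2\le|i-j|$ for free. Hence $\Delta(Q)=\pi\cdot M$, where $M=\max\sum_{i=1}^k r_i^2$ subject to $r_i\ge 0$ and $r_i+r_{i+1}\le 1$ for $1\le i<k$.

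Next I would prove $M=\lceil k/2\rceil$ by a pairing inequality. Group the coordinates into consecutive blocks $(r_1,r_2),(r_3,r_4),\dots$, leaving $r_k$ unpaired when $k$ is odd. For every block, $r_i^2+r_{i+1}^2\le(r_i+r_{i+1})^2\le 1$, and an unpaired endpoint obeys $r_k^2\le 1$; summing the $\lfloor k/2\rfloor$ block bounds together with the possible leftover term gives $\sum r_i^2\le\lceil k/2\rceil$. Placing unit disks on the $\lceil k/2\rceil$ odd-indexed points and zero disks elsewhere is feasible and meets this bound, so $\Delta(Q)=\pi\lceil k/2\rceil$.

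For the characterisation, I would read off the equality conditions of the same inequalities. Attaining $M$ forces every block to satisfy $r_i^2+r_{i+1}^2=(r_i+r_{i+1})^2=1$ and the unpaired term (odd $k$) to satisfy $r_k^2=1$. The first equality gives $r_ir_{i+1}=0$ and the second $r_i+r_{i+1}=1$, so each block equals $(1,0)$ or $(0,1)$ and $r_k=1$; since the blocks cover all coordinates, every radius lies in $\{0,1\}$. Feasibility then says the unit disks sit at pairwise non-adjacent points, and there are exactly $\lceil k/2\rceil$ of them, i.e.\ they form a maximum independent set of the path $P_k$. The same conclusion also follows from Lemma~\ref{end_points}, which already confines optimal radii to full, zero, or residue values, or from the fact that the strictly convex objective is maximised only at vertices of the feasible polytope, whose constraint matrix is the totally unimodular incidence matrix of $P_k$.

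The part needing the most care is the uniqueness claim, and in particular its dependence on the parity of $k$. For odd $k$ the only maximum independent set of $P_k$ is the strictly alternating pattern $1,0,1,\dots,0,1$, so the optimum is genuinely unique. For even $k$ there are several maximum independent sets (for $k=4$, the vectors $(1,0,1,0)$, $(0,1,0,1)$ and $(1,0,0,1)$ all attain $M=2$); each still consists solely of unit and zero disks with the unit disks pairwise non-adjacent, which is the structure the lemma asserts. I would phrase the final statement so that it captures exactly this family rather than only the two purely alternating patterns.
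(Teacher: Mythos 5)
Your proof is correct, and it takes a genuinely different route from the paper's. The paper disposes of this lemma in one line, by specializing Lemma~\ref{end_points} --- its structural result that in an optimal solution on a line every disk has full radius, zero radius, or a residue-distance radius --- to equidistant points (an earlier, commented-out draft in the source instead ran an induction on $k$ with base cases $k=2,3$ and the subadditivity Lemma~\ref{lemsum}). You argue directly on the quadratic program: discard the non-adjacent constraints, bound consecutive pairs via $r_i^2+r_{i+1}^2\le (r_i+r_{i+1})^2\le 1$, sum over the $\lfloor k/2 \rfloor$ blocks plus the possible leftover term, exhibit the alternating configuration to match the bound, and then read off the equality conditions $r_i r_{i+1}=0$ and $r_i+r_{i+1}=1$ to force every radius into $\{0,1\}$. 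What your route buys is completeness and precision: the paper's appeal to Lemma~\ref{end_points} is itself only a sketch (that lemma constrains endpoints and the general shape of optima, and extracting the present statement from it still takes work), whereas your block decomposition is self-contained and yields the exact set of optimizers. In particular, you correctly observe that for even $k$ the optimum is attained by \emph{every} maximum independent set of the path $P_k$ (e.g.\ $(1,0,0,1)$ for $k=4$), not only the strictly alternating patterns, so the lemma's phrase ``only \ldots at alternate points'' is too strong as literally stated; the paper glosses over this, and the overstatement quietly propagates to the ``exactly three maximum configurations'' count in Lemma~\ref{lempth2}, although the NP-hardness argument survives because it only needs that forbidding both endpoints of an even path strictly lowers the optimum. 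What the paper's route buys is brevity in context: once Lemma~\ref{end_points} is available, no new computation is required.
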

 \begin{proof}
  Follows from Lemma \ref{end_points} where the points on a line are equidistant.
 \end{proof}

 \remove{
 \begin{proof} 
 The proof is by induction.
 We first prove the lemma for the base cases $k=2$ and $k=3$. 
 For $k=2$, let the radii of the disks from left to right be $r_1$ and $r_2$ respectively.
 Then we have the constraints, $r_1+r_2 \leq 1$ and $r_1, r_2  \geq 0$.
 We have to maximize $\pi (r_1^2 + r_2^2)$. If $r_1+r_2 < 1$, then the total area can be increased by increasing the one keeping the other one same. 
 Thus, we have $r_2=1-r_1$, and the area $A(r_1) = \pi (r_1^2 + (1-r_1)^2)  = \pi (2r_1^2 - 2r_1 +1)$.
 This is maximized at $r_1-0$ and $r_1=1$, and $A(0)=A(1)=\pi$. 
 
 For $k=3$, let the radii of the disks from left to right be $r_1$, $r_2$ and $r_3$ respectively.
 Then we have the constraints, $r_1+r_2 \leq 1$ and $r_2+r_3 \leq 1$ and $r_1, r_2, r_3 \geq 0$.
 We have to maximize $\pi (r_1^2 + r_2^2 +r_3^2)$. By a similar argument about the equality of both the constraints, 
 the total area of the three disks can be written as $A(r_2) = \pi ((1-r_2)^2 + r_2^2 + (1-r_2)^2)  = \pi (3r_2^2 - 4r_2 +2)$.
 $A(r_2)$ attains minimum at $r_2=\frac{2}{3}$, and increases in both sides monotonically. Finally, it attains the global maximum at $r_2=0$ in the interval 
 $0\leq r_2\leq 1$, and the maximum value is $A(0) = 2 \pi$.
  
  Now we prove the lemma for $k>3$.  Assume that
  the lemma is true up to $k-2$. Denote the set of the first $k-2$ points as $P_1$ and the last $2$ points as $P_2$.  
  By our induction hypothesis, the lemma is valid for both $P_1$, and we have already shown that the lemma is valid for $P_2$. 
  By Lemma \ref{lemsum}, $\Delta (P') \leq \Delta (P_1) + \Delta (P_2) = \pi \lceil \frac{k-2}{2} \rceil + \pi =  \pi \lceil \frac{k}{2} \rceil$.
  Again, the equality can also be attained by drawing disks of radius $1$ on alternate points of $P'$. Hence, $\Delta (P') = \pi \lceil \frac{k}{2} \rceil$.
 \end{proof}
 }
 
\begin{figure} 
\centering\includegraphics[scale=0.25]{}
\caption{(a) A point set with points on four vertices of a unit square. 
(b) Maximum area of disks is $\pi a^2 + \pi (\sqrt{2}-1)^2a^2 + \pi b^2$. (c) Another choice for maximum area of disks.
(d) A non-optimum area covered by disks on the points.
}
\label{figsqr}
\end{figure}
\remove{ 
 When a disk configuration can be reflected or rotated to another disk configuration on the same point set,
 then we consider the two configurations as the same. 
 We have the following lemmas.
}
 \begin{lemma} \label{lemsqr}
 If $Q$ is a set of four points on four vertices of a unit square, then $\Delta (Q) = \pi (4-2 \sqrt {2})$,
 and can be realized only by
 either two diagonally opposite disks of radius  $\frac {\sqrt 2} {2}$
and two other diagonally opposite disks of radius $1 - \frac {\sqrt 2} {2}$, or two diagonally opposite disks of
radii $1$ and $\sqrt 2 - 1$, respectively.
\end{lemma}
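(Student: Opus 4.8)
The plan is to recognize this as a strictly convex quadratic maximized over a bounded polytope and to reduce it to vertex enumeration. Writing $r_1,r_2,r_3,r_4$ for the radii in cyclic order around the square, the feasible region $F\subset\IR^4$ is the compact convex polytope cut out by the four unit edges ($r_1+r_2\le 1$, $r_2+r_3\le 1$, $r_3+r_4\le 1$, $r_4+r_1\le 1$), the two diagonals ($r_1+r_3\le\sqrt2$, $r_2+r_4\le\sqrt2$), and $r_i\ge 0$. The objective $f(r)=\pi\sum_i r_i^2$ is strictly convex, its Hessian being $2\pi I$. My first step is to record the standard consequence: a strictly convex function on a compact convex set attains its maximum only at extreme points, since if $r=\tfrac12(s+t)$ with $s\ne t$ both feasible then $f(r)<\tfrac12(f(s)+f(t))\le\max(f(s),f(t))$, so $r$ cannot be optimal. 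Hence every optimal radius vector is a vertex of $F$, and it suffices to enumerate the vertices and evaluate $f$.

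For the enumeration I would exploit the order-$8$ symmetry group of the square, which permutes the coordinates while preserving both $F$ and $f$, so vertices may be treated up to symmetry; I organize them by the number of zero radii, equivalently by which constraints are tight. The two stated families appear as follows. If some disk is full, say $r_1=1$, then $r_2=r_4=0$ is forced by the incident edges and $r_3\le\sqrt2-1$ by the diagonal, so the vertex is $(1,0,\sqrt2-1,0)$, of value $\pi(1+(\sqrt2-1)^2)=\pi(4-2\sqrt2)$: this is the diagonal-radii-$1,\sqrt2-1$ family. If no disk is full, the binding constraints are edge/diagonal forms; the four edge equalities $r_1+r_2=r_2+r_3=r_3+r_4=r_4+r_1=1$ have the one-parameter solution $r_1=r_3=1-t,\ r_2=r_4=t$, whose endpoints, reached when a diagonal also becomes tight at $t=1-\tfrac{\sqrt2}{2}$ or $t=\tfrac{\sqrt2}{2}$, are exactly the alternating configuration $\{\tfrac{\sqrt2}{2},\,1-\tfrac{\sqrt2}{2}\}$, again of value $\pi(4-2\sqrt2)$.

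It then remains to check that every other vertex has strictly smaller value; the representatives (up to symmetry) are the origin (value $0$), a lone full disk such as $(1,0,0,0)$ (value $\pi$), and the mixed vertices with one positive-radius triple such as $r_1=r_3=\tfrac{\sqrt2}{2},\,r_2=1-\tfrac{\sqrt2}{2},\,r_4=0$ (value $\pi(\tfrac52-\sqrt2)<\pi(4-2\sqrt2)$). Verifying these few quantities shows they all fall strictly below $4-2\sqrt2$, so the maximum equals $\pi(4-2\sqrt2)$ and is attained only by the two stated families. I expect the main obstacle to be making the enumeration genuinely exhaustive: the six edge/diagonal forms are linearly dependent (for instance $E_{12}+E_{34}=E_{23}+E_{41}=D_{13}+D_{24}=r_1+r_2+r_3+r_4$), so some quadruples of constraints are rank-deficient and define positive-dimensional faces rather than vertices, while others are infeasible (forcing both diagonals tight contradicts two tight opposite edges). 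I would therefore argue via the number of zero coordinates together with the symmetry reduction, rather than blindly intersecting all $\binom{10}{4}$ constraint quadruples, to guarantee that no maximizing vertex is overlooked and that the two families are the unique optima.
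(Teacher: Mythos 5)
Your proposal is correct, and it takes a genuinely different route from the paper. The paper proves this lemma by an ad hoc geometric case analysis on the \emph{touching structure} of an optimal configuration: it argues that in a maximum configuration every disk touches another, then splits into cases (three or fewer positive radii; some disk touching all three others; diagonally opposite disks touching; only edge-adjacent disks touching), and in each case reduces to a one-variable parabola that is maximized at an endpoint of its feasible interval. You instead model the problem once and for all as maximizing the strictly convex quadratic $\pi\sum_i r_i^2$ over the explicit compact polytope in $\IR^4$ cut out by the four edge constraints, two diagonal constraints, and nonnegativity, invoke the standard fact that a strictly convex function on a compact convex set attains its maximum only at extreme points, and then enumerate the vertices up to the dihedral symmetry. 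The underlying mathematics is the same (tight constraints correspond to touching disks, and the paper's parabola-at-endpoint arguments are convexity in disguise), but your framework buys two things the paper gets only laboriously: exhaustiveness is certified by a finite, checkable vertex list rather than by trusting that all touching patterns were considered, and the ``only'' (uniqueness) half of the lemma is immediate, since non-vertices can never be maximizers. The paper's approach, in exchange, stays entirely elementary and self-contained, never invoking polytope theory.

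Your vertex inventory is in fact complete: organizing by the number of zero coordinates, the vertices up to symmetry are the origin (value $0$), the lone full disk $(1,0,0,0)$ (value $\pi$), the full-plus-diagonal vertex $(1,0,\sqrt2-1,0)$ (value $\pi(4-2\sqrt2)$), the mixed vertex $\bigl(\tfrac{\sqrt2}{2},1-\tfrac{\sqrt2}{2},\tfrac{\sqrt2}{2},0\bigr)$ (value $\pi(\tfrac52-\sqrt2)$), and the alternating vertex $\bigl(\tfrac{\sqrt2}{2},1-\tfrac{\sqrt2}{2},\tfrac{\sqrt2}{2},1-\tfrac{\sqrt2}{2}\bigr)$ (value $\pi(4-2\sqrt2)$); all other constraint quadruples are rank-deficient, inconsistent, or infeasible. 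The one place your sketch is thinnest is the all-positive case: a full write-up must also dispose of vertices defined by three edges plus a diagonal (these turn out to coincide with the alternating vertex, as the fourth edge becomes tight automatically) and by two edges plus two diagonals (rank-deficient, with feasibility forcing the alternating vertex again). You correctly flag exactly this dependency issue as the remaining work, and your zero-count-plus-symmetry plan does close it, so there is no genuine gap.
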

 \begin{proof}
 We prove this result by exhaustive case analysis.
Let the top left point of $Q$ be $q_1$, and the other points are named as $q_2$, $q_3$ and $q_4$ in a counterclockwise order
(see Figure \ref{figsqr}(a)). Let the disks on these points be named as 
 $d_1$, $d_2$, $d_3$ and $d_4$, and their 
 radii be $r_1$, $r_2$, $r_3$ and $r_4$ respectively, where $r_i \geq 0$ for $i=1,2,3,4$. 

 In a configuration achieving the maximum area, each disk $d_i$ must touch some other disk in $d_j, j\neq i$, $i,j=1,\ldots,4$. Suppose that in a configuration,  
 there are three or less disks having 
 strictly positive radii. Let the point $q_1$ has no disk (i.e., $r_1 =0$). It must be touched by some other disk, say $d_2$ at point $q_2$, having $r_2 > 0$.
Implying, $r_2=1$, and it touches another point $q_3 $ as well. The disk $d_4$ can have a radius up to $\sqrt 2 - 1$ to avoid intersection with $d_2$.
Adding the area of $d_2$ and $d_4$, we have  $\Delta(Q) \geq \pi (4-2 \sqrt {2})$ (see Figure \ref{figsqr}(b)). 

Now suppose that there are fours disks on the four points with maximum possible total area.
Suppose that one of these disks, say $d_1$, touches all other three disks. 
Then the total area $A(r_1)=\pi(r_1^2 + 2(1-r_1)^2 + (\sqrt 2 -r_1)^2) = 4\pi r_1^2-(4+2 \sqrt 2 )\pi r_1 +4 \pi$  is a convex function
of $r_1$. It attains minimum at $r_1^*=\frac{1}{2} + \frac{\sqrt 2}{4}$, and increases in both the sides of $r_1^*$.  
We also have $r_1 + r_2 = 1$, $r_1 + r_4 = 1$, and $r_1 + r_3 = \sqrt 2$, and $r_2+r_3 \leq 1$. Thus, $r_1+r_2+r_3\leq \frac{1+\sqrt{2}}{2}$. 
Implying $r_1 \geq \frac{\sqrt 2}{2}$. 
But $A(\frac{\sqrt 2}{2}) = \pi(4 - 2 \sqrt 2) = A(1)$ (see Figure \ref{figsqr}(c)).  

Now we consider the case where no disk touches all the three other three disks. So, each disk must touch either one or two other points or disks. 
Here two cases need to be considered.
\begin{description}
\item[] A pair of diagonally opposite disks, say $d_1$ and $d_3$  touch each other. 

If $r_1 > r_3$, then 
$r_2$ and $r_4$ can be set appropriately such that $d_2$ and $d_4$ touch $d_1$, leading to a contradiction. 
The same argument holds for $r_1 < r_3$.

If $r_1 = r_3$, then as before, each of $d_2$ and $d_4$
must touch both $d_1$ and $d_3$, a contradiction. 

\item[] No pair of diagonally opposite disks are touching. Let $d_1$ touch $d_2$, $d_2$ touch $d_3$, and $d_3$ touch $d_4$,  Implying $r_1+r_2=1$, $r_2+r_3=1$, and 
$r_3+r_4=1$. Thus $r_1+r_4=1$.
As we have assumed that diagonally opposite disks are non-touching, all the four disks must have radius less than $\frac{\sqrt 2}{2}$.
Let $r_1 = r$, and the total area becomes $A(r)=2 \pi (r^2 + (1 - r)^2) = 2 \pi (2r^2 - 2r + 1)$, which is an unimodal function. 
It attains minimum at $r =  \frac{1}{2}$, and monotonically increases in both the sides (see Figure \ref{figsqr}(d)). Due to our constraints, 
$r \in [1 - \frac {\sqrt 2} {2},\frac {\sqrt 2} {2}]$.
Also note that, $A(1 - \frac {\sqrt 2} {2}) = A( \frac {\sqrt 2} {2})= \pi(4 - 2 \sqrt 2)$.
\end{description}
Now consider the remaining case, where $d_1$ touches $d_2$ and $d_3$ touches $d_4$, and no other two disks touch each
other. Since for any two points, the larger 
disk can be expanded and the other one shrinks to increase their total area, the larger among $d_1$ and $d_2$ can be expanded
to touch $d_3$ or $d_4$, giving a greater 
total area, a contradiction.

We have considered all the possibilities of maximum area of disks, and these give only two configurations: either two diagonally
opposite disks are of radius  $\frac {\sqrt 2} {2}$
and the other two diagonally opposite disks are of radius $1 - \frac {\sqrt 2} {2}$, or two diagonally opposite disks of radii $1$
and $\sqrt 2 - 1$. In both the cases, we have $\Delta (Q) = \pi (4-2 \sqrt {2})$. 
 \end{proof}

\begin{lemma} \label{lemclpt}
The maximum area covered by a disk configuration of the clause-gadget is $2\pi (4-2 \sqrt {2})a^2 + 2 \pi b^2$.
\end{lemma}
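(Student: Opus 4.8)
The plan is to sandwich $\Delta(\text{gadget})$ between matching upper and lower bounds that meet at the claimed value. For the upper bound I would partition the twelve gadget points into four mutually disjoint groups whose inter-group constraints may be discarded: the two side-$a$ squares $Q_1=\{p_1,p_2,p_3,p_4\}$ and $Q_2=\{p_5,p_6,p_7,p_8\}$, together with the two blue pairs $Q_3=\{p_9,p_{10}\}$ and $Q_4=\{p_{11},p_{12}\}$, each pair at distance $b$. Applying Lemma \ref{lemsum} three times (splitting off one group at a time) gives $\Delta(\text{gadget})\le \Delta(Q_1)+\Delta(Q_2)+\Delta(Q_3)+\Delta(Q_4)$. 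Lemma \ref{lemsqr}, rescaled from the unit square to side $a$ so that radii scale by $a$ and areas by $a^2$, yields $\Delta(Q_1)=\Delta(Q_2)=\pi(4-2\sqrt2)a^2$; and the $k=2$ case of Lemma \ref{lemline}, rescaled to inter-point distance $b$, yields $\Delta(Q_3)=\Delta(Q_4)=\pi b^2$. Summing gives $\Delta(\text{gadget})\le 2\pi(4-2\sqrt2)a^2+2\pi b^2$.

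For the matching lower bound I would exhibit one feasible maximal configuration. Use the radius-$a$/radius-$(\sqrt2-1)a$ diagonal optimum of Lemma \ref{lemsqr} in each square, oriented so that the corners adjacent to the blue points stay small: set $r_1=a$, $r_3=(\sqrt2-1)a$, $r_2=r_4=0$ in $Q_1$, and $r_5=a$, $r_8=(\sqrt2-1)a$, $r_6=r_7=0$ in $Q_2$; on the blue pairs set $r_{10}=r_{12}=b$ and $r_9=r_{11}=0$. Each square then attains $\pi(4-2\sqrt2)a^2$ and each blue pair attains $\pi b^2$, so the total equals the upper bound $2\pi(4-2\sqrt2)a^2+2\pi b^2$, establishing the claim.

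The one delicate point — and where I expect the actual work to lie — is checking feasibility \emph{across} the four groups, since they are coupled by exactly three short constraints. With the assignment above, the large disks sit at $p_1$ and $p_5$, and one verifies $r_3+r_9=(\sqrt2-1)a\le \tfrac a2$ (the distance $p_3p_9$), $r_7+r_{11}=0\le \tfrac a{10}$ (the distance $p_7p_{11}$), and $r_4+r_5=a\le\tfrac{3a}{2}$ (the distance $p_4p_5$); every remaining within-square and cross-group distance comfortably exceeds the relevant radius sum, using $b\ll a$. The essential observation is that some orientation of the two square optima keeps the blocked corners $p_3,p_7$ below $\tfrac a2$ and $\tfrac a{10}$ respectively, so that the blue disks — already capped at $b$ by their partners $p_{10},p_{12}$ — lose nothing. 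This is precisely what makes the group-wise bound attainable rather than merely an inequality.

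Finally, since Lemma \ref{lemsqr} pins down the only per-square optima, the same case analysis enumerates all maximizers of the gadget. In the reduction these maximizers must further be consistent with the radii forced along the incident CVC-paths, and tracking which orientations survive those extra constraints is what produces the structural statement recorded in the Observation (the configurations of Figures \ref{figclpt}(b)--(f)); I would present that characterization as a by-product of the equality analysis rather than as part of the bare value computation above.
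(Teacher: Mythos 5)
Your proposal is correct and follows essentially the same route as the paper: the identical four-way partition into the two green unit-$a$ squares and the two blue pairs, the subadditivity bound of Lemma \ref{lemsum} combined with (scaled versions of) Lemmata \ref{lemsqr} and \ref{lemline} for the upper bound, and explicit achievability for the matching lower bound. If anything, your write-up is more careful than the paper's: you correctly take $\Delta(Q_3)=\Delta(Q_4)=\pi b^2$ (the paper misstates each as $2\pi b^2$, though its total is right), and your explicit witness --- large disks at $p_1$ and $p_5$, zero radii at the corners adjacent to the blue points, with the cross-group constraints checked one by one --- is a cleaner justification of attainability than the paper's brief appeal to the figure configurations and its garbled distance inequality.
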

\begin{proof}
We divide the points $P_C$ in the clause-gadget ${\cal C}$ into the subsets: $P_1 = \{p_1, p_2, p_3, p_4 \}$, $P_2 \{p_5, p_6, p_7, p_8\}$ (of green points), and 
$P_3 \{p_9, p_{10} \}$, $P_4 \{p_{11}, p_{12} \}$ (of blue points). By Lemma \ref{lemsqr}, $\Delta (P_1) = \Delta (P_2) = \pi (4-2 \sqrt {2})a^2$.
By Lemma \ref{lemline}, $\Delta (P_3) = \Delta (P_4) = 2 \pi b^2$.
Using Lemma \ref{lemsum}, we have $\Delta (P_{\cal C}) \leq 2\pi (4-2 \sqrt {2})a^2 + 2 \pi b^2$. The disks, if any,
on $p_2$ and $p_7$ cannot intersect the disks
on $p_{10}$ and $p_{12}$ respectively, as $a^2 + (\frac{a}{10})^2 > (a+b)^2$. Thus, $\Delta (P_{\cal C}) = 2\pi (4-2 \sqrt {2})a^2 + 2 \pi b^2$, which 
can be achieved by all five configurations in Figure \ref{figclpt}.
\end{proof}
\begin{lemma} \label{lemclpt2}
Each of the maximum disk configurations in a clause-gadget (see Figure \ref{figclpt}) must touch the blue point of at least one of the three {\em CVC-path} 
(namely $u,v,w$).
Moreover, given any one of the three such points, there is a maximum disk configuration of the clause-gadget that touches 
only that point.
\end{lemma}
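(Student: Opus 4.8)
The plan is to reduce the statement to a finite case analysis over the maximum configurations of the clause-gadget, which are already heavily constrained by the earlier lemmas, and then simply to read off which of $u,v,w$ each such configuration can reach.

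First I would record what Lemma~\ref{lemclpt} gives. Its proof shows $\Delta(P_{\cal C})=\Delta(P_1)+\Delta(P_2)+\Delta(P_3)+\Delta(P_4)$, and since each summand is an upper bound on the area that the corresponding block can contribute, any maximum configuration of the gadget must attain the maximum on each block separately. By Lemma~\ref{lemsqr} the two green squares $P_1=\{p_1,p_2,p_3,p_4\}$ and $P_2=\{p_5,p_6,p_7,p_8\}$ therefore each realise one of the two square-optimal patterns (the symmetric $\tfrac{\sqrt2}{2}a,\,(1-\tfrac{\sqrt2}{2})a$ pattern, or the $a,\,(\sqrt2-1)a$ pattern with two zero disks), while by Lemma~\ref{lemline} each blue pair $P_3,P_4$ carries one radius-$b$ disk and one radius-$0$ disk. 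The only remaining freedom is the placement of the large disks within each square.

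Next I would feed in the two guard pairs. The point $p_9$ lies at distance $\tfrac{a}{2}$ from $p_3$ and the pair $p_{11},p_{12}$ lies close to the right corners of $P_2$; forcing $P_3$ and $P_4$ to attain their maxima therefore caps the radius of the disk at $p_3$ and at the guarded corner of $P_2$, the borderline non-overlap being exactly the inequality $a^2+(\tfrac{a}{10})^2>(a+b)^2$ used in Lemma~\ref{lemclpt}. This kills the square patterns that would waste a radius-$a$ disk on a guarded corner and leaves only finitely many combinations. For each of $u,v,w$ I would then exhibit one surviving combination whose radius-$a$ green disk sits at the adjacent corner ($p_1$ for $u$, $p_2$ for $v$, $p_7$ for $w$) while the disks near the other two CVC-points are strictly smaller than $a$, so that exactly one of $u,v,w$ is reached; these are the configurations of Figure~\ref{figclpt}(b)--(f), and once one checks they attain the area $2\pi(4-2\sqrt2)a^2+2\pi b^2$ this settles Part~2.

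The main obstacle is the necessity half of Part~1: showing that \emph{no} maximum configuration leaves all of $u,v,w$ untouched. This is exactly where the guard placement must do its work, because a priori a square could realise its optimum with the large disks on the diagonal \emph{not} adjacent to a CVC-point (the symmetric pattern, or an $a$-disk parked at $p_4$), and such a choice touches none of $u,v,w$. I would rule these out by combining two ingredients: the guard caps, which forbid the large disk at the guarded corners, and the inter-square non-overlap constraint across the gap between $p_4$ and $p_5,p_6$, which forbids the two squares from simultaneously parking their $a$-disks on the mutually facing, CVC-free corners. Proving that these constraints together leave no all-untouched maximum configuration—uniformly in the parameters, using $b=O(1)$ while $a=\Theta(m^2n)$—is the delicate step; everything else is routine metric checking.
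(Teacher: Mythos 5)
Your proposal is correct and follows essentially the same route as the paper's own proof: decompose the gadget into $P_1,P_2,P_3,P_4$ via Lemma \ref{lemsum}, invoke Lemmas \ref{lemsqr} and \ref{lemline} to force square-optimal patterns and blue-pair optima, use the guard points to cap the radii at $p_3$ and at the guarded corner of $P_2$, and finish with a finite enumeration in which the inter-square constraint across the $p_4$--$p_5$ gap (of length $\frac{3a}{2}$) eliminates every configuration avoiding all of $u$, $v$, $w$, while Figures \ref{figclpt}(b)--(f) settle the second part. The step you defer as ``delicate'' is exactly the paper's explicit case enumeration (its cases (a)--(h)), and the two ingredients you name --- the guard caps and the gap constraint between the mutually facing CVC-free corners --- are precisely the ones that make that enumeration close.
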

\begin{proof}
From Lemma \ref{lemsqr}, we know that in the optimal disk configuration of the clause-gadget of a clause $\cal C$, 
either two diagonally opposite disks of radius  $\frac {\sqrt 2} {2} a$
and the other two diagonally opposite disks of radius $(1 - \frac {\sqrt 2} {2})a$, or two diagonally opposite disks of radii $a$ and $(\sqrt 2 - 1)a$
can be placed on each of the green point sets $P_1$ and $P_2$ of $P_{\cal C}$ to get 
the maximum area for $P_1\cup P_2$. Hence in an optimum covering for $P_{\cal C}$, these remain
the only choices for the green points.
Since the possible four radii for the green points for attaining optimality are greater than $\frac {a} {10}$, no disk can be drawn on $p_8$.
Again, due to the presence of the blue point $p_9$ at a distance $\frac{a}{2}$ from $p_3$, $p_3$ can have only a disk of radius $(1 - \frac {\sqrt 2} {2})a$. Thus, the possible 
configurations of disks at the green points of $Q$ are: 
\begin{itemize}
\item[(a)] $p_5$ and $p_7$ have disks of radii $a$ and $(\sqrt 2 - 1)a$ respectively, and $p_2$ and $p_4$ have disks of radii $a$ and $(\sqrt 2 - 1)a$ respectively: This is a valid configuration, and the disk of radius $a$ on $p_2$, touches $u$ (see Figure \ref{figclpt}(f)).
\item[(b)] $p_5$ and $p_7$ have disks of radii $(\sqrt 2 - 1)a$ and $a$ respectively, and $p_2$ and $p_4$ have disks of radii $a$ and $(\sqrt 2 - 1)a$ respectively: This is a valid configuration, and the disk of radius $a$ on $p_7$, touches $w$ (see Figure \ref{figclpt}(g)). 
\item[(c)] $p_5$ and $p_7$ have disks of radii $a$ and $(\sqrt 2 - 1)a$ respectively, and $p_2$ and $p_4$ have disks of radii $(\sqrt 2 - 1)a$ and $a$ respectively: This is an invalid configuration since the disks of radius $a$ on $p_4$ and $p_5$ intersect.
\item[(d)] $p_5$ and $p_7$ have disks of radii $(\sqrt 2 - 1)a$ and $a$ respectively, and $p_2$ and $p_4$ have disks of radii $(\sqrt 2 - 1)a$ and $a$ respectively: This is a valid configuration and the disk of radius $a$ on $p_7$, touches $w$ (see Figure \ref{figclpt}(b)).
\item[(e)] $p_5$ and $p_7$ have disks of radii $a$ and $(\sqrt 2 - 1)a$ respectively, and $p_1$ and $p_3$ have disks of radii $a$ and $(\sqrt 2 - 1)a$ respectively: This is a valid configuration and the disk of radius $a$ on $p_1$, touches $v$ (see Figure \ref{figclpt}(e)).
\item[(f)] $p_5$ and $p_7$ have disks of radii $(\sqrt 2 - 1)a$ and $a$ respectively, and $p_1$ and $p_3$ have disks of radii $a$ and $(\sqrt 2 - 1)a$ respectively: This is a valid configuration and both the disks of radius $a$ on $p_1$ and $p_7$ touch $v$ 
and $w$ respectively (see Figure \ref{figclpt}(d)).
\item[(g)] $p_5$ and $p_7$ have disks of radii $(\sqrt 2 - 1)a$ and $a$ respectively, and $p_1, p_2, p_3, p_4$ have disks of radii $ 
(1 - \frac {\sqrt 2} {2})a, \frac {\sqrt 2} {2} a, (1 - \frac {\sqrt 2} {2})a, \frac {\sqrt 2} {2}a$: This is a valid configuration where the disk of 
radius $a$ on $p_7$ touches $w$ (see Figure \ref{figclpt}(c)).
\item[(h)] $p_5$ and $p_7$ have disks of radii $a$ and $(\sqrt 2 - 1)a$ respectively, and $p_1, p_2, p_3, p_4$ have disks of radii $ 
(1 - \frac {\sqrt 2} {2})a, \frac {\sqrt 2} {2} a, (1 - \frac {\sqrt 2} {2})a, \frac {\sqrt 2} {2}a$: This is an invalid configuration since $d_4$ and $d_5$ overlap. 

\end{itemize}

For the second part of the lemma, observe that the configurations of Figures \ref{figclpt}(b), \ref{figclpt}(e) and \ref{figclpt}(f)
touch only $w$, $v$ and $u$ respectively.
\end{proof}

\begin{lemma} \label{lemvrpt}
  The total area of disks on the points of the variable-gadget for a variable 
  $x_i$ is maximized if and only if either all the 
  points representing $x_i$ have disks of radius $a$ on them, or all the points representing $\overline{x_i}$ 
  have disks of radius $a$ on them, and is equal to $2\pi m a^2$.
\end{lemma}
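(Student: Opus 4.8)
The plan is to treat the variable-gadget as a single \emph{cyclic} object rather than decomposing it into unit squares as was done for the clause-gadget. List the $4m$ points of the gadget as $q_1,q_2,\ldots,q_{4m}$ in the clockwise order in which they were labelled, so that consecutive points carry opposite labels and exactly $2m$ of them represent $x_i$ and $2m$ represent $\overline{x}_i$. Along each of the four straight sides of $r'_i$ consecutive points sit exactly $a$ apart, while at each of the four corners the gap between the last point of one side and the first of the next is $a\sqrt{2}$. The first thing I would record is that \emph{every radius is at most $a$}: each $q_t$ has a side-neighbour at distance $a$, and the constraint $r_t+r_{\text{neighbour}}\le a$ together with nonnegativity forces $r_t\le a$. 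From this I would deduce that only the cyclically consecutive constraints are active: a short inspection of the four corners shows that any two non-consecutive points are at distance at least $2a$, so their constraint $r+r'\le 2a$ is already implied by $r,r'\le a$. This reduces the problem to maximizing $\sum_t r_t^2$ on the cycle subject only to $r_t+r_{t+1}\le w_t$, where $w_t\in\{a,a\sqrt{2}\}$.

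For the upper bound I would use a matching argument. Pair the points along side-edges so as to obtain a perfect matching $M$ all of whose edges have length $a$; this is possible because the four corner gaps are mutually separated by an even number of points (a fact one reads off the coordinates), so they all avoid one of the two perfect matchings of the cycle. Each edge of $M$ joins an $x_i$-point to an adjacent $\overline{x}_i$-point, and for such a pair $r+r'\le a$ gives $r^2+r'^2\le (r+r')^2\le a^2$. Summing over the $2m$ edges of $M$ yields $\sum_t r_t^2\le 2m\,a^2$, i.e.\ the total area is at most $2\pi m a^2$.

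The ``if and only if'' then comes from the equality analysis. Equality throughout forces $r^2+r'^2=a^2$ on every matched pair, which, given $r+r'\le a$, holds only when $\{r,r'\}=\{a,0\}$. Hence exactly $2m$ points carry radius $a$ and the rest radius $0$, and these radius-$a$ points form an independent set of the cycle, since two adjacent points (distance $a$ or $a\sqrt{2}<2a$) can never both carry radius $a$. An independent set of size $2m$ in the even cycle $C_{4m}$ is maximum, and such a cycle has exactly two maximum independent sets, namely its two colour classes — which are precisely the set of $x_i$-points and the set of $\overline{x}_i$-points. To close the argument I would verify achievability: assigning radius $a$ to every $x_i$-point and $0$ elsewhere is feasible, because the $x_i$-points split into the top row, the bottom row and the side points, and a direct distance computation shows every two of them are at least $2a$ apart; the symmetric statement holds for the $\overline{x}_i$-points. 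Either configuration attains total area exactly $2\pi m a^2$.

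The step I expect to demand the most care is isolating the active constraints at the four corners — simultaneously confirming that each point there still has a distance-$a$ neighbour (so that all radii are $\le a$) and that the corner gap $a\sqrt{2}$ does not produce an exploitable pair at distance below $2a$ — because the entire reduction to ``consecutive constraints on a cycle'' rests on it. Once that finite check is in place, the matching bound, the equality characterization via maximum independent sets of an even cycle, and the concluding feasibility computation are all routine.
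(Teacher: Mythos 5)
Your proof is correct, but it takes a genuinely different route from the paper's. The paper disposes of this lemma in one line---``Follows from Lemma \ref{lemline}''---that is, it reduces the gadget to the already-solved case of equidistant collinear points (Lemma \ref{lemline}, itself a consequence of Lemma \ref{end_points}). Strictly speaking, that reduction is incomplete for the gadget as actually constructed, since its points lie around the boundary of a rectangle: applying Lemma \ref{lemline} to each side separately and summing via Lemma \ref{lemsum} gives the upper bound, but not the ``only if'' part, because one must still exclude configurations that alternate correctly within each side yet clash across the four corner gaps of length $a\sqrt{2}$. Your cycle treatment supplies exactly this missing analysis: the reduction to consecutive constraints (every non-consecutive pair is at distance at least $2a$; the nearest such pairs across a corner are at distance $a\sqrt{5}$), the perfect matching by length-$a$ edges (which exists because each side carries an even number of points, so the four corner edges share the same parity in the cycle and one of the two perfect matchings of the cycle avoids them all), the per-edge estimate $r^2+r'^2 \leq (r+r')^2 \leq a^2$ with equality precisely when $\{r,r'\}=\{a,0\}$, and the identification of the two maximum independent sets of an even cycle with the two label classes. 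What your approach buys is a self-contained and rigorous uniqueness argument; what the paper's buys is brevity and reuse of the line-case machinery. One caveat, not of your making: you take the gadget to have $4m$ points, which is consistent with the caption of Figure \ref{figvarpat} and with the value $2\pi m a^2$ in the statement being proved, whereas the construction text specifies $8m+4$ points, for which your identical argument yields $(4m+2)\pi a^2$ (the value appearing in the corresponding Observation). This discrepancy is internal to the paper, and your argument is robust to it, since it uses only the label-alternating cycle structure with evenly many points per side.
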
 
\begin{proof} Follows from Lemma \ref{lemline}.
\remove{By construction, a variable-gadget is a set of collinear points, with consecutive points equal distance apart. Furthermore, they
alternately represent $x_i$ and $\overline {x_i}$. Hence, by Lemma \ref{lemline} their area can be maximized only by drawing disks of radius
$a$ either on all points representing $x_i$, or on all points representing $\overline {x_i}$.
Lemma \ref{lemline} also directly implies that the maximum area is $\pi (n^x_i + n^{\overline{x}}_i) a^2$.}
\end{proof}
\begin{figure} 
\centering\includegraphics[scale=0.15]{}
\caption{Disks drawn according to a satisfying assignment of a PRM-3SAT formula $\theta$; here $x_1, x_2, x_5, x_6 = 1$, and $x_3, x_4 =0$}
\label{disksat}
\end{figure}

\remove{
\begin{lemma} \label{lempth1}
Every {\em CVC-path} has an even number of blue points in both its vertical and horizontal parts.
\end{lemma}
\begin{proof}
The length of both the vertical and horizontal segments of a {\em CVC-path} are multiples of $\frac{a}{10}$, which is an even number by the choice of $a$. 
Since equidistant blue points are placed at unit distance apart from one end of the path up to its other end excepting the corner joining the vertical 
and horizontal segments, the result follows.
\end{proof}
}

It is already mentioned in the earlier subsection that every {\em CVC-path} is of even length. The following lemma gives the optimum area of disks centered 
at the points on each {\em CVC-path}.
\begin{lemma} \label{lempth2}
Every {\em CVC-path}  of length $k$ has exactly three distinct maximum disk configurations, each having an area of $\pi \frac{k}{2} b^2$.
If we are not allowed to draw a disk on any one of the end points of a connecting path,
then it has exactly one possible maximum disk configuration, also having an area of $\pi \frac{k}{2} b^2$.
\end{lemma}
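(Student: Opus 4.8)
The plan is to reduce the whole statement to Lemma~\ref{lemline} by observing that, apart from the single bend, a CVC-path is just blue points spaced $b$ apart along two straight (collinear) runs. Writing the blue points in path order as $P_1,\dots,P_k$, the first step is to record the only relevant distances: consecutive points on a straight run satisfy $\mathrm{dist}(P_i,P_{i+1})=b$; the two points $P_j,P_{j+1}$ flanking the deleted turning point satisfy $\mathrm{dist}(P_j,P_{j+1})=\sqrt2\,b$; and every non-consecutive pair lies at distance $\ge 2b$ (the nearest such pair across the corner is at distance $\sqrt5\,b$). Hence, once each radius is at most $b$, every non-consecutive constraint is automatically implied, and only the consecutive constraints $r_i+r_{i+1}\le b$ together with the lone corner constraint $r_j+r_{j+1}\le\sqrt2\,b$ can be active. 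Since every point has a straight neighbour at distance $b$ (each run has length $\ge 2$), indeed $r_i\le b$ for all $i$.

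For the area bound I would split the point set at the corner into the two straight runs $A=\{P_1,\dots,P_j\}$ and $B=\{P_{j+1},\dots,P_k\}$. By construction each run has an even number of blue points (its length is a multiple of the even integer $a$), so Lemma~\ref{lemline} (scaled by the spacing $b$) gives $\Delta(A)=\pi\tfrac{|A|}{2}b^2$ and $\Delta(B)=\pi\tfrac{|B|}{2}b^2$, and Lemma~\ref{lemsum} yields $\Delta(\text{path})\le\pi\tfrac{k}{2}b^2$. This bound is attained by placing radius-$b$ disks on alternate points, so the maximum area is exactly $\pi\tfrac{k}{2}b^2$. Because the split bound is additive and tight, every maximum configuration must be maximal on each run separately; the uniqueness clause of Lemma~\ref{lemline} then forces each run to carry a pure $0/b$ alternating pattern. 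The corner merely forbids radius-$b$ disks on both $P_j$ and $P_{j+1}$ (as $2b>\sqrt2\,b$); the remaining $\sqrt2\,b$ slack is never exploited in an optimum, since spreading it over the two corner-neighbours would push both of them below $b$ and strictly lower $\sum r_i^2$. Thus the maximizers are precisely the subsets of $k/2$ radius-$b$ disks forming an independent set in the path, all of equal area.

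To obtain the ``exactly three'' statement I would classify these maximizers by their occupancy at the two end points $P_1$ (bottommost) and $P_k$ (leftmost), the only points that interact with the adjacent gadgets. The alternating maximizers are parametrized by the position at which the odd sublattice switches to the even one; among them $P_1$ is empty only in the all-even pattern and $P_k$ is empty only in the all-odd pattern, and these two cannot occur together, for leaving both ends empty would require $k/2$ radius-$b$ disks among the $k-2$ interior points, which is impossible. This leaves exactly the three endpoint patterns---both ends occupied, only-bottom occupied, only-top occupied---which are the three configurations drawn in Figure~\ref{figcnpath}.

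Finally, for the blocked-endpoint claim, suppose a disk from the neighbouring gadget touches, say, $P_1$, forcing $r_1=0$. Maximality on the even run $A$ together with the uniqueness in Lemma~\ref{lemline} (an even run with one end pinned to $0$ has a single alternating optimum) forces $r_2=b,\,r_3=0,\dots,r_j=b$; the corner constraint then forces $r_{j+1}=0$, and the same alternation propagates uniquely through $B$ up to $P_k$. Hence exactly one maximum configuration survives, still of area $\pi\tfrac{k}{2}b^2$, and symmetrically if the leftmost point is blocked. The main obstacle throughout is the corner: one must check that the single $\sqrt2\,b$ edge neither raises the optimum nor spawns maximizers outside the alternating family, and that both runs are genuinely even so that the split bound of Lemma~\ref{lemsum} is tight---both facts hinge on the parity and geometry fixed by the choice of $a$.
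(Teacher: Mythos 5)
Your proposal follows the same skeleton as the paper's proof: split the CVC-path at the point-free corner into its two straight runs, apply Lemma~\ref{lemline} to each run and Lemma~\ref{lemsum} to their union to obtain the upper bound $\pi\frac{k}{2}b^2$, observe that the corner pair at distance $\sqrt{2}\,b$ cannot carry two radius-$b$ disks, and use the evenness of both runs to propagate a forced alternation in the blocked-endpoint case. On the area value and on the second claim (unique maximizer when an endpoint is blocked) the two arguments coincide.

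Where you genuinely diverge is in counting the maximizers, and your count is the correct one. The paper treats ``a disk on the bottommost point of $\chi_V$ and on the leftmost point of $\chi_H$'' as a single configuration (Figure~\ref{figcnpath}(b)); but once both ends carry radius-$b$ disks the interiors of the runs are \emph{not} determined --- on a run of four unit-spaced points, a disk on the first point is compatible with a disk on either the third or the fourth. As you observe, the run-maximizers are exactly the independent sets of half the run's size (parametrized by where the odd sublattice switches to the even one), so the whole path has $\frac{k}{2}+1$ distinct maximum disk configurations whenever $k>4$, not three. (The same looseness is already present in Lemma~\ref{lemline}'s phrase ``at alternate points''; your reading of its conclusion as ``radii in $\{0,b\}$ forming an independent set'' is the defensible one.) What is literally true --- and what you prove --- is that there are exactly three \emph{endpoint-occupancy patterns}: both ends occupied, only the variable-side end, only the clause-side end, with both-empty excluded by your pigeonhole argument on the $k-2$ interior points. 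Since the downstream reduction (Lemmas~\ref{lemoneway} and~\ref{lemoppway}) uses only the optimal area and which endpoints can host disks, your corrected classification delivers everything the construction needs, and it quietly repairs an overstatement in the paper's own lemma and proof.
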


\begin{proof}
Consider a {\em CVC-path} $\chi$ of $k$ blue points; the set of blue points on its vertical and horizontal parts are denoted as $\chi_V$ and $\chi_H$ respectively.
By Lemma \ref{lemsum}, $\Delta {\chi} \leq \Delta {\chi_V} + \Delta {\chi_H} = \pi \frac{k1}{2} b^2 + \pi \frac{k1}{2} b^2 = \pi  \frac{k}{2} b^2$. 
Equality is attained by each of the disk configurations in Figure \ref{figcnpath}. 

Without loss of generality, let us assume that the {\em CVC-path} $\chi$  
travels vertically upward and then turns left to meet the clause-gadget.
Since the topmost blue point of $\chi_V$ and the rightmost blue point of 
$\chi_H$ are only $\sqrt 2$ distance apart,  
both of them can not have disks on them in a maximum disk configuration. As 
the number of blue points on both $\chi_V$ and $\chi_H$ are even, if there
is no disk on the bottommost blue point of $\chi_V$ in a maximum disk configuration, then there must be a disk on the topmost blue point of $\chi_V$. This implies, there is no disk 
on the rightmost blue point of $\chi_H$ and hence a disk is present on the leftmost blue point of $\chi_H$, giving an optimal configuration (see Figure \ref{figcnpath}(c)).
 Similarly, if there is no disk on the leftmost blue point in $\chi_H$, then there is a disk on the bottommost blue point of $\chi_H$. 
Thus, the other two optimal configurations are formed with  disk on  the lowest blue point in $\chi_V$ and the leftmost blue point in $\chi_H$
(see Figure \ref{figcnpath}(b)), and  disk on  
the lowest blue point in $\chi_V$ and the rightmost blue point of $\chi_H$
(see Figure \ref{figcnpath}(d)).
\end{proof}

 Now, we consider the set of points $P$ in all the clause-gadgets, 
 variable-gadgets, and CVC-paths created for a PRM-3SAT formula $\theta$. 
 The following two lemmas give  estimates of the 
 total area in the optimum solution of the MADP problem for $P$ for 
 the case where $\theta$ is satisfied, and $\theta$ is unsatisfied.
   
\begin{lemma} \label {lemoneway}
 If $\theta$ has a satisfying assignment, then there is a choice of non-intersecting disks on the points of $P$ such that 
 their total area is exactly equal to 
 $\pi ((2n+(8-4\sqrt{2}))ma^2 + (\frac{K'}{2}+2m))$.
\end{lemma}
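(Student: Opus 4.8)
The plan is to turn a satisfying assignment of $\theta$ into one explicit feasible disk configuration, place on each gadget and each \emph{CVC-path} the maximum configuration guaranteed by the earlier lemmas, and then add the areas over the (disjoint) point sets. Concretely, for every variable $x_i$ I would read off its truth value and, invoking Lemma \ref{lemvrpt}, place disks of radius $a$ on \emph{all} points labelled $x_i$ when $x_i$ is true and on all points labelled $\overline{x_i}$ otherwise; each variable-gadget then contributes exactly $2\pi m a^2$. For every clause I would use the fact that some literal is satisfied: its \emph{CVC-path} carries the truth value up to the clause-gadget, and by Lemma \ref{lemclpt2} there is a maximum clause configuration whose radius-$a$ green disk touches precisely the last blue point of that path. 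By Lemma \ref{lemclpt} each clause-gadget then contributes $\pi(8-4\sqrt 2)a^2$ from its eight green points and $2\pi b^2$ from its two pairs of blue points. Finally, on every \emph{CVC-path} and on the block of excess points I would place a maximum blue configuration from Lemmas \ref{lempth2} and \ref{lemline}.

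The assignment dictates the choices at the junctions. For a satisfied literal the variable-side radius-$a$ disk and the clause-side radius-$a$ green disk abut the relevant path, so at least one endpoint of that path is blocked by an external disk; by Lemma \ref{lempth2} the path still attains its maximum area $\pi\tfrac{k}{2}b^2$, precisely because among its three maximum configurations there is always one leaving a prescribed endpoint free. For the false literals neither end is forced, so the free maximum configuration applies directly. Hence every path and the excess block reach their maxima, and the blue contribution from the paths and excess points is exactly $\pi\tfrac{K'}{2}b^2$.

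The step I expect to be the real work is checking \emph{simultaneous} feasibility at the junctions, i.e.\ that gluing these locally optimal pieces yields globally non-intersecting disks with no piece forced below its maximum. Two things must be verified. First, no large disk of one structure reaches into a neighbouring one: this rests on the integer spacing fixed in steps (d)--(g), where $a=6000m^2n$ is chosen astronomically larger than $b=1$, so a radius-$a$ green or red disk interacts only with the single adjacent blue point it is meant to touch; the inequality $a^2+(\tfrac{a}{10})^2>(a+b)^2$ used in Lemma \ref{lemclpt} is the prototype of these separation checks. Second, the parity of each path must be compatible with whichever endpoints the assignment empties; this is exactly the content of Lemma \ref{lempth2}, so the only genuine obstacle is the book-keeping of the three cases (variable-end free, clause-end free, or neither) and confirming that a maximum path configuration survives in each.

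Once feasibility is established, the total area follows by summing over the disjoint point sets (Lemma \ref{lemsum} ensures the local optima add up). The green and red disks give $\pi\big(2nm+(8-4\sqrt 2)m\big)a^2$, the blue pairs inside the $m$ clause-gadgets give $2\pi m b^2$, and the paths together with the excess points give $\pi\tfrac{K'}{2}b^2$. Setting $b=1$ collects these into $\pi\big((2n+(8-4\sqrt 2))ma^2+\tfrac{K'}{2}+2m\big)$, the claimed value. Equality rather than a mere lower bound holds because each constituent configuration was chosen to meet its own maximum and, by the junction analysis above, all of them are realised at the same time.
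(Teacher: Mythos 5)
Your overall decomposition (gadget-by-gadget maxima plus a junction-feasibility check, summed over disjoint point sets) is the same as the paper's, but your truth-value encoding is \emph{inverted}, and that inversion breaks the junction analysis at exactly the step you yourself flag as the real work. You place the radius-$a$ disks on the points labelled $x_i$ when $x_i$ is \emph{true}; the paper does the opposite (``if $u=1$ then draw the disk centered at $\overline{u}$''). The red points labelled $x_i$ are precisely the attachment points of the CVC-paths coming from positive clauses, so under your convention the path of a satisfied literal is blocked at its variable end, and since you also direct the clause-gadget's radius-$a$ green disk at that same path (via Lemma \ref{lemclpt2}), that path is blocked at \emph{both} ends. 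At that point your appeal to Lemma \ref{lempth2} fails: that lemma only guarantees a maximum configuration when \emph{one} prescribed endpoint is forbidden, and its proof shows that every one of the three maximum configurations of a path places a disk on at least one of the two endpoints. A path forbidden at both ends therefore attains at most $\pi(\frac{k}{2}-1)b^2$, not $\pi\frac{k}{2}b^2$, so your configuration falls short of the claimed total by at least $\pi b^2$ for such a clause and the ``exactly equal'' statement is not established. Nor can you always route around this within your convention: if every literal of some clause is true under the satisfying assignment (e.g.\ a clause with a duplicated literal), all three of its paths are blocked at their variable ends, yet by Lemma \ref{lemclpt2} every maximum clause configuration must touch one of them.

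The inversion is not cosmetic; it is the heart of the reduction. The paper encodes ``$x_i$ true'' by leaving the $x_i$-labelled points \emph{empty} (the radius-$a$ disks go on the $\overline{x_i}$ points), so the satisfied literal's path is free at its variable end and can absorb the mandatory clause-end blockage, while each false literal's path is blocked only at its variable end because the clause configuration of Lemma \ref{lemclpt2} can be chosen to touch \emph{only} the satisfied literal's path. Then every path attains $\pi\frac{k}{2}b^2$ and the totals add up exactly to the claimed value. Indeed, the fact that a doubly-blocked path loses $\pi b^2$ is precisely the mechanism Lemma \ref{lemoppway} uses to separate unsatisfiable instances; if blocking both ends cost nothing, as your argument implicitly assumes, the satisfiable and unsatisfiable cases would yield the same area and the whole reduction would collapse. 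To repair your proof, flip the convention (disks on the false literal's points), direct each clause gadget's radius-$a$ disk at the last blue point of one satisfied literal's path, and keep the rest of your accounting, which otherwise matches the paper's.
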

 \begin{proof}
 Consider a satisfying assignment of $\theta$. For each variable-gadget $u$, 
 if $u=1$ then draw the disk centered at $\overline{u}$, otherwise 
  draw the disk centered at $u$. 
We also draw the disks of radius $b=1$ for the half of the extra $K'-K_b$ points.      
Thus, half of the excess blue points for each clause contains disks.  For each clause-gadget, say ${\cal C}_\alpha=x_i+x_j+x_k$, draw an
 optimum disk configuration satisfying Lemma \ref{lemclpt2}, such that one disk of radius $a$ on a green point
 must touch the {\em last blue point} of exactly one satisfying variable (literal), say $x_i$. Thus, if $x_i=1$ satisfies the clause ${\cal C}_\alpha$ 
 then the disk at the last point $u$ of the CVC-path from $x_i$ to the clause-gadget of ${\cal C}_\alpha$ can not be drawn. Now, we can put 
 $\frac{\pi k}{2}$ blue disks (of radius $b=1$) on the CVC-path 
 connecting ${\cal C}_\alpha$ to the variable gadget of $x_i$  since $x_i=1$ and so the disks at variable $x_i$ are  put on the points marked as $\overline{x_i}$.
 For the other two variables, namely $x_j$ and $x_k$ also, we can put
 exactly $\frac{\pi k}{2}$ disks on their corresponding CVC-path irrespective of whether the disks 
 are put at $x_j$ or $\overline{x_j}$ (resp. $x_k$ or $\overline{x_k}$) for the variable $x_j$ (resp. $x_k$) since the last point $v_j$ (resp. $v_k$) near to 
 the clause $\theta_\alpha$ may contain a disk. 
 Thus, the total area for all clause-gadgets is $\pi(2mb^2 +  2m(4-2\sqrt 2)a^2)$ (see Lemma \ref{lemclpt}), total area for $n$ variable-gadgets is $2 \pi mna^2$ 
 (see Lemma \ref{lemvrpt}), and total area for all CVC-paths
 is $\frac{K'}{2}\pi b^2$. See Figure \ref{disksat} for the demonstration. Thus, the total area is $\pi((8-4\sqrt{2})ma^2+2mb^2 + 3mna^2 + \frac{K'}{2}b^2)$. Putting $b=1$ and simplifying, the result follows. 
  \end{proof} 
  
\begin{lemma} \label{lemoppway}
If a PRM-3SAT formula $\theta$ is not satisfiable then the total area of the corresponding MADP problem is less than $\pi ((2n+(8-4\sqrt{2}))ma^2 + (\frac{K'}{2}+2m))$.
\end{lemma}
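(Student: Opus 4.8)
The plan is to prove the contrapositive combined with a compactness argument. Writing $V := \pi\bigl((2n+(8-4\sqrt{2}))ma^2 + (\tfrac{K'}{2}+2m)\bigr)$ for the target value, I will show that the global maximum area equals $V$ \emph{only} when $\theta$ admits a satisfying assignment; since the maximum is always at most $V$, this forces the attained maximum to be strictly below $V$ whenever $\theta$ is unsatisfiable.

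First I would record the matching upper bound. Partition $P$ into the $n$ variable-gadgets, the $m$ clause-gadgets, the CVC-paths, and the excess points, and apply Lemma \ref{lemsum} repeatedly to obtain that $\Delta(P)$ is at most the sum of the per-gadget maxima. By Lemmas \ref{lemvrpt}, \ref{lemclpt}, \ref{lempth2} and \ref{lemline}, these maxima sum to exactly the value $V$ realised by the satisfying configuration of Lemma \ref{lemoneway}; in particular that configuration attains the maximum on \emph{every} gadget simultaneously, so the sum of maxima is precisely $V$. Because the feasible set of radius-vectors is a compact polytope and the total area $\sum \pi r_i^2$ is continuous, the maximum $A^*$ is attained. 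Since $V$ is the sum of the per-gadget maxima and each gadget's contribution is bounded by its own maximum, the equality $A^*=V$ can hold only if the optimal configuration attains the maximum on every gadget at once.

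Next I would extract a truth assignment and propagate the blocking. By Lemma \ref{lemvrpt}, a variable-gadget at its maximum carries radius-$a$ disks on exactly one of its two label-classes, which defines a consistent assignment (say $x_i=1$ precisely when the $x_i$-labelled points carry no disk). I then analyse each CVC-path: it has an even number of blue points, so by Lemma \ref{lempth2} a maximum configuration that is forbidden a disk on one endpoint is unique and necessarily carries a disk on the other endpoint. Hence, if the literal feeding a path is \emph{false}, the radius-$a$ disk on the corresponding variable-gadget point blocks the variable-end of the path, forcing a disk on the clause-end, i.e.\ on the last blue point reaching that clause; conversely a \emph{true} literal leaves the variable point free and permits the clause-end last blue point to remain empty.

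Finally I would close the loop at the clause-gadgets. By Lemma \ref{lemclpt2}, a clause-gadget at its maximum must touch the last blue point of at least one incident CVC-path, and touching it requires that point to be empty, which by the previous step is possible only when the corresponding literal is true. Thus every clause whose gadget is at its maximum has a satisfied literal, so the extracted assignment satisfies $\theta$. Consequently $A^*=V$ implies $\theta$ is satisfiable, and therefore $A^*<V$ whenever $\theta$ is unsatisfiable, which is the claim. I expect the propagation step to be the main obstacle: one must verify the two geometric overlap facts (that a radius-$a$ disk on a false literal's variable point indeed blocks the adjacent path point, and that the clause's radius-$a$ green disk genuinely overlaps the last blue point it is said to ``touch'') and combine them correctly with the even-parity uniqueness of Lemma \ref{lempth2}, since this is exactly the mechanism that encodes satisfiability into the area.
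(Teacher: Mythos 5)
Your proposal is correct, and it runs on exactly the machinery the paper's own proof uses: the partition of $P$ into variable-gadgets, clause-gadgets, CVC-paths and excess points, subadditivity (Lemma \ref{lemsum}), the per-gadget maxima (Lemmas \ref{lemline}, \ref{lemclpt}, \ref{lemvrpt}, \ref{lempth2}), the ``must touch a last blue point'' property (Lemma \ref{lemclpt2}), and the even-parity uniqueness of path configurations. The difference is in how the conclusion is extracted. The paper argues directly and quantitatively: reading the truth assignment off the variable-gadgets, it notes that an unsatisfied clause forces some green disk to touch the last blue point of a CVC-path whose variable end is also blocked, so that path achieves only $\pi(\frac{k}{2}-1)$ rather than $\pi\frac{k}{2}$, giving a total of at most the threshold minus $\pi\beta$ with $\beta \ge 1$ unsatisfied clauses. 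Your contrapositive/rigidity argument (``total $=V$ forces every gadget to its own maximum, which forces a satisfying assignment'') yields only the strict inequality --- which is all the lemma asserts --- but forgoes the explicit additive gap of at least $\pi$ that the paper's accounting provides and that makes the reduction robust. In exchange, your version is cleaner on a point the paper glosses over: the paper asserts that the optimum achieves ``exactly'' the gadget maxima on variable- and clause-gadgets, which is not justified a priori, since the optimum could in principle trade a small loss on one gadget for gains elsewhere; your equality-case argument never needs that assumption, because any such trade already produces strict inequality. Both proofs lean on the same geometric blocking facts (a radius-$a$ disk on a false literal's red point excludes a blue disk at the variable end of its path; a green disk touching a last blue point excludes a blue disk there), which you rightly identify as the crux to be verified.
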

 \begin{proof}
 Let $\theta$ is not satisfiable. There is no difficulty to have a total 
 area of $3\pi mn a^2$ from the variable-gadgets corresponding to $n$ variables since the used disks at 
 the red points of the variable gadgets are much larger than the disks used for the blue points of the CVC-paths near them. 
 Similarly, the disks used for the green points are much larger 
 than the disks used for the blue points in it and also the last point on 
 its adjacent three CVC-paths. Moreover, among the four blue points $p_9$, $p_{10}$, $p_{11}$ and $p_{12}$, 
 $p_{10}$ and $p_{12}$ can always be used for placing disks of radius $b=1$. Also, exactly two such disks can be placed irrespective of any arbitrary assignment of disks 
 among the other 8 green points in that clause-gadget. Thus, for each 
 clause-gadget exactly  $\pi(2m +  2m(4-2\sqrt 2)a^2)$ area is achieved 
 in the optimum solution in the MADP problem with the point set $P$ 
 corresponding to $\theta$. Now, let us consider the CVC-paths. As $\theta$ 
 is not satisfiable, for each truth assignment $X$ of the variables there is at least one clause, say $\theta_\alpha$, that is not satisfiable. In the 
 optimum disk assignment of the green points of $\theta_\alpha$, at least 
 one of the green disks must touch the last (blue) point of the corresponding  CVC-path. For the CVC-path, that is not touched by any green disk, one can 
put a disk (of radius $b=1$) at its last point, and a total area of 
$\pi\frac{k}{2} $ is achievable on the blue points along that path even if 
 disk can not be placed at the other end of that CVC-path. But, if the last vertex of a CVC-path is touched by a green disk, no blue disk 
 can be placed at its either end. Thus only a total area of  $\pi(\frac{k}{2}-1)$
 is achievable. Thus, for the truth assignment $X$, the total area obtained 
 in the optimum solution of the MADP problem is at most $\pi ((11-4\sqrt{2})ma^2 + (\frac{K'}{2}+2m)-\beta)$, where $\beta$ is the number of non-satisfied clause(s).
 The result follows from the fact that if $\theta$ is not satisfiable, $\beta \geq 1$ for every truth assignment of the variables.   
   \end{proof}
Thus, we can check the satisfiability of a PRM-3SAT formula with $m$ clauses and $n$ variables by generating 
the points $P$ as described, and then observing whether the total 
area of the disks in the optimum solution of the MADP 
problem on the point set $P$ is equal to or less than $\pi ((2n+(8-4\sqrt{2}))ma^2 + (\frac{K'}{2}+2m))$. As PRM-3SAT problem is NP-complete \cite{plane-rec}, we have the following result.

\begin{theorem}
 The problem of finding disks of a maximum total area centered on a given set of points, is NP-hard.
\end{theorem}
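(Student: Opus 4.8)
The plan is to present a polynomial-time many-one reduction from PRM-3SAT, whose NP-completeness is established in \cite{plane-rec}, to the decision version of the MADP problem, using exactly the point set $P$ assembled in the previous subsection from the variable-gadgets, clause-gadgets, and CVC-paths. Concretely, I would fix the threshold $T=\pi ((2n+(8-4\sqrt{2}))ma^2 + (\frac{K'}{2}+2m))$ and prove the biconditional: $\theta$ is satisfiable if and only if $\Delta(P)\ge T$. Once this equivalence is in place, any procedure that computes an optimal disk configuration on $P$ (or merely decides whether its total area reaches $T$) decides satisfiability of $\theta$, so NP-hardness of MADP is immediate.

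Because the two directions have already been isolated as lemmas, the theorem is largely an assembly step. For the forward direction, a satisfying assignment of $\theta$ yields, via Lemma \ref{lemoneway}, an explicit non-overlapping disk configuration of total area exactly $T$, whence $\Delta(P)\ge T$; in fact subadditivity (Lemma \ref{lemsum}) gives the matching upper bound $\Delta(P)\le T$, so equality holds in this case. For the converse I would argue the contrapositive using Lemma \ref{lemoppway}: if $\theta$ is unsatisfiable then every truth assignment leaves some clause unsatisfied, so $\beta\ge1$, and the lemma forces the total area strictly below $T$, giving $\Delta(P)<T$. Combining the two yields the stated equivalence, so checking whether $\Delta(P)$ attains $T$ decides PRM-3SAT.

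It then remains to confirm that the reduction is genuinely polynomial. Here I would verify that $P$ has cardinality polynomial in $m$ and $n$: there are $8m+4$ points per variable-gadget across $n$ variables, a constant number of green and blue points per clause-gadget across $m$ clauses, and blue points spaced at unit distance along the $O(m)$ CVC-paths, together with the $K'-K_b$ excess points. With $b=1$ and $a=6000m^2n$, step (e) of the construction guarantees that all coordinates are integers of bit-length polynomial in the input size, so both $P$ and the threshold $T$ are computable in polynomial time.

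The genuinely delicate point — what I would regard as the real obstacle were Lemma \ref{lemoppway} not already in hand — is the separation argument: one must guarantee that the single unit $\pi b^2$ of area lost on the CVC-path of an unsatisfied clause cannot be recovered anywhere else in $P$. This is exactly what the scaling $a=6000m^2n\gg b=1$ secures; by step (d) the aggregate area of \emph{all} blue disks is at most a $\tfrac{1}{100}$ fraction of a single radius-$a$ green or red disk, so the radius-$a$ disks dominate and a green disk blocked at a clause-gadget cannot be ``paid back'' by any rearrangement of the small blue disks. With this gap certified, $T$ cleanly separates the satisfiable from the unsatisfiable instances, and the theorem follows.
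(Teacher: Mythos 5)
Your proposal is correct and follows essentially the same route as the paper: the paper's proof is precisely the assembly of Lemma \ref{lemoneway} and Lemma \ref{lemoppway} into the threshold test $\Delta(P) = \pi ((2n+(8-4\sqrt{2}))ma^2 + (\frac{K'}{2}+2m))$ versus strictly below it, followed by the appeal to NP-completeness of PRM-3SAT from \cite{plane-rec}. Your additional remarks (the explicit polynomial-size/integer-coordinate check and the matching upper bound via Lemma \ref{lemsum}) are sound elaborations of steps the paper leaves implicit, not a different argument.
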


\subsection{MADP for axis-parallel squares}
 Now, we demonstrate that the MADP problem remains NP-hard when the objects are 
 axis-parallel squares instead of disks.
  
Our reduction, as before, follows from PRM-3SAT. In fact, we just modify the point configuration for disks to get the reduction for squares.
Our clause patterns are now simplified, with only six points, shown in Figure \ref{squareclause}(a). Only three maximum configurations are possible,
shown in Figures \ref{squareclause}(b),  \ref{squareclause}(c) and \ref{squareclause}(d).
The variable patterns remain identical with only two possible maximum configurations, as shown in Figures \ref{squarepath1}(a) and  \ref{squarepath1}(b).
The CVC-paths also  remain identical with only two possible maximum configurations  having no square around one of the end points, as shown in Figures \ref{squarepath2}(a), \ref{squarepath2}(b),
\ref{squarepath2}(c) and \ref{squarepath2}(d). The reduction proceeds as before, with the squares giving a maximum area of
$4(4ma^2+ +2mna^2+50K+m)$ if and only if the corresponding PRM-3SAT formula is satisfiable.

\begin{figure}[h] 
\centering\includegraphics[scale=0.2]{}
\vspace{-0.1in}
\caption{Clause gadget: MADP problem for squares}
\label{squareclause}
\end{figure}

\begin{figure}[h] 
\centering\includegraphics[scale=0.12]{}
\vspace{-0.1in}
\caption{Variable gadget: MADP problem for squares}
\label{squarepath1}
\end{figure}

\begin{figure}[h] 
\centering\includegraphics[scale=0.18]{}
\vspace{-0.1in}
\caption{CVC-path gadget: MADP problem for squares}
\label{squarepath2}
\end{figure}

\section{Approximation algorithm}\label{section4}
In this section, we show that the optimum solution for the MPDP problem proposed in \cite{eppstein}
gives a 2-factor approximation result for the MADP problem. We also propose a PTAS for the problem.

\subsection{$2$-factor approximation algorithm}\label{1D_Approx}
Given a set of points $P$ in $\IR^2$, let 
$R=\{r_i, i=1,2,\ldots,n\}$ be the set of radii of 
the points in $P$ obtained by the optimum solution for  MPDP  problem \cite{eppstein}.
It is clear that any feasible solution of the MPDP is  a feasible 
solution of the MADP problem. We show that an optimal radii returned by the MPDP problem produce at most $2\times 
OPT$ area for the corresponding MADP problem, where $OPT$ is the optimum solution of that MADP problem. 

\begin{lemma} \label{x}
\cite{eppstein} The maximum sum of radii of non-overlapping disks, centered at points 
$p_i \in P$, equals half of the minimum total edge length of a collection of vertex-disjoint 
cycles (allowing 2-cycles) spanning the complete geometric graph on the points $p_i\in P$ 
with each edge having length equal to the distance between the end-points of that edge. 
\end{lemma}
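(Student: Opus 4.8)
The plan is to prove this as an instance of linear programming duality, identifying the dual polytope with the minimum-weight cycle-cover (degree-$2$ subgraph) problem. First I would write the MPDP explicitly as the linear program maximizing $\sum_{i=1}^n r_i$ subject to $r_i + r_j \le \mathrm{dist}(p_i,p_j)$ for all $i \ne j$ and $r_i \ge 0$. This program is feasible (take all $r_i = 0$) and bounded (each $r_i \le \mathrm{dist}(p_i,p_j)$ for any fixed $j$), so strong duality applies and the primal optimum equals the optimum of its dual.

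Next I would form the dual. Assigning a variable $y_{ij} \ge 0$ to each constraint $r_i + r_j \le d_{ij}$ (writing $d_{ij} = \mathrm{dist}(p_i,p_j)$), the dual minimizes $\sum_{i<j} d_{ij} y_{ij}$ subject to $\sum_{j \ne i} y_{ij} \ge 1$ for every $i$, with $y_{ij}\ge 0$. Because the weights $d_{ij}$ are positive and the dual is a minimization, no optimal solution over-covers a vertex, so the inequalities can be taken tight: $\sum_{j\ne i} y_{ij} = 1$ for all $i$. Substituting $z_{ij} = 2 y_{ij}$ turns this into the problem of minimizing $\tfrac12 \sum_{i<j} d_{ij} z_{ij}$ over the degree-$2$ polytope $\{z \ge 0 : \sum_{j \ne i} z_{ij} = 2 \ \forall\, i\}$ on the complete graph. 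Thus the dual optimum equals exactly half the minimum of $\sum_{i<j} d_{ij} z_{ij}$ over this polytope.

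The crux---and the step I expect to be the main obstacle---is to show that this degree-$2$ LP is integral and that its integral optima are precisely the vertex-disjoint cycle covers allowing $2$-cycles. I would argue at the level of extreme points: a basic feasible solution of $\{z \ge 0 : Mz = \mathbf{2}\}$, where $M$ is the vertex--edge incidence matrix of $K_n$, has a support whose corresponding columns of $M$ are linearly independent, and for an (undirected) incidence matrix this forces every connected component of the support to be a tree or a tree-plus-one-odd-cycle. Feeding this back into the equalities $\sum_{e \ni v} z_e = 2$ collapses each case: a leaf of a tree component forces weight $2$ on its pendant edge, which cascades to show a tree component must be a single doubled edge, i.e.\ a $2$-cycle; and an odd cycle in which every vertex has degree two forces all cycle edges to have weight $1$ (the odd length makes the alternating system $z_e + z_{e'} = 2$ have the unique all-ones solution), while any pendant tree again yields a contradiction. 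Hence every extreme point is integral with $z_e \in \{0,1,2\}$, its support being a disjoint union of odd cycles (unit weights) and doubled edges ($2$-cycles)---exactly a cycle cover in the stated sense, of total edge length $\sum_e z_e d_e$. I would stress that the freedom to use $2$-cycles is essential here: without it (i.e.\ with the extra bound $z_e \le 1$) the polytope is not integral and would require blossom/odd-set inequalities.

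Finally I would assemble the chain: primal optimum $=$ dual optimum $= \tfrac12 \min \sum_{i<j} d_{ij} z_{ij} = \tfrac12\,(\text{minimum total edge length of a cycle cover allowing } 2\text{-cycles})$, which is the claimed identity. The only genuinely non-routine ingredient is the extreme-point characterization of the degree-$2$ polytope; everything else is bookkeeping with LP duality together with the observation that minimization against positive weights tightens the covering constraints.
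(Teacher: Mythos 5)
The paper itself offers no proof of this lemma---it is imported verbatim from Eppstein's paper---so there is no internal argument to compare against; judged on its own merits, your duality architecture is the right one, and your extreme-point analysis of the degree-$2$ polytope is correct and standard. The genuine gap is at the step you dismiss as bookkeeping: the claim that \emph{``because the weights $d_{ij}$ are positive and the dual is a minimization, no optimal solution over-covers a vertex, so the inequalities can be taken tight.''} Positivity of the weights does not imply this, and the claim is false for general positive weights. Take three vertices with $d_{12}=d_{13}=1$ and $d_{23}=100$. The covering LP (your dual) has optimum $2$, attained by $y_{12}=y_{13}=1$, $y_{23}=0$, which over-covers vertex $1$; moreover \emph{every} optimal solution over-covers, because the only feasible point of the equality (degree-$2$) version is $y_{12}=y_{13}=y_{23}=\tfrac12$, of cost $51$. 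Correspondingly, the maximum sum of radii is $2$, while half the minimum cycle-cover weight is $\tfrac{1}{2}(1+1+100)=51$, so the lemma itself is false for non-metric weights. The flaw in the reasoning is that reducing $y_{ij}$ at an over-covered vertex $i$ destroys feasibility at $j$ when $j$ is exactly covered, so over-coverage cannot simply be trimmed away. Since your proof nowhere uses that the $d_{ij}$ are Euclidean distances, this failure was inevitable: some step had to consume the geometry, and this is the one.

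The repair is a shortcutting argument using the triangle inequality, which Euclidean distances satisfy. Among optimal dual solutions, choose one minimizing the total excess $\sum_i\bigl(\sum_{j\neq i} y_{ij}-1\bigr)$ (a linear function over the optimal face, so the minimum is attained). No positive $y_{ij}$ can join two over-covered vertices, else reducing it strictly lowers the cost. Hence an over-covered vertex $i$ has at least two positive edges $ij$, $ik$ with $j,k$ exactly covered (a single positive edge of value $\geq 1+\epsilon$ would over-cover its other endpoint). Decreasing $y_{ij}$ and $y_{ik}$ by $\delta>0$ and increasing $y_{jk}$ by $\delta$ preserves all covering constraints, changes the cost by $\delta\,(d_{jk}-d_{ij}-d_{ik})\leq 0$, and strictly decreases the excess---contradicting the choice of solution. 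Therefore some optimal dual solution satisfies all constraints with equality, and the rest of your chain (strong duality, the basic-feasible-solution analysis forcing doubled edges and unit-weight odd cycles, and the identification of those vertices with cycle covers allowing $2$-cycles) goes through. Note that your assessment of where the difficulty lies is inverted: the extreme-point characterization you flag as the crux is classical, whereas the covering-to-partition step you treated as routine is the only place where the geometric hypothesis is indispensable.
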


\begin{lemma} \label{y}
\cite{eppstein} In the minimum total edge length of a collection of vertex-disjoint 
cycles mentioned in Lemma \ref{x}, each cycle is either of odd length or a  2-cycles (i.e., a single edge).
\end{lemma}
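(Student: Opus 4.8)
The plan is to argue by a local exchange: any cycle of \emph{even} length $\ge 4$ that appears in a minimum-length cycle cover can be replaced, without increasing the total length, by a collection of $2$-cycles covering exactly the same vertices. Iterating this step removes every even cycle of length $\ge 4$, leaving a minimum cover in which each cycle is odd or a $2$-cycle, which is the assertion of Lemma \ref{y}.

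First I would pin down the cost convention that makes the factor $\tfrac12$ in Lemma \ref{x} correct: an odd cycle of length $\ell$ contributes the sum of its $\ell$ edge lengths (each edge once), whereas a $2$-cycle on a pair $\{u,v\}$ is the degenerate cycle $u\to v\to u$ and contributes $2\,\text{dist}(u,v)$, the single edge counted twice so that each vertex still has degree two. A two-point check confirms this: the only cover of $\{u,v\}$ is one $2$-cycle of length $2\,\text{dist}(u,v)$, and half of it equals $\max(r_u+r_v)=\text{dist}(u,v)$, as required by Lemma \ref{x}.

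Next, suppose a minimum cover $\mathcal F$ contains an even cycle $C=v_1v_2\cdots v_{2k}v_1$ with $2k\ge 4$ and total length $w(C)=\sum_{i=1}^{2k}\text{dist}(v_i,v_{i+1})$ (indices mod $2k$). The edges of $C$ split into two alternating perfect matchings of $V(C)$,
\[
M_1=\{v_1v_2,\,v_3v_4,\,\ldots,\,v_{2k-1}v_{2k}\},\qquad
M_2=\{v_2v_3,\,v_4v_5,\,\ldots,\,v_{2k}v_1\},
\]
with $w(M_1)+w(M_2)=w(C)$. I replace $C$ by the $k$ two-cycles supported on the lighter of $M_1,M_2$. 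Since these $2$-cycles touch only the vertices of $C$, which are otherwise unused because $\mathcal F$ is vertex-disjoint, the outcome is again a vertex-disjoint cycle cover spanning all the points. Its total length changes by $2\min\{w(M_1),w(M_2)\}-w(C)\le\bigl(w(M_1)+w(M_2)\bigr)-w(C)=0$, so the new cover is no longer than $\mathcal F$ and is therefore also minimum; the change is strict whenever $w(M_1)\ne w(M_2)$. Applying this to each remaining even cycle produces a minimum cover built only from odd cycles and $2$-cycles.

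The delicate point is the bookkeeping of the $2$-cycle cost: the whole estimate rests on a $2$-cycle on $\{u,v\}$ costing $2\,\text{dist}(u,v)$ rather than $\text{dist}(u,v)$, as it is exactly this factor that yields $2\min\{w(M_1),w(M_2)\}\le w(C)$. I would also flag that the inequality can be tight, for instance for four points at the corners of a square of side $s$, where the side $4$-cycle and the two side $2$-cycles all have length $4s$; hence the cleanest reading of the lemma is ``there exists a minimum cover in which every cycle is odd or a $2$-cycle.'' The exchange never increases the length and strictly decreases it unless the two alternating matchings happen to be equally heavy, so even cycles are never \emph{forced} and can always be eliminated.
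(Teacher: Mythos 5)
Your proof is correct. Note that the paper itself offers no argument for this lemma at all---it is imported verbatim from \cite{eppstein}---so there is no internal proof to compare against; what you have written is essentially a reconstruction of Eppstein's own argument: split an even cycle of length at least $4$ into its two alternating perfect matchings, double the lighter one into $2$-cycles (whose cost convention, as you verify, counts each edge twice), and conclude the replacement does not increase the total length. Your closing caveat is also well taken: since the exchange can be cost-neutral (your square example), the statement is only correct under the existential reading ``there exists a minimum cycle cover in which every cycle is odd or a $2$-cycle,'' which is indeed how \cite{eppstein} phrases it and is all that the paper's application in Lemma \ref{lx} requires, namely that the optimum MPDP radii can be partitioned along such a cover with each disk touching its cycle-neighbors.
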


The implication of Lemma \ref{x} and \ref{y} is that {\em in the optimum solution of the 
MPDP problem, each disk touches its neighboring disk(s) in the cycle in which it appears.}

In \cite{eppstein}, an $O(n^{1.5})$ time algorithm is proposed to compute the minimum length 
cycle cover $\cal C$ of the complete geometric graph $G$ with a set $P$ of $n$ points on 
the plane. From the geometric property of the Euclidean distances, they show that if a 
subgraph $G'$ of $G$ is formed by removing all the edges $(p_i,p_j)$ satisfying 
$\text{dist}(p_i,{\cal N}(p_i)) + \text{dist}(p_j,{\cal N}(p_j)) < \text{dist}(p_i,p_j)$, 
then the minimum weight  cycle cover of $G'$ remains same as that in $G$. 

\begin{lemma}\label{lx}
For a given set of points $P$ arbitrarily placed in $\IR^2$, 
the radii $\{r_i, i=1,2,\ldots,n\}$ in the optimum solution of the MPDP problem is a $2$-approximation 
result for the MADP problem for the point set $P$. 
\end{lemma}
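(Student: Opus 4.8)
The plan is to compare the MADP optimum against the MPDP solution cycle-by-cycle, exploiting the structure guaranteed by Lemmas \ref{x} and \ref{y}. Let $\{\rho_i\}$ denote an optimal MADP solution (so $OPT=\pi\sum_i\rho_i^2$) and let $\{r_i\}$ be the MPDP-optimal radii. By Lemmas \ref{x} and \ref{y}, the touching structure of $\{r_i\}$ is a vertex-disjoint cover of $P$ by $2$-cycles and odd cycles, and along every cycle edge $(p_s,p_{s+1})$ the two disks touch, i.e. $r_s+r_{s+1}=d_s$, where $d_s=\text{dist}(p_s,p_{s+1})$. These cycles partition $P$, so it suffices to prove, for each cycle $C$, the local inequality $\sum_{p_i\in C}\rho_i^2\le 2\sum_{p_i\in C}r_i^2$; summing over all cycles and multiplying by $\pi$ then yields $OPT\le 2\pi\sum_i r_i^2$, which is exactly the assertion that the MPDP area is at least $OPT/2$. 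Alternatively one may first invoke the subadditivity of $\Delta$ from Lemma \ref{lemsum} to reduce to independent per-cycle subproblems; either route works, since the restriction of $\{\rho_i\}$ to the vertices of $C$ is feasible for the constraints internal to $C$.

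The heart of the argument is this per-cycle inequality, which I would establish purely from $\rho_i\ge 0$ and the edge constraints $\rho_s+\rho_{s+1}\le d_s=r_s+r_{s+1}$. For an odd cycle I would chain the elementary estimates
\begin{equation*}
\sum_{p_s\in C}\rho_s^2 \;\le\; \tfrac12\sum_{(s,s+1)}(\rho_s+\rho_{s+1})^2 \;\le\; \tfrac12\sum_{(s,s+1)}d_s^2 \;=\; \tfrac12\sum_{(s,s+1)}(r_s+r_{s+1})^2 \;\le\; \sum_{(s,s+1)}(r_s^2+r_{s+1}^2) \;=\; 2\sum_{p_s\in C}r_s^2 .
\end{equation*}
Here the first step uses $\sum_{(s,s+1)}(\rho_s+\rho_{s+1})^2=2\sum_s\rho_s^2+2\sum_s\rho_s\rho_{s+1}\ge 2\sum_s\rho_s^2$ (each vertex lies on exactly two cycle edges, and the cross terms are nonnegative); the second uses feasibility $0\le\rho_s+\rho_{s+1}\le d_s$; the equality substitutes the touching relations $d_s=r_s+r_{s+1}$; and the last step is the crude bound $(r_s+r_{s+1})^2\le 2(r_s^2+r_{s+1}^2)$ summed over edges, again using that every vertex is incident to two edges. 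For a $2$-cycle $\{i,j\}$ the same conclusion follows directly, since $\rho_i^2+\rho_j^2\le(\rho_i+\rho_j)^2\le d_{ij}^2=(r_i+r_j)^2\le 2(r_i^2+r_j^2)$.

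The two places where a factor of $2$ is spent --- once when discarding the nonnegative cross terms $2\sum_s\rho_s\rho_{s+1}$, and once in the bound $(r_s+r_{s+1})^2\le 2(r_s^2+r_{s+1}^2)$ --- are precisely what make the estimate sharp, so the main thing to get right is that these two losses combine to give the constant $2$ rather than $4$. I would sanity-check this on the extreme instance of two mutually nearest points at distance $d$: there the MPDP solution may assign $r_i=r_j=\tfrac{d}{2}$ (area $\pi d^2/2$) while the MADP optimum places a single disk of radius $d$ (area $\pi d^2$), realizing the ratio $2$ and confirming that the bound cannot be improved. The only genuine subtlety is bookkeeping: treating $2$-cycles consistently (the single edge counted with multiplicity two, so the ``degree-two'' identity used in the first and last steps holds), and noting that dropping the non-adjacent intra-cycle constraints only enlarges the feasible region and hence keeps the displayed chain a valid upper bound on $\Delta(C)$.
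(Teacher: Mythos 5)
Your proof is correct and follows essentially the same route as the paper's: decompose the MPDP solution into touching $2$-cycles and odd cycles via Lemmas \ref{x} and \ref{y}, prove the per-edge estimate $\rho_s^2+\rho_{s+1}^2\le(\rho_s+\rho_{s+1})^2\le d_s^2=(r_s+r_{s+1})^2\le 2(r_s^2+r_{s+1}^2)$, and sum over the edges of each cycle using the fact that every vertex has degree two. The only additions beyond the paper's argument are cosmetic (the explicit chained display and the two-point tightness example), so nothing further is needed.
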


\begin{proof}
As mentioned, MPDP algorithm generates the cycles ${\cal C}=\{C_1, C_2, 
\ldots, C_k\}$.  We need to show that $\sum_{\alpha=1}^n r_\alpha^2 \geq 
\frac{1}{2}\sum_{\alpha=1}^n \rho_\alpha^2$, where $\rho_\alpha$ is the radius in the optimum solution of the MADP problem for the point $p_\alpha$.
We show that $\sum_{p_\alpha \in C_i}^n r_\alpha^2 \geq 
\frac{1}{2}\sum_{p_\alpha\in C_i}^n \rho_\alpha^2$ for each cycle $C_i\in {\cal C}$.   As each disk participates in exactly one
of the cycles,  agreegating these relations for all the cycles $C_i, i=1,2,\ldots, k$, we will have the desired result.
Let us consider the following two cases separately.
\begin{description}
\item[$\mathbf{C_i}$ is a 2-cycle $\mathbf{(p_\alpha,p_\beta)}$:] Let $r=\text{dist}(p_\alpha,p_\beta)$. As the disks
centered at  $p_\alpha$ and $p_\beta$ are touching each other, let $r_{\alpha}=\frac{r}{2}-\delta$ and  
$r_{\beta}=\frac{r}{2}+\delta$.  Thus, $r_{\alpha}^2+r_{\beta}^2  \geq \frac{r^2}{2}$.

 Note that in the optimum solution of the 
 MADP problem, the disks for $p_\alpha, p_\beta$ may not be touching, but $\rho_\alpha +\rho_\beta \leq \text{dist}(p_\alpha,p_\beta)$.
 So, the upper bound of the sum of squares of the radii in the optimum solution is:
 $\rho_\alpha^2 +\rho_\beta^2 \leq (\rho_\alpha +\rho_\beta)^2  \leq  (\text{dist}(p_\alpha,p_\beta))^2 = r^2$.

 Thus, for the two-cycle  ${\cal C}_i=
(p_\alpha,p_\beta)$, we have $r_\alpha^2+r_\beta^2 \geq \frac{1}{2}(\rho_\alpha^2 +\rho_\beta^2)$.

\item[$\mathbf{C_i}$ is an odd cycle:] Let the length of the cycle 
be $m$. Without loss of generality, assume that the vertices be $p_1, p_2, \ldots, p_m$.
For each edge $(p_\alpha,p_{\alpha+1})$ of this cycle (where 
the indices are numbered modulo $m$), we have $r_\alpha^2+r_{\alpha+1}^2 \geq 
\frac{1}{2}(\rho_\alpha^2 +\rho_{\alpha+1}^2)$ (as explained in the 
earlier case). Adding these inequalities for $\alpha=1,2,\ldots,m$, 
we have $2\sum_{\alpha=1}^m r_\alpha^2 \geq \frac{1}{2}[2\sum_{\alpha=1}^m \rho_\alpha^2]$.
Ignoring the factor 2 in both sides, we have the result.
\end{description}\vspace{-0.2in}
\end{proof}

Combining Lemma \ref{lx} with the time complexity result in \cite{eppstein}, we have the following result.
\begin{theorem}
 For a given set of points $P$ arbitrarily placed in $\IR^2$, one can compute a 2-approximaton result of
 the MADP problem in $O(n^{\frac{3}{2}})$ time.
\end{theorem}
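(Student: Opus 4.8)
The plan is to combine the two results already established in this section: the approximation-quality guarantee of Lemma \ref{lx} and the running-time bound for the MPDP algorithm quoted from \cite{eppstein}. The algorithm itself requires no new computation beyond solving the MPDP instance. First I would run Eppstein's algorithm on the complete geometric graph $G$ over $P$ to obtain the minimum-length vertex-disjoint cycle cover, which (by Lemmas \ref{x} and \ref{y}) yields the optimal radii $R=\{r_1,r_2,\ldots,r_n\}$ for the sum-of-radii (MPDP) objective. I would then simply output these same radii as the proposed solution for the MADP problem, since every feasible MPDP assignment is feasible for MADP (the two problems share the identical constraint set $r_i+r_j\le \text{dist}(p_i,p_j)$).

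For the quality bound, I would invoke Lemma \ref{lx} directly: it establishes that $\sum_{i=1}^n r_i^2 \geq \tfrac12 \sum_{i=1}^n \rho_i^2$, where $\{\rho_i\}$ are the radii of an optimal MADP solution. Multiplying by $\pi$ shows that the total area $\pi\sum_i r_i^2$ achieved by our assignment is at least $\tfrac12$ of the optimum MADP area $OPT=\pi\sum_i\rho_i^2$, which is exactly the claimed $2$-approximation.

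For the time bound, I would appeal to the fact recalled just before Lemma \ref{lx}, namely that the minimum-weight cycle cover of $G$ (equivalently, of the reduced subgraph $G'$ obtained after deleting the redundant edges) can be computed in $O(n^{3/2})$ time. Squaring the returned radii and summing them is an additional $O(n)$ work, which does not affect the asymptotics. Hence the overall running time is dominated by the MPDP solve and stays $O(n^{3/2})$.

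There is essentially no genuine obstacle here, as both the correctness factor and the complexity are inherited wholesale; the only point worth stating explicitly is that no post-processing beyond reusing the MPDP radii is performed, so feasibility for MADP is automatic and the complexity is unchanged. I would phrase the proof as a one-line assembly: apply Lemma \ref{lx} for the factor $2$ and the $O(n^{3/2})$ cycle-cover bound from \cite{eppstein} for the time, concluding the theorem.
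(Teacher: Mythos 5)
Your proposal is correct and matches the paper's own argument exactly: the paper likewise obtains this theorem by outputting the optimal MPDP radii computed via Eppstein's $O(n^{3/2})$ cycle-cover algorithm and invoking Lemma~\ref{lx} for the factor-$2$ guarantee. No further commentary is needed.
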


\remove{
\subsubsection{Experimental results}
 We performed a thorough experimental study on this problem by considering random instances (see Table \ref{table}). We considered 
point sets of different size $n$, and generated 50 samples\footnote{Since generating the optimum result is time 
consuming, the table entries for $n$ = 40 and 50, the average and maximum 
is computed only for 5 samples.}, where each sample consists of $n$ points. For each sample, we 
formulated the quadratic programming problem, and run LINDO software to generate the optimum solution MADP$_{opt}$. 
We also run the MPDP algorithm \cite{eppstein}. Let $\text{MADP}_{sol}=\sum_{i=1}^n r_i^2$, where $\{r_1, r_2, \ldots, r_n\}$ is the optimum solution 
of the MPDP problem. For each sample, we computed 
the ratio $\frac{\text{MADP}_{opt}}{\text{MADP}_{sol}}$, and compute the {\em average} and {\em maximum}  of these 
ratios. Finally,  we report MADP$_{avg}$ and MADP$_{max}$ for each $n$. Though, we could only show that the result 
of the MADP problem using the radii obtained by the MPDP 
algorithm is a 2-approximation result, it shows much better performance in our experiment on random instances.

\begin{table}[h] \label{table}
\caption{Experimental result}
\begin{center}
\begin{tabular}{|c|c|>{\columncolor[gray]{0.8}}c|}\hline
& \multicolumn{2}{c|}{Sum of square of radii} \\
& \multicolumn{2}{c|}{obtained by }\\
& \multicolumn{2}{c|}{MPDP algorithm}\\ 
    \cline{2-3} 
n & average & maximum \\ \hline\hline
10 & 1.18383 & 1.5467  \\ \hline\hline
20 & 1.16704 & 1.38786   \\ \hline\hline
30 & 1.1568 & 1.39329  \\ \hline\hline
40 & 1.15132 & 1.18855 \\ \hline\hline 
50 & 1.1728 & 1.23154 \\ \hline

\end{tabular} 
\end{center}
\end{table}

\remove{
\begin{table}
\caption{Experimental result}
\begin{center}
\begin{tabular}{|c|c|>{\columncolor[gray]{0.8}}c|c|>{\columncolor[gray]{0.8}}c|}\hline
& \multicolumn{2}{c}{MPDP Algo} & \multicolumn{2}{|c|}{\bf 4-approx algo}\\
    \cline{2-5} 
n & average & maximum & average & maximum \\ \hline\hline
10 & 1.18383 & 1.54670 & 2.46033 & 3.67220 \\ \hline\hline
20 & 1.16704 & 1.38786 & 2.38212 & 3.04162  \\ \hline\hline
30 & 1.1568 & 1.39329 & 2.37095 & 2.83190 \\ \hline
\end{tabular} 
\end{center}
\end{table}
}
}

\subsection{PTAS} \label{ptas}
In this section, we propose a PTAS for the MADP problem. In \cite{Erlebach}, Erlebach et al. proposed a $(1+\frac{1}{k})$-factor 
approximation algorithm for the maximum weight independent set  for the intersection graph of a set of weighted disks of arbitrary 
size. It runs in $n^{O(k^2)}$ time. We will use this algorithm in designing our PTAS. 

For each point $p_i\in P$,  let the maximum possible radius be   $\ell_i=\text{dist}(p_i,{\cal N}(p_i))$. Thus, the maximum possible area be 
$\alpha_i = \pi \ell_i^2$. Given an integer $k$, we compute $h_i=\frac{\alpha_i}{k}$, and define $k+1$ circles ${\cal C}_i=\{C_0^i, 
C_1^i, \ldots, C_k^i\}$ centered at $p_i$ with area $\{0, h_i, 2h_i, \ldots, kh_i\}$ (see Figure \ref{ptas1}). Each disk is assigned weight equal to its 
area. Now we consider all the disks $\cup_{i=1}^n {\cal C}_i$, and use the algorithm of \cite{Erlebach} to compute the maximum 
weight independent set (MWIS) $\cal A$. Note that 
the number of disks centered at any point $p_i$ present in both the optimum solution and in our algorithm for the MWIS problem 
of $\cup_{i=1}^n {\cal C}_i$ is exactly one.

\begin{figure}[h]

\centering\includegraphics[scale=0.55]{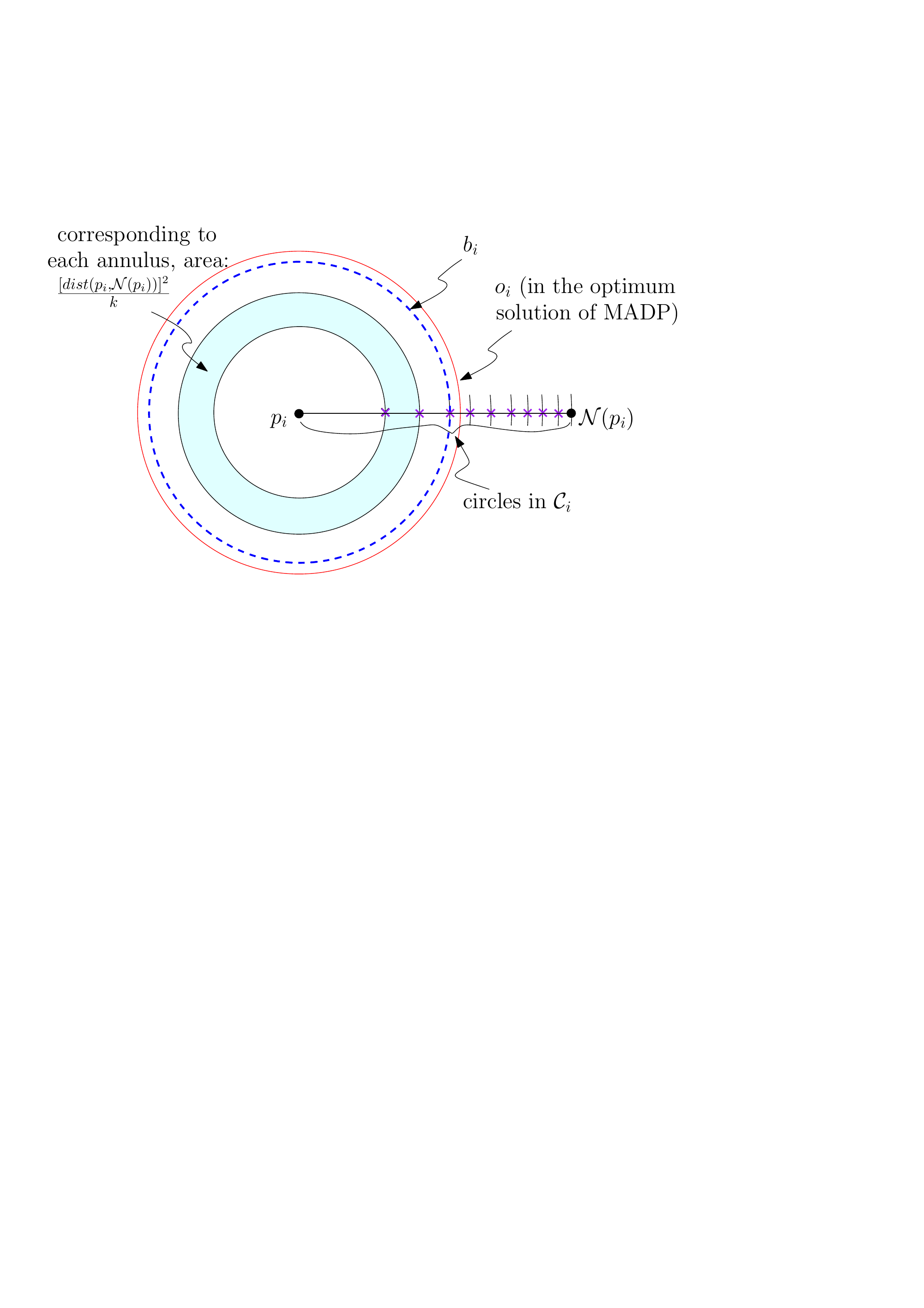}
\caption{Demonstration of PTAS}
\label{ptas1}
\end{figure}

Let $o_i$ and $a_i$ be the disks centered at $p_i$ in the optimum solution and in our solution ($\cal A$) respectively, and $O_i$, $A_i$ be their 
respective area.
Let $\Theta = \sum_{i=1}^n A_i$ be the solution obtained by our algorithm, and $OPT={\sum_{i=1}^n O_i}$ be the value of the optimum solution. 
We need to analyze the bound on  $\frac{OPT}{\Theta}$.

Let $\widetilde{OPT}$ be the optimum solution of the MWIS problem among the set of disks  $\cup_{i=1}^n {\cal C}_i$. 
Thus, $\frac{OPT}{\Theta} = \frac{OPT}{\widetilde{OPT}}\times \frac{\widetilde{OPT}}{\Theta}$. Following \cite{Erlebach}, $\frac{\widetilde{OPT}}{\Theta} 
\leq 1+\frac{1}{k}$. It remains to analyze $ \frac{OPT}{\widetilde{OPT}}$.

Now, let us consider the disks in $OPT$. For each point $p_i$, let $b_i$ be the largest disk in ${\cal C}_i$ among those which
are smaller than equal to $o_i$ (see the blue and red disks in Figure \ref{ptas1}). Thus, 
$\{b_1,b_2\ldots, b_n\}$ is a feasible solution. Let $LB(OPT)= \sum_{i=1}^n B_i$ be the lower bound of $OPT$, where $B_i$ = area of the disk $b_i$.

Since $\widetilde{OPT}$ is the optimum solution among the 
disks $\cup_{i=1}^n {\cal C}_i$, and $LB(OPT)$ is a feasible solution of the MWIS problem among the disks 
$\cup_{i=1}^n {\cal C}_i$, we have $\widetilde{OPT} \geq LB(OPT)$.

Now, consider $OPT - \widetilde{OPT} \leq OPT - LB(OPT)$ = $\sum_{i=1}^n (O_i-B_i) \leq \frac{1}{k}\sum_{i=1}^n \ell_i^2$, since $O_i-B_i \leq \frac{1}{k}\ell_i^2$ by our construction
(see Figure \ref{ptas1}).
We also have $OPT \geq \frac{1}{4}\sum_{i=1}^n \ell_i^2$ from the method of getting the 4-approximation result, mentioned in Section 1.

Thus, $\frac{OPT - \widetilde{OPT}}{OPT} \leq \frac{4}{k}$, implying $\frac{\widetilde{OPT}}{OPT} \geq 1-\frac{4}{k}$.

In other words, $\frac{OPT}{\widetilde{OPT}} \leq 1+\frac{1}{k'}$, where $k'= \frac{k-4}{4}$. Thus, $\frac{OPT}{\Theta} \leq (1+\frac{1}{k})(1+\frac{1}{k'}) \leq 
(1+\frac{1}{k''})$, where $k'' = \frac{k-4}{5}$. 
Thus, we have the following result.

\begin{theorem}
Given a set of points $P$ in ${\IR}^2$ and a positive integer $k$, we can get a $(1+\frac{1}{k})$-approximation algorithm with time complexity
$(nk)^{O(k^2)}$.
\end{theorem}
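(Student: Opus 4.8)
The plan is to package the discretization-and-rounding argument developed immediately above into a single statement, after a mild reparametrization that converts the internal constant into the clean bound $1+\frac1k$ that is claimed, and to bound the running time by counting the disks fed to Erlebach's routine. All the geometric content is already in place; what remains is to assemble three facts and to pick the grid granularity correctly.

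First I would fix the approximation ratio for a general granularity parameter $\kappa$. The construction places $\kappa+1$ concentric candidate disks $\{C_0^i,\dots,C_\kappa^i\}$ at each $p_i$, with equally spaced areas up to $\alpha_i=\pi\ell_i^2$, and runs the algorithm of \cite{Erlebach} on $\bigcup_i{\cal C}_i$ to obtain a solution of value $\Theta$ with $\frac{\widetilde{OPT}}{\Theta}\le 1+\frac1\kappa$, where $\widetilde{OPT}$ is the exact MWIS optimum on this family. Two observations relate $\widetilde{OPT}$ to the continuous optimum $OPT$: the concentric disks at a common centre pairwise intersect, so any independent set uses at most one disk per point; and rounding each optimal disk $o_i$ down to the largest candidate $b_i\subseteq o_i$ keeps the disks pairwise disjoint, so $\{b_i\}$ is feasible for the discretized instance and hence $\widetilde{OPT}\ge LB(OPT)=\sum_i B_i$. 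The per-point rounding loss is at most the area spacing $h_i=\frac{\pi}{\kappa}\ell_i^2$, and the trivial approximation of Section~\ref{intro} gives $OPT\ge\frac{\pi}{4}\sum_i\ell_i^2$; since the factors of $\pi$ cancel, these together yield $\frac{OPT-\widetilde{OPT}}{OPT}\le\frac4\kappa$, i.e. $\frac{OPT}{\widetilde{OPT}}\le 1+\frac1{k'}$ with $k'=\frac{\kappa-4}{4}$. Chaining with the Erlebach bound gives $\frac{OPT}{\Theta}\le\bigl(1+\frac1\kappa\bigr)\bigl(1+\frac1{k'}\bigr)\le 1+\frac1{k''}$ with $k''=\frac{\kappa-4}{5}$.

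To deliver exactly the stated $1+\frac1k$ guarantee I would run the scheme with granularity $\kappa=5k+4=\Theta(k)$, so that $k''=\frac{\kappa-4}{5}=k$. For the running time, the MWIS instance then contains $N=n(\kappa+1)=O(nk)$ disks and the parameter passed to \cite{Erlebach} is $\kappa=O(k)$, so its $N^{O(\kappa^2)}$ bound becomes $(nk)^{O(k^2)}$, matching the theorem. I expect no genuine obstacle here, since every estimate is already established; the two points needing care are purely structural, namely verifying that the rounded family $\{b_i\}$ is non-overlapping (immediate, as shrinking a disk about its own centre cannot create a new intersection) and confirming the at-most-one-disk-per-point property that makes the discretized problem a legitimate MWIS instance. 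The constant-tracking step is in fact benign: a short computation shows $\bigl(1+\frac1\kappa\bigr)\bigl(1+\frac4{\kappa-4}\bigr)=1+\frac5{\kappa-4}$ exactly, so the chained bound is an identity rather than a loose inequality and holds for every $\kappa>4$, hence for all $k\ge1$.
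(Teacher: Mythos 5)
Your proposal is correct and follows essentially the same route as the paper: the same family of $\kappa+1$ concentric candidate disks per point with equally spaced areas, the same down-rounding argument giving $\widetilde{OPT}\ge LB(OPT)$, the same use of the trivial $4$-approximation lower bound $OPT\ge\frac{\pi}{4}\sum_i\ell_i^2$, and the same chaining with Erlebach's $(1+\frac1\kappa)$ guarantee. If anything, you finish the argument more carefully than the paper does, by making explicit the reparametrization $\kappa=5k+4$ (which the paper leaves implicit in passing from its bound $1+\frac{1}{k''}$, $k''=\frac{k-4}{5}$, to the clean statement) and by noting that the $\pi$ factors cancel and that the chained bound is in fact an exact identity.
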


\section{MADP in higher dimension} \label{section6}
In this section, we first propose a  constant factor approximation 
algorithm for the MADP problem where the points $P=\{p_1, p_2, \ldots, p_n\}$ are distributed in ${\IR}^d$, and then we propose a PTAS for the same. 

\subsection{Approximation algorithm}

Let 
$R=\{r_i, i=1,2,\ldots,n\}$ and ${\cal R}=\{\rho_i, i=1,2,\ldots,n\}$ be the set of radii of 
the points in $P$ for the solution given by MPDP~\cite{eppstein}  and the optimum of MADP problem, respectively. As noted in Section~\ref{section2}, $R$ is a feasible solution for MADP problem, and the disks in $R$ can be partitioned into cycles where each disk is touching with two adjacent disks in the cycle.
\begin{lemma}
For a given set of points $P$ in ${\IR}^d$, 
the radii $\{r_i, i=1,2,\ldots,n\}$ is a $2^{d-1}$-factor approximation 
result for the MADP problem. 
\end{lemma}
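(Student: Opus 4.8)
The plan is to mimic the two-dimensional argument of Lemma \ref{lx} edge-by-edge, replacing the squared radii by $d$-th powers (since in $\IR^d$ the volume of a ball is proportional to $r^d$) and tracking how the constant $2$ becomes $2^{d-1}$. As recalled just above the statement, the MPDP radii $R$ form a feasible MADP solution whose disks decompose into vertex-disjoint cycles---$2$-cycles and odd cycles---in which every disk touches its neighbours in its cycle. Hence along each cycle edge $(p_\alpha,p_\beta)$ we have the equality $r_\alpha+r_\beta=\text{dist}(p_\alpha,p_\beta)$, while every MADP-feasible radius assignment (in particular the optimum $\cal R$) satisfies the constraint $\rho_\alpha+\rho_\beta\le\text{dist}(p_\alpha,p_\beta)$.

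First I would establish the single-edge inequality. Write $r=\text{dist}(p_\alpha,p_\beta)$. The function $t\mapsto t^d$ is convex on $[0,\infty)$ for $d\ge 1$, so among all splits $r_\alpha+r_\beta=r$ the sum $r_\alpha^d+r_\beta^d$ is minimised at $r_\alpha=r_\beta=r/2$, giving the lower bound $r_\alpha^d+r_\beta^d\ge 2\,(r/2)^d=r^d/2^{d-1}$. On the optimum side I would use super-additivity of $t\mapsto t^d$, namely $\rho_\alpha^d+\rho_\beta^d\le(\rho_\alpha+\rho_\beta)^d\le r^d$, where the middle step is simply dropping the nonnegative cross terms in the binomial expansion of $(\rho_\alpha+\rho_\beta)^d$. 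Chaining the two bounds yields the key per-edge relation
\begin{equation*}
r_\alpha^d+r_\beta^d\ \ge\ \frac{1}{2^{d-1}}\bigl(\rho_\alpha^d+\rho_\beta^d\bigr).
\end{equation*}

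Next I would aggregate over each cycle exactly as in Lemma \ref{lx}. For a $2$-cycle the single edge already gives the relation for its two endpoints. For an odd cycle of length $m$ I would sum the per-edge inequality over all $m$ edges; since each vertex lies on exactly two cycle edges, both sides pick up the same factor $2$, which cancels to give $\sum_{p_\alpha\in C_i} r_\alpha^d\ge\frac{1}{2^{d-1}}\sum_{p_\alpha\in C_i}\rho_\alpha^d$. Because the cycles partition $P$, summing these per-cycle inequalities over all cycles yields $\sum_{i=1}^n r_i^d\ge\frac{1}{2^{d-1}}\sum_{i=1}^n\rho_i^d$. Multiplying through by the dimensional volume constant, the total volume of the MPDP solution is at least $2^{-(d-1)}$ times that of the MADP optimum, which is the claimed $2^{d-1}$-factor approximation.

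I do not expect a genuine obstacle here: the combinatorial skeleton---cycle decomposition with the touching property, and the double-counting of vertices in odd cycles---is inherited verbatim from Lemma \ref{lx}, and the only new ingredient is the pair of elementary power inequalities for exponent $d$. The one point needing care is invoking both the convexity bound and the super-additivity bound in the correct direction for $d\ge 1$; as a sanity check, $d=2$ gives $2^{d-1}=2$ and recovers Lemma \ref{lx} exactly, while $d=1$ gives factor $1$.
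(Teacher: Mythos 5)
Your proposal is correct and follows essentially the same route as the paper: the same MPDP cycle decomposition, the same per-edge inequality $r_\alpha^d+r_\beta^d\ge\frac{1}{2^{d-1}}(\rho_\alpha^d+\rho_\beta^d)$, and the same aggregation over $2$-cycles and odd cycles with the cancelling factor of $2$. The only cosmetic difference is that you justify the lower bound $r_\alpha^d+r_\beta^d\ge 2(r/2)^d$ by convexity of $t\mapsto t^d$, whereas the paper writes $r_\alpha=\frac{r}{2}-\delta$, $r_\beta=\frac{r}{2}+\delta$ and expands binomially, observing that the surviving even-order terms are nonnegative; these are interchangeable proofs of the same elementary inequality.
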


\begin{proof}
Consider the cycles ${\cal C}=\{C_1, C_2, 
\ldots, C_k\}$ generated by  MPDP algorithm as in the proof of Lemma \ref{lx}.  
We prove that, for each cycle $C_i$, we have $\sum_{p_\alpha \in C_i}^n r_\alpha^d \geq 
\frac{1}{2^{d-1}}\sum_{p_\alpha\in C_i}^n \rho_\alpha^d$. 
Agreegating these relations for all the cycles $C_i, i=1,2,\ldots, k$, we will have the desired result. As in Lemma \ref{lx}, consider the following two cases.

\begin{description}
\item[$\mathbf{C_i}$ is a 2-cycle $\mathbf{(p_\alpha,p_\beta)}$:] Let $r=\text{dist}(p_\alpha,p_\beta)$; 
$r_\alpha=\frac{r}{2}-\delta$ and  
$r_\beta=\frac{r}{2}+\delta$ be the  radii for the two  
disks $C_\alpha', C_\beta'$  that maximizes the 
sum of square of the radii of these 
two disks, where  
$-\frac{r}{2} \leq \delta\leq \frac{r}{2}$. 
Thus,  $r_\alpha^d+ r_\beta^d=(\frac{r}{2}-\delta)^d+ (\frac{r}{2}+\delta)^d  \geq 2((\frac{r}{2})^d+{d\choose 2}(\frac{r}{2})^{d-2}\delta^2 + \ldots) \geq \frac{1}{2^{d-1}}r^d$.
The upper bound 
 for any  two non-overlapping disks having their centers 
$r$ distance apart is $\rho_\alpha^d +\rho_\beta^d \leq  (\rho_\alpha+\rho_\beta)^d \leq r^d$.
 Thus, for the two-cycle  ${\cal C}_i=
(p_\alpha,p_\beta)$, we have $r_\alpha^d+r_\beta^d \geq \frac{1}{2^{d-1}}r^d \geq \frac{1}{2^{d-1}}(\rho_\alpha^d +\rho_\beta^d)$.

\item[$\mathbf{C_i}$ is an odd cycle:] Let the length of the cycle 
be $m$. Without loss of generality, assume that the vertices be $p_1, p_2, \ldots, p_m$.
For each edge $(p_\alpha,p_{\alpha+1})$ of this cycle (where 
the indices are numbered modulo $m$), we have $r_\alpha^d+r_{\alpha+1}^d \geq 
\frac{1}{2^{d-1}}(\rho_\alpha^d +\rho_{\alpha+1}^d)$ (as explained in the 
earlier case). Adding these inequalities for $\alpha=1,2,\ldots,m$, 
we have $2\sum_{\alpha=1}^m r_\alpha^d \geq \frac{1}{2^{d-1}}[2\sum_{\alpha=1}^m \rho_\alpha^d]$.
Ignoring 2 in both sides, we have the result.
\end{description}

\end{proof}
As the time complexity of solving MPDP problem in ${\IR}^d$ is  $O(n^{2-\frac{1}{d}})$, we have the following result.
\begin{theorem}
 For a given set of points $P$ arbitrarily placed in ${\IR}^d$, one can compute a $2^{d-1}$-approximaton result of the MADP problem in $O(n^{2-\frac{1}{d}})$ time.
\end{theorem}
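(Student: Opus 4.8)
The plan is to obtain the theorem as an immediate corollary of the $2^{d-1}$-factor approximation lemma just proved, combined with Eppstein's running-time bound for MPDP in $\IR^d$ quoted earlier in the section. First I would run Eppstein's algorithm~\cite{eppstein} on the point set $P$ to compute the optimum solution of the MPDP problem. By the cited result this takes $O(n^{2-\frac{1}{d}})$ time and returns both the radii $R=\{r_i, i=1,2,\ldots,n\}$ and the underlying collection of vertex-disjoint cycles $\{C_1,\ldots,C_k\}$ spanning $P$, with each disk touching its neighbours in the cycle containing it.

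Next I would simply output $R$ as the candidate solution for the MADP problem. Feasibility is immediate, because the MPDP constraints $r_i+r_j\le \text{dist}(p_i,p_j)$ are identical to the MADP constraints, so the disks in $R$ are pairwise non-overlapping and centered at the points of $P$. The quality guarantee then follows verbatim from the preceding lemma: aggregating the per-cycle inequalities $\sum_{p_\alpha\in C_i} r_\alpha^d \ge \frac{1}{2^{d-1}}\sum_{p_\alpha\in C_i}\rho_\alpha^d$ over all cycles in the cover yields $\sum_{\alpha} r_\alpha^d \ge \frac{1}{2^{d-1}}\sum_{\alpha}\rho_\alpha^d$, where $\{\rho_\alpha\}$ are the radii in an optimum MADP solution. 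Hence the total $d$-dimensional volume delivered by $R$ is at least a $2^{-(d-1)}$ fraction of the optimum MADP volume, i.e.\ $R$ is a $2^{d-1}$-approximation.

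Finally I would account for the running time. The only work performed is the single MPDP call together with the trivial conversion of each returned radius $r_i$ into its corresponding volume; this post-processing is $O(n)$ and is therefore dominated by the $O(n^{2-\frac{1}{d}})$ cost of computing the minimum-weight cycle cover. Thus the overall running time is $O(n^{2-\frac{1}{d}})$, which establishes the stated bound.

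Since the approximation factor is already isolated in the lemma and the time bound is already quoted, there is no genuinely hard step here; the theorem is a corollary. The only point worth checking explicitly is that Eppstein's algorithm actually exposes the cycle-cover structure (not merely the radii) within the claimed time, so that the lemma can be applied cycle by cycle, and that no rescaling or additional optimization is needed that might inflate either the constant or the exponent. Once that structural output is confirmed to be available at no extra asymptotic cost, the two ingredients combine directly to give the result.
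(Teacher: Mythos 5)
Your proposal is correct and follows exactly the paper's route: the theorem is a direct corollary of the preceding $2^{d-1}$-factor lemma combined with the quoted $O(n^{2-\frac{1}{d}})$ time bound for Eppstein's MPDP algorithm in ${\IR}^d$. Your final worry about the algorithm exposing the cycle-cover structure is unnecessary -- the cycles are used only in the \emph{analysis} of the lemma (they exist in any optimal MPDP solution), so the algorithm need only output the radii.
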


\subsection{PTAS}
The same scheme of designing PTAS as in Section \ref{ptas} also works in 
higher dimension due to the following reasons:
\begin{itemize}
\item The algorithm $(1+\frac{1}{k})$-factor for the maximum weighted independent set in a disk graph with geometric layout of the disks and for a given $k$ also works in 
higher dimension in $n^{O(k^{2d-2})}$ time~\cite{Erlebach}. 
\item $OPT \geq \frac{1}{2^d}\sum_{i=1}^n \ell_i^d$ in ${\IR}^d$, where $\ell_i=\text{dist}(p_i, {\cal N}(p_i))$ using the same argument as in Section \ref{ptas}, 
since the volume of a ball with radius $\frac{\ell_i}{2}$ in ${\IR}^d$ is proportionate to $\frac{\ell_i^d}{2^d}$. Thus, $\frac{OPT}{\Theta}\leq (1+\frac{1}{k})(1+\frac{2^d}{k-2^d})=(1+\frac{1}{k''})$, 
where $k''= \frac{k-2^d}{2^d+1}$. 
\end{itemize}   
 
\begin{theorem}
Given a set of points $P$ in ${\IR}^d$ and a positive integer $k$, we can get a $(1+\frac{1}{k})$-approximation algorithm in time 
$(nk)^{O(k^{2d-2})}$.
\end{theorem}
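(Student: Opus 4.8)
The plan is to carry over the two-dimensional construction of Section \ref{ptas} verbatim, replacing disks by $d$-dimensional balls and planar area by the $d$-volume, so that only the dimension-dependent constants change. First I would discretize the admissible radius at each point: for $p_i \in P$ set $\ell_i = \text{dist}(p_i, {\cal N}(p_i))$ as the largest feasible radius, let $\alpha_i$ be the volume of the ball of radius $\ell_i$, put $h_i = \alpha_i/k$, and form the family ${\cal C}_i = \{C_0^i, C_1^i, \ldots, C_k^i\}$ of $k+1$ concentric balls centered at $p_i$ of volumes $\{0, h_i, 2h_i, \ldots, kh_i\}$, each weighted by its own volume. Running the Erlebach et al.\ algorithm for maximum weight independent set on $\cup_{i=1}^n {\cal C}_i$ returns a solution $\cal A$ of total weight $\Theta$; since at most one ball per center can be chosen, $\cal A$ is immediately a feasible MADP configuration, and the running time is $n^{O(k^{2d-2})}$ as recorded in the first itemized remark.

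As in the planar case I would factor $\frac{OPT}{\Theta} = \frac{OPT}{\widetilde{OPT}}\cdot\frac{\widetilde{OPT}}{\Theta}$, where $\widetilde{OPT}$ is the true optimum of the discretized MWIS instance. The second factor is at most $1+\frac1k$ by Erlebach's guarantee. For the first factor, round each optimal ball $o_i$ down to the largest discretized ball $b_i \le o_i$; the rounded family $\{b_1,\ldots,b_n\}$ is feasible for the MWIS instance, so $\widetilde{OPT} \ge LB(OPT) = \sum_i B_i$, and the per-point volume loss is at most one step $h_i$. Writing $\omega_d$ for the volume of the unit $d$-ball, $h_i = \omega_d\ell_i^d/k$, so $OPT - \widetilde{OPT} \le OPT - LB(OPT) = \sum_i(O_i - B_i) \le \frac{\omega_d}{k}\sum_i \ell_i^d$.

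The single genuinely dimension-dependent ingredient, supplied by the second itemized remark, is the lower bound $OPT \ge \frac{\omega_d}{2^d}\sum_i \ell_i^d$, obtained by placing a ball of radius $\frac{\ell_i}{2}$ at every point. The common factor $\omega_d$ cancels, giving $\frac{OPT - \widetilde{OPT}}{OPT} \le \frac{2^d}{k}$, hence $\frac{OPT}{\widetilde{OPT}} \le 1 + \frac{1}{k'}$ with $k' = \frac{k-2^d}{2^d}$; multiplying the two bounds collapses to $\frac{OPT}{\Theta} \le (1+\frac1k)(1+\frac{2^d}{k-2^d}) = 1+\frac{1}{k''}$ with $k'' = \frac{k-2^d}{2^d+1}$. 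Reparametrizing $k''$ back to the desired $(1+\frac1k)$ and folding the $k^{2d-2}$ exponent blow-up into the stated $(nk)^{O(k^{2d-2})}$ time is routine. I expect the only points needing care to be the constant bookkeeping --- in particular requiring $k > 2^d$ so that $k'$ and $k''$ stay positive --- and confirming that Erlebach's exponent indeed scales as $k^{2d-2}$ in ${\IR}^d$; no new obstacle beyond the planar argument arises, since the $2^d$ volume factor is the only change.
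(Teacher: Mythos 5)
Your proposal is correct and follows essentially the same route as the paper: the paper's own proof is exactly the Section~\ref{ptas} scheme (volume-discretized concentric balls, Erlebach et al.\ MWIS, rounding the optimum down by one volume step) with the two dimension-dependent substitutions you identify --- the $n^{O(k^{2d-2})}$ running time and the lower bound $OPT \geq \frac{1}{2^d}\sum_i \ell_i^d$ --- yielding the identical final factor $(1+\frac{1}{k})(1+\frac{2^d}{k-2^d}) = 1+\frac{1}{k''}$ with $k''=\frac{k-2^d}{2^d+1}$. In fact your write-up fills in the bookkeeping (feasibility of the rounded family, cancellation of $\omega_d$, the requirement $k>2^d$) that the paper leaves implicit.
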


\section{MADP problem for $2m$-regular convex polygons} \label{section7}

In this section, we will show that   both 2-approximation and PTAS  results explained in Section~\ref{1D_Approx} and \ref{ptas}, respectively,  can be generalized for the MADP problem when the objective is to place non-overlapping  $2m$-regular convex polygons ($m \geq 2$) of fixed orientation centered at the given set of points  $P$ such that the sum of area  covered by them is maximized.  

For a  $2m$ regular convex polygon $S$, the \blue{\it width} of $S$, denoted by $w(S)$, 
is  defined  as 
the radius of the circle inscribed  in  $S$ which touches all the edges of 
boundary of $S$. Note that the area of $S$ is $2m {(w(S))}^2 \tan \frac{\pi}{2m}$~\cite{wiki}.

Now, we define the distance \blue{$\delta(p_1,p_2)$} between two points $p_1,p_2\in 
\IR^2$ 
as the width of the minimum width $2m$-regular convex polygon centered at the point 
$p_1$ 
containing the point $p_2$  (see Figure \ref{fig:reg_poly}).  

\begin{lemma}\label{p1}
  The distance function is symmetric\footnote{This property does not hold for odd regular convex polygon.}, i.e., $\delta(p_1,p_2)=\delta(p_2,p_1)$, 
where $p_1$ and $p_2$ are two points in $\IR^2$.
\end{lemma}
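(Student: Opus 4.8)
The plan is to reduce the asserted symmetry of $\delta$ to the \emph{central symmetry} of the reference polygon, which is precisely where the evenness of the number of edges enters. Fix the $2m$-regular convex polygon $K$ of the prescribed orientation, centered at the origin and scaled to have width $1$. Since width is the apothem and therefore scales linearly under dilation while being unaffected by translation, the $2m$-regular polygon of width $w$ centered at a point $p$ is exactly $p + wK = \{p + w\kappa : \kappa \in K\}$. Consequently $p_2$ lies in the width-$w$ polygon centered at $p_1$ if and only if $p_2 - p_1 \in wK$, so that
$$\delta(p_1,p_2) = \min\{w \ge 0 : p_2 - p_1 \in wK\}.$$
The first step is simply to record this Minkowski-gauge description of $\delta$.

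The key step is the observation that $K = -K$, i.e.\ that $K$ is invariant under the central reflection $x \mapsto -x$ about its center. Because $2m$ is even, $K$ is invariant under a rotation of angle $\pi$ about its center, and this rotation coincides with the central reflection; hence $K = -K$. This is exactly the point at which an odd regular polygon fails (an equilateral triangle is not centrally symmetric), which accounts for the footnote.

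Finally I would combine the two facts. For any $w \ge 0$, central symmetry gives $p_2 - p_1 \in wK \iff -(p_2 - p_1) \in wK \iff p_1 - p_2 \in wK$, so the feasible sets $\{w : p_2 - p_1 \in wK\}$ and $\{w : p_1 - p_2 \in wK\}$ coincide and therefore have equal minima:
$$\delta(p_1,p_2) = \min\{w : p_2 - p_1 \in wK\} = \min\{w : p_1 - p_2 \in wK\} = \delta(p_2,p_1).$$
The only nontrivial ingredient is the central symmetry $K = -K$; everything else is routine, so I expect that to be essentially the sole obstacle, namely giving a clean argument that an even regular polygon satisfies $K = -K$. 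If a more self-contained geometric phrasing is preferred, one can avoid the gauge function entirely: let $S$ be a minimum-width polygon centered at $p_1$ and containing $p_2$, with width $w = \delta(p_1,p_2)$. Reflecting $S$ through its center $p_1$ leaves $S$ unchanged (by central symmetry) but sends $p_2$ to $2p_1 - p_2$, so $2p_1 - p_2 \in S = p_1 + wK$; this gives $p_1 - p_2 \in wK$, i.e.\ $p_1$ lies in the width-$w$ polygon centered at $p_2$, whence $\delta(p_2,p_1) \le w = \delta(p_1,p_2)$. Exchanging the roles of $p_1$ and $p_2$ yields the reverse inequality and hence equality.
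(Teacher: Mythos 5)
Your proof is correct and follows essentially the same route as the paper: both arguments rest on the central symmetry of the $2m$-regular polygon, which is exactly what the paper invokes (somewhat tersely, as ``the $2m$-regular polygon is symmetric'') when it translates the minimal polygon $S_1$ from center $p_1$ to center $p_2$ and concludes it contains $p_1$ on its boundary. Your gauge-function formulation and your closing ``self-contained geometric phrasing'' are just cleaner, more explicit renderings of that same idea, usefully pinpointing that $K=-K$ is where evenness (and hence the footnote about odd polygons) enters.
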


\begin{proof}
 Let $S_1$ be the 
$2m$-regular polygon centered at $p_1$ and containing the point $p_2$ on its 
boundary. Note that  $2m$-regular polygon is symmetric. 
Thus,  if we translate the polygon $S_1$ such that the center moves to $p_2$, then it 
will 
also contain the point $p_1$ on its boundary. Let this translated copy be 
$S_2$. According to the definition, $\delta(p_1,p_2)=w(S_1)$ and 
$\delta(p_2,p_1)=w(S_2)$. As $w(S_1)=w(S_2)$, hence the property follows. 
\end{proof}

\begin{figure} 
\centering\includegraphics[scale=0.4,page=2]{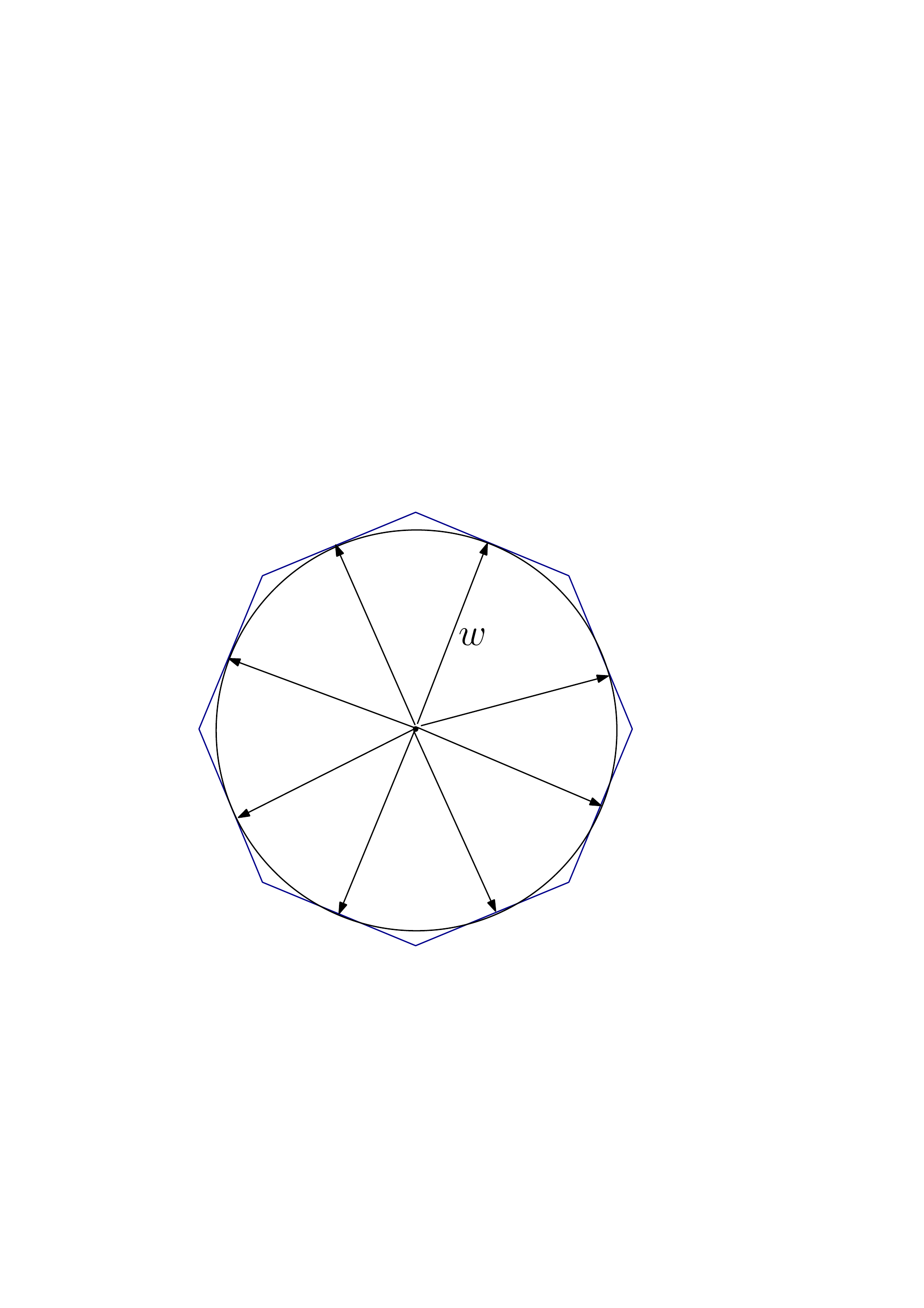}
\caption{Property of the distance function $\delta$}
\label{fig:reg_poly}
\end{figure}

\begin{lemma}\label{p3}
Let  $p_1$ and $p_3$ be any two points in $\IR^2$ and let $p_2$ be any point 
on the line segment $\overline{p_1,p_3}$, then   $\delta(p_1,p_3)= \delta(p_1,p_2)+\delta(p_2,p_3)$ .
\end{lemma}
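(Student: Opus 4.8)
The plan is to recognize $\delta(\cdot,\cdot)$ as the gauge (Minkowski functional) of a single fixed convex body and then exploit its positive homogeneity. First I would fix the $2m$-regular convex polygon $K$ of the prescribed orientation, centered at the origin, with width $w(K)=1$. Because any two $2m$-regular convex polygons of the same orientation are homothetic, every such polygon $S$ centered at a point $c$ with width $t$ is exactly the scaled translate $c+tK$. Consequently, the minimum-width polygon centered at $p_1$ that has a point $q$ on its boundary has width $g(q-p_1)$, where $g(v):=\min\{t>0 : v\in tK\}$; that is, $\delta(p_1,q)=g(q-p_1)$. This reformulation is the only place where the geometry of the polygon enters.

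The central observation is that $g$ is positively homogeneous: for any nonzero $v$ and any $\lambda>0$ we have $g(\lambda v)=\lambda\,g(v)$, which is immediate from the definition since $\lambda v\in tK \iff v\in (t/\lambda)K$. Next I would encode the collinearity hypothesis: as $p_2$ lies on the segment $\overline{p_1,p_3}$, write $p_2=p_1+s(p_3-p_1)$ with $s\in[0,1]$, so that $p_2-p_1=s(p_3-p_1)$ and $p_3-p_2=(1-s)(p_3-p_1)$. Both difference vectors are nonnegative scalar multiples of the common direction $p_3-p_1$, which is exactly the situation in which homogeneity turns the subadditivity of a norm into an exact equality.

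Applying homogeneity termwise then gives $\delta(p_1,p_2)=g(p_2-p_1)=s\,g(p_3-p_1)$ and $\delta(p_2,p_3)=g(p_3-p_2)=(1-s)\,g(p_3-p_1)$, whence $\delta(p_1,p_2)+\delta(p_2,p_3)=g(p_3-p_1)=\delta(p_1,p_3)$, as required. The symmetry established in Lemma \ref{p1} guarantees that $\delta(p_2,p_3)$ is unambiguous, independent of which endpoint the defining polygon is centered at, so no separate verification is needed there.

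I expect the main (and essentially only) obstacle to be the first step, namely justifying $\delta(p_1,q)=g(q-p_1)$ rigorously from the definition of width as the inradius: one must argue that scaling a fixed-orientation $2m$-regular polygon scales its inradius linearly, and that the minimum-width containing polygon is precisely the one with $q$ on its boundary. Once this homothety description is pinned down, the additivity along a segment is a one-line consequence of positive homogeneity, requiring neither the full triangle inequality nor any case analysis on which edge of the polygon the point $q$ touches.
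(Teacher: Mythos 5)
Your proof is correct, but it takes a genuinely different route from the paper's. The paper argues trigonometrically: it takes the three polygons $S_1,S_2,S_3$ (centered at $p_1$, $p_2$, $p_1$ and containing $p_2$, $p_3$, $p_3$ respectively), observes that since all three have the same orientation and the three difference vectors point in the same direction, the segment $\overline{p_1p_3}$ crosses the \emph{same} indexed side of each polygon at the \emph{same} angle $\theta$ to its perpendicular, writes $w(S_1)=\mathrm{dist}(p_1,p_2)\cos\theta$, $w(S_2)=\mathrm{dist}(p_2,p_3)\cos\theta$, $w(S_3)=\mathrm{dist}(p_1,p_3)\cos\theta$, and transfers additivity of Euclidean length along the segment to the widths. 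Your gauge-function argument is the coordinate-free version of the same scaling principle: the paper's identity $w(S_1)=\mathrm{dist}(p_1,p_2)\cos\theta$ is exactly your homogeneity statement $g(ru)=r\,g(u)$ for the unit direction $u$ of the segment, with $g(u)=\cos\theta$. What your formulation buys: it removes the paper's tersely justified step that the segment meets the same side of all three polygons at the same angle (the only delicate point of the trigonometric proof, and the place where fixed orientation is silently used), and it makes clear that this lemma needs neither regularity nor an even number of sides --- only that all admissible shapes are homothets of one fixed convex body scaled about its center; central symmetry is needed only for the symmetry Lemma \ref{p1}. What the paper's version buys is an elementary, self-contained computation with no Minkowski-functional machinery. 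The single obstacle you flag, namely $\delta(p_1,q)=g(q-p_1)$, is indeed all that must be pinned down, and it follows at once because the inscribed-circle radius of $tK$ is $t$ times that of $K$ under the homothety $v\mapsto tv$.
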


\begin{proof}
  Let $S_1$, $S_2$ and 
$S_3$ be three $2m$-regular polygons centered at $p_1$, $p_2$ and $p_1$, 
respectively. Their widths are $\delta(p_1,p_2)$, $\delta(p_2,p_3)$ and 
$\delta(p_1,p_3)$, respectively (see Figure~\ref{fig:reg_poly}). Without loss of generality, assume that the line 
segment $\overline{p_1,p_3}$ intersects the $i$-th side of these polygons.   Let $d_j$ be the   perpendicular   
from the center  to 
the $i$-th 
side of the polygon $S_j$, for $j\in{1,2,3}$, and  $\overline{p_1,p_3}$
makes an angle $\theta$ with the perpendicular $d_j$ ($j\in \{1,2,3\}$).  
Note 
that $w(S_1)=dist(p_1,p_2) \cos \theta $, 
$w(S_2)=dist(p_2,p_3) \cos \theta $ and $w(S_3)=dist(p_1,p_3) \cos 
\theta $, where $dist(p_i,p_j)$ is the 
Euclidean distance between two points $p_i$ and $p_j$. As $p_1$, $p_2$ and 
$p_3$ 
are co-linear, so 
$dist(p_1,p_3)=dist(p_1,p_2)+dist(p_2,p_3)$.
\begin{IEEEeqnarray*}{rcl's}
dist(p_1,p_3)\cos\theta &=&  dist(p_1,p_2)\cos\theta  
+dist(p_2,p_3)\cos\theta &  \\
\Rightarrow w(P_3) & = &  w(P_1) + w(P_2) & \\
\Rightarrow \delta(p_1,p_3) &= &\delta(p_1,p_2)+\delta(p_2,p_3) &
\end{IEEEeqnarray*}
Hence the lemma follows.
\end{proof}

\begin{lemma}\label{p2}
 The distance function follows 
the triangular inequality, i.e., $\delta(p_1,p_3)\leq 
\delta(p_1,p_2)+\delta(p_2,p_3)$, where  $p_1$, $p_2$ and 
$p_3$  are any three points in $\IR^2$.
\end{lemma}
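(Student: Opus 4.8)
The plan is to recognize the distance $\delta$ as the Minkowski gauge of a fixed convex body and then obtain the triangle inequality from the subadditivity of that gauge, which in turn reduces to the elementary identity $sK+tK=(s+t)K$ for a convex set $K$.

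First I would fix a ``unit'' polygon: let $K$ be the $2m$-regular convex polygon of the given orientation, centered at the origin and normalized so that $w(K)=1$ (its inscribed-circle radius equals $1$). Since every polygon under consideration is a translate of a positive scaling of $K$, the polygon of width $t$ centered at a point $p$ is exactly $p+tK$, and its width is $t$ because width scales linearly under homothety. Consequently $\delta(p,q)$ is the least $t\ge 0$ with $q\in p+tK$; that is, $\delta(p,q)=g(q-p)$, where $g(v)=\inf\{t>0:\ v\in tK\}$ is the gauge of $K$.

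Next I would record the key convexity fact: for any convex set $K$ and any $s,t\ge 0$ one has $sK+tK=(s+t)K$. The inclusion $\supseteq$ is immediate, and $\subseteq$ follows from writing $sx+ty=(s+t)\bigl(\tfrac{s}{s+t}\,x+\tfrac{t}{s+t}\,y\bigr)$ and invoking convexity of $K$. Now set $w_1=\delta(p_1,p_2)$ and $w_2=\delta(p_2,p_3)$. By the definition of $\delta$ we have $p_2-p_1\in w_1K$ and $p_3-p_2\in w_2K$, hence
\[
p_3-p_1=(p_2-p_1)+(p_3-p_2)\in w_1K+w_2K=(w_1+w_2)K .
\]
Therefore $g(p_3-p_1)\le w_1+w_2$, which is exactly $\delta(p_1,p_3)\le\delta(p_1,p_2)+\delta(p_2,p_3)$.

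If one prefers to stay closer to the polygonal picture and to Lemma \ref{p3}, the same argument can be phrased as a containment between the polygons: let $S_1$ be the width-$w_1$ polygon centered at $p_1$ (so $p_2\in S_1$) and let $S$ be the width-$(w_1+w_2)$ polygon centered at $p_1$. The homothety identity gives $S=S_1+w_2K$, so the translate $p_2+w_2K$ of the width-$w_2$ polygon centered at $p_2$ lies inside $S$ because $p_2\in S_1$; since $p_3\in p_2+w_2K$, we conclude $p_3\in S$ and the bound follows. I expect the only step requiring care to be this homothety/Minkowski additivity of widths (equivalently the identity $sK+tK=(s+t)K$); note that it uses only convexity of the $2m$-gon, not its central symmetry. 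Central symmetry was what forced the \emph{symmetry} of $\delta$ in Lemma \ref{p1} (and fails for odd polygons), whereas the triangle inequality holds for the gauge of any convex body containing its center in the interior.
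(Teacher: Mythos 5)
Your proof is correct, but it takes a genuinely different route from the paper's. The paper argues geometrically: assuming WLOG $\delta(p_1,p_3) > \delta(p_1,p_2)$, it takes the point $p$ where the polygon centered at $p_1$ passing through $p_2$ meets the segment $\overline{p_1p_3}$, applies the collinear additivity of Lemma~\ref{p3} to get $\delta(p,p_3) = \delta(p_1,p_3) - \delta(p_1,p_2)$, and then asserts---with a reference to a figure---that the translate of the width-$(\delta(p_1,p_3)-\delta(p_1,p_2))$ polygon from $p$ to $p_2$ fails to cover $p_3$, so any polygon centered at $p_2$ reaching $p_3$ must have at least that width. That last assertion is precisely the inequality $\delta(p_2,p_3)\ge \delta(p_1,p_3)-\delta(p_1,p_2)$ being proven, and the paper offers no argument for it beyond the picture; in that sense the paper's proof has a soft spot exactly where yours is solid. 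Your gauge formulation ($\delta(p,q)=g(q-p)$ with $g$ the Minkowski functional of the unit polygon $K$), combined with the identity $sK+tK=(s+t)K$, reduces everything to convexity of $K$, needs neither Lemma~\ref{p3} nor the WLOG case split, and, as you correctly note, works verbatim for any convex body containing its center in the interior and in any dimension---which also cleanly isolates why the symmetry of $\delta$ (Lemma~\ref{p1}) genuinely requires central symmetry while the triangle inequality does not. The only advantage of the paper's approach is that it stays within the elementary polygon-containment language already set up for Lemma~\ref{p3}; your argument is both more rigorous and more general.
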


\begin{proof}
Let $S_2$ and $S_3$ be two $2m$-regular convex polygons centered at $p_1$ and containing the points $p_2$  and $p_3$, respectively, in their boundary. 
Now, if the width of  $S_3$ is less than equal to the width of $S_2$, then the lemma holds true. So, without loss of generality, assume that $\delta(p_1,p_3)> \delta(p_1,p_2)$.
Let $p$ be the intersection point of $S_2$ with the line segment $\overline{p_1p_3}$, and $S$ be the smallest $2m$-regular polygon centered at $p$   containing the point $p_3$. 
The width of  $S$ is $\delta(p_1,p_3)- \delta(p_1,p_2)$ (follows from Lemma~\ref{p3}).  If $p_2$ does not coincide with $p$, then  the translated copy of $S$ does not cover the point $p_3$ (see Figure~\ref{fig:reg_poly}). As a result, in this case, we need a $2m$-regular polygon $S'$ centered at $p_2$ of width at least $\delta(p_1,p_3)- \delta(p_1,p_2)$  to have the point  $p_3$ on its boundary.  Thus, the claim follows.

\end{proof}

\begin{lemma} 
The distance function $\delta$ satisfies the metric properties.
\end{lemma}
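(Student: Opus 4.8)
The plan is to verify the four defining axioms of a metric---non-negativity, identity of indiscernibles, symmetry, and the triangle inequality---for $\delta$, drawing on the preceding lemmas for the two substantive properties and checking the remaining two directly from the definition.

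First I would dispose of non-negativity and the identity of indiscernibles, both of which follow immediately from the fact that $\delta(p_1,p_2)$ is defined as the width $w(S)$ of the minimum-width $2m$-regular convex polygon $S$ centered at $p_1$ and passing through $p_2$. Since the width is an inradius, it is never negative, so $\delta(p_1,p_2)\ge 0$. Moreover, $\delta(p_1,p_2)=0$ holds exactly when this minimum-width polygon degenerates to the single point $p_1$, which forces $p_2=p_1$; conversely, when $p_1=p_2$ the degenerate polygon of width $0$ already contains $p_2$, so $\delta(p_1,p_1)=0$.

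Next I would invoke the earlier results for the two harder axioms. Symmetry, $\delta(p_1,p_2)=\delta(p_2,p_1)$, is precisely the statement of Lemma~\ref{p1}, whose proof exploited the central symmetry of the $2m$-regular polygon to translate the defining polygon from $p_1$ to $p_2$ without changing its width. The triangle inequality, $\delta(p_1,p_3)\le\delta(p_1,p_2)+\delta(p_2,p_3)$, is Lemma~\ref{p2}, itself built on the collinear additivity established in Lemma~\ref{p3}. Collecting these four facts completes the verification that $\delta$ is a metric.

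Since all the geometric content has already been established in Lemmas~\ref{p1}--\ref{p3}, I anticipate no real obstacle here; the statement is essentially a bookkeeping step that assembles the previously proved properties. The only point requiring a moment's care is the identity of indiscernibles, where one must confirm that the minimum-width polygon is genuinely permitted to degenerate to a point, so that $\delta(p,p)=0$ in accordance with the paper's convention that an object of radius (width) zero represents the absence of an object.
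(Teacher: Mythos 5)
Your proposal is correct and follows exactly the paper's own argument: the paper likewise notes that $\delta(p_1,p_2)=0$ iff $p_1=p_2$ follows directly from the definition, and then cites Lemma~\ref{p1} (symmetry) and Lemma~\ref{p2} (triangle inequality, which rests on Lemma~\ref{p3}) to conclude. Your additional remarks on non-negativity and the degenerate zero-width polygon are harmless elaborations of the same bookkeeping step.
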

\begin{proof}
From the definition of the distance function, it is obvious that  $\delta(p_1,p_2)=0$ if and only if $p_1=p_2$.
Thus, the proof follows from Lemmata~\ref{p1} and  \ref{p2}. 

\end{proof}

 Combining Lemma~\ref{p1} and \ref{p3}, we have the following lemma.

\begin{lemma}
Let  $p_1$ and $p_3$ be any two points in $\IR^2$ and let $p_2$ be any point 
on the line segment $\overline{p_1,p_3}$, then   $\delta(p_1,p_3)= \delta(p_3,p_1)=
\delta(p_1,p_2)+\delta(p_3,p_2)$.
\end{lemma}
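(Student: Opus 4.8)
The plan is to derive this statement as a formal consequence of the two results already in hand, namely the symmetry of $\delta$ (Lemma~\ref{p1}) and the collinear additivity of $\delta$ (Lemma~\ref{p3}); no further geometric reasoning about the $2m$-regular polygon should be required, so the proof is essentially a matter of rewriting terms correctly.

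First I would dispose of the equality $\delta(p_1,p_3)=\delta(p_3,p_1)$: this is immediate from Lemma~\ref{p1} applied to the pair $(p_1,p_3)$, since that lemma asserts symmetry of the distance function for any two points of $\IR^2$. This handles the first of the two claimed equalities without reference to $p_2$ at all.

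Next, because $p_2$ lies on the segment $\overline{p_1,p_3}$, the hypothesis of Lemma~\ref{p3} is satisfied, and that lemma gives $\delta(p_1,p_3)=\delta(p_1,p_2)+\delta(p_2,p_3)$. The only remaining adjustment is to bring the second summand into the form $\delta(p_3,p_2)$ demanded by the statement; I would do this by invoking symmetry (Lemma~\ref{p1}) once more, now on the pair $(p_2,p_3)$, to write $\delta(p_2,p_3)=\delta(p_3,p_2)$. Substituting this into the additivity identity yields $\delta(p_1,p_3)=\delta(p_1,p_2)+\delta(p_3,p_2)$, and combining with the first step completes the chain of equalities.

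There is no genuine obstacle here; the content lies entirely in Lemmata~\ref{p1} and~\ref{p3}, and the present statement merely repackages them. The only point requiring care is bookkeeping of which argument plays the role of the polygon's center in each invocation of symmetry, to ensure that the rewriting of $\delta(p_2,p_3)$ as $\delta(p_3,p_2)$ is licensed by Lemma~\ref{p1} and not assumed gratuitously.
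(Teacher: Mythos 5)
Your proposal is correct and matches the paper's own argument: the paper states this lemma follows by ``combining Lemma~\ref{p1} and \ref{p3}'', which is exactly your chain of symmetry, collinear additivity, and one more application of symmetry to rewrite $\delta(p_2,p_3)$ as $\delta(p_3,p_2)$. Your version simply spells out the bookkeeping the paper leaves implicit.
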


Above lemma along with the fact that  the area of a $2m$-regular convex polygon with width $w$ is $2m {w}^2 \tan \frac{\pi}{2m}$
implies that the MADP problem for $2m$-regular polygon can be formulated as a quadratic programming problem as follows. 
\begin{siderules}
\begin{tabbing}
\= Subject to \= \kill
\> Maximize \> $\sum_{i=1}^n w_i^2$ \\
\> Subject to\> $w_i+w_j \leq \delta(p_i,p_j)$, $\forall$ $p_i,p_j \in P$, $i \neq j$. 
\end{tabbing}
\end{siderules}

Similarly, we can formulate the linear programming problem of MPDP problem where the objective is to
maximize  $\sum_{i=1}^n w_i$ with the same set of constraints.
Note that Eppstein's result~\cite{eppstein} for MPDP problem holds  when distance function is  metric.
The only difference is in time complexity which takes $O(n^3)$.  Throughout our proof in
Section~\ref{1D_Approx} and \ref{ptas}, we have not used any special property of disk other than
the metric property of the Euclidean distance function. Thus,  using the distance function $\delta$
instead of Euclidean distance and assuming that $\delta(p_i,p_j)$ can be computed in constant time
(which depends only on $m$ which is constant),  we have the following results.

\begin{theorem}
 For a given set of points $P$ arbitrarily placed in the plane, one can compute a 2-approximaton
 result of the MADP problem for $2m$-regular convex polygons  in $O(n^3)$ time.
\end{theorem}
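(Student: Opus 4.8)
The plan is to show that the entire Section~\ref{1D_Approx} argument transfers to the $2m$-regular polygon setting once the Euclidean distance is replaced by $\delta$, so that running Eppstein's MPDP algorithm under the metric $\delta$ yields the claimed $2$-approximation within the stated time bound. The clean entry point is the observation that the area of a $2m$-regular convex polygon of width $w$ equals $2m\tan\frac{\pi}{2m}\cdot w^2$, a fixed positive multiple of $w^2$. Hence maximizing the total covered area is exactly maximizing $\sum_{i=1}^n w_i^2$ subject to $w_i+w_j\le\delta(p_i,p_j)$, and any approximation ratio proved for $\sum w_i^2$ is inherited verbatim by the total area.

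First I would invoke the fact that $\delta$ is a metric, established in Lemmas~\ref{p1}--\ref{p2}, to justify that Eppstein's characterization (the analogues of Lemmas~\ref{x} and~\ref{y}) applies: the MPDP optimum $\{w_i\}$ corresponds to a minimum $\delta$-weight collection of vertex-disjoint cycles spanning the complete graph on $P$, and in this optimum each polygon touches its neighbor(s) in its cycle, i.e.\ $w_\alpha+w_{\alpha+1}=\delta(p_\alpha,p_{\alpha+1})$ along every cycle edge. Since this cycle-cover computation runs in $O(n^3)$ time when the distance function is metric, and each value $\delta(p_i,p_j)$ is computable in $O(1)$ time (the cost depending only on the constant $m$), the overall running time is $O(n^3)$.

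Next I would reprove the per-edge inequality exactly as in Lemma~\ref{lx}, verifying it uses only the metric axioms and the nonnegativity of widths. For a $2$-cycle $(p_\alpha,p_\beta)$ set $r=\delta(p_\alpha,p_\beta)$; writing $w_\alpha=\frac{r}{2}-\epsilon$ and $w_\beta=\frac{r}{2}+\epsilon$ gives $w_\alpha^2+w_\beta^2\ge\frac{r^2}{2}$, while feasibility of the MADP optimum $\{\rho_i\}$ gives $\rho_\alpha+\rho_\beta\le\delta(p_\alpha,p_\beta)=r$, whence $\rho_\alpha^2+\rho_\beta^2\le(\rho_\alpha+\rho_\beta)^2\le r^2$ and therefore $w_\alpha^2+w_\beta^2\ge\frac12(\rho_\alpha^2+\rho_\beta^2)$. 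For an odd cycle one adds this relation over all its edges; since each vertex lies on exactly two cycle edges, the common factor $2$ cancels on both sides. Summing over all cycles of the cover (each point belongs to exactly one) yields $\sum_i w_i^2\ge\frac12\sum_i\rho_i^2$, the required $2$-approximation for $\sum w_i^2$, and hence for the total polygon area.

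The main point to verify carefully is not any single calculation but the assertion that nowhere in this chain is a property of the Euclidean metric used beyond symmetry, nonnegativity, and the triangle inequality. I would therefore trace through the derivation of the cycle-cover structure and of the touching property, confirming that each step rests solely on $\delta$ being a metric together with the fact that the constraints $w_i+w_j\le\delta(p_i,p_j)$ retain the same two-variable form exploited by Eppstein's algorithm. Once this is confirmed, the theorem follows immediately by combining the $2$-approximation bound with the $O(n^3)$ running time of the MPDP solver under $\delta$.
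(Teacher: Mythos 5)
Your proposal is correct and takes essentially the same approach as the paper: the paper likewise reduces polygon-area maximization to maximizing $\sum_{i=1}^n w_i^2$ under the constraints $w_i+w_j\le\delta(p_i,p_j)$, invokes the metric property of $\delta$ (Lemmata~\ref{p1}--\ref{p2}) so that Eppstein's cycle-cover algorithm runs in $O(n^3)$ time under $\delta$, and notes that the $2$-approximation argument of Lemma~\ref{lx} uses nothing about the Euclidean distance beyond its metric axioms. Your write-up is in fact more explicit than the paper's, which compresses the transfer of the Section~\ref{1D_Approx} argument into a single assertion rather than re-deriving the per-edge and per-cycle inequalities as you do.
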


\begin{theorem}
Given a set of points $P$ in $R^2$ and a fixed integer $m$, where we have to place non-overlapping $2m$-regular convex 
polygons to maximize the area covered by them, we can get a $(1+\frac{1}{k})$-approximation
algorithm in time $(nk)^{O(k^2)}$.
\end{theorem}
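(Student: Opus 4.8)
The plan is to replay the disk PTAS of Section~\ref{ptas} essentially verbatim, substituting the polygon width $w$ for the disk radius $r$, the metric $\delta$ for the Euclidean distance, and the area expression $2m\,w^2\tan\frac{\pi}{2m}$ for $\pi r^2$. Since the width parametrizes a scaled copy of a fixed-orientation $2m$-regular polygon exactly as the radius parametrizes a scaled disk, and since the area is again a constant (depending only on the fixed $m$) times the square of the scale parameter, the discretize-then-MWIS machinery transfers without structural change. First I would set, for each $p_i$, the cap $\ell_i=\delta(p_i,\mathcal{N}(p_i))$ (nearest neighbour under $\delta$) and the maximal area $\alpha_i=2m\,\ell_i^2\tan\frac{\pi}{2m}$; then, given $k$, quantize into $k+1$ concentric polygons centred at $p_i$ with areas $0,h_i,2h_i,\dots,kh_i$ where $h_i=\alpha_i/k$, weight each by its area, and run the Erlebach et al.\ \cite{Erlebach} MWIS routine on the intersection graph of $\bigcup_i\mathcal{C}_i$ to obtain a solution $\mathcal A$ of value $\Theta$.

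The analysis then factors as $\frac{OPT}{\Theta}=\frac{OPT}{\widetilde{OPT}}\cdot\frac{\widetilde{OPT}}{\Theta}$ exactly as before, the second factor being bounded by $1+\frac1k$ directly from \cite{Erlebach}. For the first factor I would reproduce the two ingredients used for disks: (i) rounding each optimal polygon $o_i$ down to the largest quantized polygon $b_i$ yields a feasible MWIS solution, so $\widetilde{OPT}\ge LB(OPT)=\sum_iB_i$ and hence $OPT-\widetilde{OPT}\le\sum_i(O_i-B_i)\le\frac1k\sum_i\alpha_i$, because consecutive area levels differ by $h_i=\alpha_i/k$; and (ii) a constant-factor lower bound $OPT\ge\frac14\sum_i\alpha_i$. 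For (ii) I would invoke the polygon analogue of the half-nearest-neighbour solution of Section~\ref{intro}: setting $w_i=\tfrac12\ell_i$ satisfies every constraint since $w_i+w_j=\tfrac12(\ell_i+\ell_j)\le\tfrac12(\delta(p_i,p_j)+\delta(p_j,p_i))=\delta(p_i,p_j)$ (using the symmetry of Lemma~\ref{p1} and $\ell_i=\min_{j\neq i}\delta(p_i,p_j)$), and it attains area $\tfrac14\sum_i\alpha_i$, while $OPT\le\sum_i\alpha_i$ since $w_i\le\ell_i$. Combining (i) and (ii) gives $\frac{OPT-\widetilde{OPT}}{OPT}\le\frac4k$, so $\frac{OPT}{\widetilde{OPT}}\le 1+\frac1{k'}$ with $k'=\frac{k-4}{4}$, and after rescaling $k$ the claimed $(1+\frac1k)$ ratio follows, mirroring the disk argument line for line.

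The main obstacle I anticipate is justifying the running time and the applicability of \cite{Erlebach}, which is stated there for fat objects such as disks and balls. I would argue that a fixed-orientation $2m$-regular convex polygon is a fat object whose fatness (the circumradius-to-inradius ratio $\sec\frac{\pi}{2m}$) depends only on the constant $m$; consequently the hierarchical-subdivision PTAS applies with the same $n^{O(k^2)}$ dependence as in the planar disk case, the hidden constant absorbing the $m$-dependent fatness. I would also verify that $\delta$ genuinely behaves like a metric radius in the quantization, which is already guaranteed by the metric property (Lemmas~\ref{p1}--\ref{p2}) and the additivity of Lemma~\ref{p3}, so the feasibility and rounding inequalities are literally the same with $\delta$ in place of Euclidean distance. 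No new geometry beyond this fatness observation is needed, which is exactly why both the approximation ratio and the $(nk)^{O(k^2)}$ time carry over intact.
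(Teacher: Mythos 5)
Your proposal is correct and follows essentially the same route as the paper, which obtains this theorem simply by observing that the PTAS of Section~\ref{ptas} carries over verbatim once the radius, Euclidean distance, and $\pi r^2$ are replaced by the width $w$, the metric $\delta$ (Lemmata~\ref{p1}--\ref{p2}), and the area $2m\,w^2\tan\frac{\pi}{2m}$. If anything, you are more careful than the paper on the one point it glosses over: the applicability of the MWIS algorithm of \cite{Erlebach} to fixed-orientation regular polygons, which you justify by noting their fatness depends only on the constant $m$.
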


Furthermore, note that the approach given in Section~\ref{section6} also works for $2m$-regular convex
polygons in fixed dimension $d$ with same approximation guarantee.

\section{Conclusion}
Following Eppstein's work \cite{eppstein} on placing non-overlapping disks for a set of given points on the plane to maximize 
perimeter, we study the area maximization problem under the same setup. If the points are placed on a straight line, then the 
 area maximization problem is solvable in polynomial time. Though the perimeter maximization problem in ${\IR}^2$ is polynomially
 solvable, the area maximization problem  is shown to be NP-hard. We also observe that the solution 
of the perimeter maximization problem gives a $2^{d-1}$-factor approximation result of the area maximization problem in ${\IR}^d$. 
A PTAS for the MADP problem in ${\IR}^d$ is also proposed. Finally, we show that
these results  for MADP problem can be generalized   for different types of objects: squares, and regular convex polygons with even number of edges.

 \bibliographystyle{plain}

\begin{thebibliography}{1}

\bibitem{BentzCR13}
C{\'{e}}dric Bentz, Denis Cornaz, and Bernard Ries.
\newblock Packing and covering with linear programming: {A} survey.
\newblock {\em European Journal of Operational Research}, 227(3):409--422,
  2013.


\bibitem{plane-rec}
Mark de Berg and Amirali Khosravi, \newblock Optimal binary space partitions in the plane. \newblock Lecture Notes in Computer Science, vol. 6196,   
        pages 216--225, 2010.


\bibitem{chang}
Hai-Chau Chang and Lih-Chung Wang.
\newblock A simple proof of Thue's Theorem on circle packing.
\newblock {\em arXiv preprint arXiv:1009.4322}, 2010.

\bibitem{eppstein}
David Eppstein.
\newblock Maximizing the sum of radii of disjoint balls or disks.
\newblock In {\em Proceedings of the 28th Canadian Conference on Computational
  Geometry, {CCCG} 2016, August 3-5, 2016, Simon Fraser University, Vancouver,
  British Columbia, Canada}, pages 260--265, 2016.

  \bibitem{Erlebach}
Thomas Erlebach, Klaus Jansen and Eike Seidel. \newblock Polynomial-time 
approximation schemes for geometric intersection graphs.
\newblock {\em SIAM J. Comput.}, 34(6):1302--1323,
2005.

\bibitem{FuLY98}
Minyue Fu, Zhi{-}Quan Luo, and Yinyu Ye.
\newblock Approximation algorithms for quadratic programming.
\newblock {\em J. Comb. Optim.}, 2(1):29--50, 1998.

\bibitem{kleinberg2006algorithm}
Jon Kleinberg and Eva Tardos.
\newblock {\em {Algorithm design}}.
\newblock Pearson Education India, 2006.


\bibitem{Megiddo83}
Nimrod Megiddo, \newblock Towards a Genuinely Polynomial Algorithm for Linear Programming.
\newblock SIAM J. Comput. 12(2), pages 347--353, 1983.


\bibitem{Papadimitriou}
Christos~H. Papadimitriou and Kenneth Steiglitz.
\newblock {\em Combinatorial Optimization: Algorithms and Complexity}.
\newblock Prentice-Hall, Inc., Upper Saddle River, NJ, USA, 1982.

\bibitem{PardalosV91}
Panos~M. Pardalos and Stephen~A. Vavasis.
\newblock Quadratic programming with one negative eigenvalue is np-hard.
\newblock {\em J. Global Optimization}, 1(1):15--22, 1991.

\bibitem{Khachiyan}
S.~P. Tarasov, L. G.~Khachiyan, M.~K.~Kozlov.
\newblock The polynomial solvability of convex quadratic programming.
\newblock {\em USSR Computational Mathematics and Mathematical Physics}, 20(5):223 - 228, 1980.



\bibitem{Toth04}
G{\'{a}}bor~Fejes T{\'{o}}th.
\newblock Packing and covering.
\newblock In {\em Handbook of Discrete and Computational Geometry, Second
  Edition.}, pages 25--52. 2004.
  

\bibitem{wiki}
Regular Polygons, \newblock Wikipedia, The Free Encyclopedia. \newblock URL: https://en.wikipedia.org/wiki/Regular\_polygon. 



\end{thebibliography}

\end{document}